\documentclass[letterpaper]{article}
\usepackage{preamble}
\usepackage{algorithm}
\usepackage[noend]{algorithmic}
\usepackage{thm-restate}
\usetikzlibrary{decorations.pathreplacing}
\usepackage{newfloat}
\usepackage{listings}
\usepackage{authblk}
\setcounter{secnumdepth}{2}

\title{Fair Division Among Couples and Small Groups}
\author{Paul Gölz}
\author{Hannane Yaghoubizade}
\affil{Cornell University, School of Operations Research and Information Engineering}
\date{}

\newcommand{\efone}{EF1}
\newcommand{\ef}[1]{EF\ensuremath{#1}}
\newcommand{\prop}[1]{PROP\ensuremath{#1}}
\newcommand{\propone}{PROP1}
\newcommand{\proptwo}{PROP2}

\newcommand{\fpo}{fPO}

\begin{document}
\maketitle

\begin{abstract}  
We study the fair allocation of indivisible goods across groups of agents, where each agent fully enjoys all goods allocated to their group.
We focus on groups of two (\emph{couples}) and other groups of small size.
For two couples, an \efone{} allocation\,---\,one in which all agents find their group's bundle no worse than the other group's, up to one good\,---\,always exists and can be found efficiently.
For three or more couples, \efone{} allocations need not exist.

Turning to proportionality, we show that, whenever groups have size at most $k$, a \prop{k} allocation exists and can be found efficiently.
In fact, our algorithm additionally guarantees (fractional) Pareto optimality, and \propone{} to the first agent in each group, \proptwo{} to the second, etc., for an arbitrary agent ordering.
In special cases, we show that there are \propone{} allocations for any number of couples.
\end{abstract}

\section{Introduction}
Four siblings\,---\,Anna, Ben, Carmen, and Dave\,---\,jointly own a cottage on the coast and are currently deciding which sibling’s family will get to stay in the cottage in which weeks of the year.
Each sibling $i$ has a utility $u_i(\alpha) \geq 0$ for each week $\alpha$; for example, Anna prefers Spring weeks over Summer due to milder temperatures, and Ben would particularly value being at the cottage for July 4.
Assume (as we will throughout the paper) that a sibling's utility for a set $B$ of weeks is \emph{additive} (i.e., that $u_i(B) = \sum_{\alpha \in B} u_i(\alpha)$) and that the siblings are treating the weeks as \emph{indivisible} goods, i.e., they don’t want to allocate fractions of a week or assign a week to several families.

\looseness=-1
To solve the siblings' predicament as stated so far, the field of fair division offers allocation algorithms~\cite[e.g.,][]{LMM+04,CKM+19} with compelling axiomatic guarantees.
In particular, algorithms like \emph{envy-cycle elimination} and \emph{Maximum Nash welfare} ensure \emph{envy freeness up to one good (\efone)}.
\efone{} means that each sibling $i$ finds their assigned weeks $B_i$ to be at least as valuable as the weeks $B_j$ assigned to any other sibling, at least when removing some week $\alpha$ from $B_j$: $u_i(B_i) \geq \min_{\text{good}~\alpha} u_i(B_j \setminus \{\alpha\})$.\footnote{The strengthening of this axiom without the removal of a good, \emph{envy freeness}, is not always satisfiable for indivisible goods. E.g., if all siblings only have positive utility for a single week, the siblings who do not receive this week will always be envious.}
A second key axiom, \emph{proportionality up to one good (\propone{})}, states that each sibling $i$ receives at least their proportional share of their utility for all weeks, at least when adding some week $\alpha$: $\max_{\text{week}~\alpha} u_i(B_i \cup \{\alpha\}) \geq \frac{u_i(M)}{n}$, where $n=4$ is the number of siblings and $M$ the set of all weeks.

Our scenario deviates from the classic fair division setting in that the siblings' spouses also have utilities, which need not align with their partners'.
An allocation of weeks that Anna finds fair may still make her husband Alex envy another sibling or perceive their family's assignment as falling short of proportionality.
Is it possible to allocate the weeks over the families so that axioms such as \efone{} and \propone{} hold from the perspectives of all four siblings and their respective spouses?
Equivalent fair division problems may arise in splitting an inheritance between families or dissolving business partnerships, whenever the entities receiving allocations consist of two persons whose preferences should be satisfied.

The question we raise above fits into the model of group fair division~\cite{MS17a,KSV20}, but little is known for small groups such as couples.
\textcite{KSV20} show that \efone{} allocations need not exist for two groups with three members each and show that an (as of yet, unproven) graph conjecture by \textcite{JA17} would imply \efone{} existence for two couples.
With the basic question of \efone{} existence for two couples unresolved, Kyropoulou et al.\ instead focus on how much axioms like \efone{} must be relaxed to be satisfiable for two large groups.
Meanwhile, the question of whether \efone{} might exist for arbitrarily many couples has remained open.

In this paper, we aim to answer the following question:
\begin{quote}
\emph{For fair allocation over couples, and groups with few members more broadly, can we always guarantee the existence of \efone{} or \propone{} allocations? If no, can we guarantee slight approximations?}
\end{quote}

A second motivation is that this question touches on the fundamental combinatorics of fair allocations to \emph{individuals}.
Indeed, fix a number $n$ of agents and $m$ of indivisible goods, and consider the set $\mathcal{A}$ of all $n^m$ allocations of goods across agents $1$ through $n$.
Each vector of utility functions $u = (u_1, \dots, u_n)$ determines a subset $F_u \subseteq \mathcal{A}$ of allocations that are EF1 (or PROP1 etc.) for these utilities; call the family of all such sets $\mathcal{F}$.
An allocation to $n$ couples is EF1 iff it lies in $F_u$ for the utilities $u$ of the first partners in each couple and in $F_{u'}$ for the second partners' utilities $u'$.
That is, EF1 allocations always exist for couples iff any two sets $F_u \in \mathcal{F}$ intersect, i.e., if $\mathcal{F}$ is an \emph{intersecting family}~\cite{Jukna11}.

\subsection{Our Techniques and Results}
We begin by proving in \cref{sec:ef1:twocouples} that EF1 allocations always exist for two couples, proving a case left open by \textcite{KSV20} without relying on the graph conjecture mentioned above.
Our proof takes a different path, by rounding a fractional allocation computed via an unconventional linear program, which also yields a polynomial-time algorithm.
Though there are only a few ways of rounding this LP, our proof that one of these roundings must satisfy EF1 requires a non-trivial combinatorial argument.
In \cref{sec:ef1:morecouples}, we prove that EF1 cannot be guaranteed for three (or more) couples.

Since EF1 is not achievable for more than two couples, we turn our focus on PROP1 and its variants in \cref{sec:prop}.
In the setting of arbitrarily many couples, we can guarantee the slightly weaker guarantee of PROP2, in addition to fractional Pareto optimality, an axiom of allocation efficiency.
This follows from our main result\,---\,an efficient, iterative-rounding algorithm that works for groups of arbitrary sizes, and guarantees PROP1 to the first member of each group, PROP2 to the second, etc., according to an arbitrary ordering of group members.
In various special cases, we can show that PROP1 allocations exist for arbitrarily many couples, for example if utilities are dichotomous and all agents value the same number of goods. PROP1 allocations need not exist for groups of three agents.

In \cref{sec:empirics}, we study fair division among couples empirically, by taking utilities from real-world allocation problems submitted to Spliddit~\cite{GP14a} and pairing up the agents.
EF1 and PROP1 allocations exist for every single fair division problem we study, suggesting that they are ubiquitous in practice.
We also study the iterative rounding algorithm and find that it almost always provides PROP1 for all agents, and even EF1 in most cases.

\subsection{Related Work}
Fair division among groups was independently introduced by \textcite{MS17a}, and \textcite{SN2019}, the former studying indivisible goods and the latter divisible goods.
This line of work has since been expanded, with several works exploring fair allocation in both divisible and indivisible settings~\cite{SUKSOMPONG2018, ghodsi2018, KSV20, SS2020, MS2022, SS2023, allocator2023, CKS2025, MM2025}. In contrast to our work, most research on group fair division with indivisible goods focuses on asymptotic analysis for large groups.

The paper by \textcite{allocator} is closely related to our work.
While their motivation for studying two sets of utilities does not make reference to groups, their setting is equivalent to fair division among couples.
Concurrently to us,\footnote{An earlier version~\cite{allocator2023} relied on the conjecture by \textcite{JA17} to prove \ef{1} for two couples. The very recently revised preprint~\cite{allocator} removes this assumption and includes an algorithm for additive utilities.} \textcite{allocator} also show that \ef{1} allocations exist for two couples.
Their argument makes heavy use of combinatorial results to show \ef{1} existence even for general monotone valuations.

They study proportionality as well, proving the existence of
\prop{\text{-}O(\log(n))} allocations for additive functions, where $n$ is the number of couples. Our iterative rounding algorithm improves the gap in proportionality to a constant (\prop{2}) and naturally extends to arbitrary group sizes.

\section{Preliminaries}
A group fair division instance consists of a set of goods $M = [m] =\{1, \dots, m\}$, a set $\mathcal{G}$ of $n \geq 2$ groups of agents, and the agents' valuations.
For a group $g \in \mathcal{G}$, we use $|g|$ to denote the number of agents in that group. We refer to the $i$-th agent in $g$ as $(g, i)$ for $1 \leq i \leq |g|$.
Each agent $(g, i)$ has a value $u_{gi}(\alpha)$ for each good $\alpha \in M$, which induce an additive valuation function $u_{gi}(B) \coloneqq \sum_{\alpha \in B} u_{gi}(\alpha)$ over sets of goods $B$. 
We say that the agent's valuation is \emph{binary} if $u_{gi}(\alpha)\in \{0,1\}$ for all $\alpha \in M$.

A \emph{fractional allocation} is a vector \( x \in [0,1]^{M \times \mathcal{G}} \), where \( x_{\alpha g} \) denotes the fraction of good \( \alpha \) assigned to group \( g \), such that \( \sum_{g \in \mathcal{G}} x_{\alpha g} = 1 \) for each \( \alpha \in M \).
If $x \in \{0,1\}^{M \times \mathcal{G}}$, we call it a \emph{(discrete) allocation}.
Equivalently, we represent an allocation as a partition of the goods into \emph{bundles} $\{B_g\}_{g \in \mathcal{G}}$, where $B_g$ is the bundle group $g$ receives.
An allocation is \emph{balanced} if $\big||B_g| - |B_{g'}|\big|\leq 1$ for all $g, g'\in \mathcal{G}$.
Since all agents in $g$ fully enjoy the goods in their bundle $B_g$, agent $(g,i)$'s utility for such an allocation is $u_{gi}(B_g)$.
We linearly extend utilities to fractional allocations so that $(g, i)$'s utility for allocation $x$ is $\sum_{\alpha \in M} x_{\alpha, g} u_i(\alpha)$.

    For any $k \geq 0$, an allocation $\{B_g\}_{g \in \mathcal{G}}$ is
    \begin{itemize}
        \item \emph{envy-free up to $k$ goods (\ef{k})} for an agent $(g, i)$ if, for any $g' \in \mathcal{G}$, there is a set $B \subseteq B_{g'}$ such that $|B| \leq k$ and $u_{gi}(B_g) \geq u_{gi}(B_{g'}\setminus B)$.
        \item \emph{proportional up to $k$ goods (\prop{k})} for an agent $(g, i)$ if there exists a set $B \subseteq M \setminus B_{g}$ such that $|B| \leq k$ and $u_{gi}(B_g \cup B) \geq u_{gi}(M) / n$.
    \end{itemize}
An allocation is \emph{envy-free (\ef{})} for an agent if it is \ef{}0 and \emph{proportional (\prop{})} if its \prop{}0.
We say an allocation is \ef{k} if it is \ef{k} for every agent, and analogously for \ef{}, \prop{k}, and \prop{}. \ef{} and \prop{} also naturally extend to fractional allocations.

A fractional allocation $x$ \emph{Pareto dominates} another fractional allocation $x'$ if, for all agents $(g, i)$, $u_{gi}(x) \geq u_{gi}(x')$ and if this inequality is strict for at least one agent.
A fractional allocation $x$ is \emph{fractionally Pareto optimal (fPO)} if it is not Pareto dominated by any other fractional allocation.
Note that, for a discrete allocation, being fPO implies the more classic axiom of Pareto optimality (i.e., not being Pareto dominated by any discrete allocation).

\paragraph{Linear Programming.}
We recall some notions from linear programming.
A set $P = \{x \in \mathbb{R}^n: Ax \leq b\}$ for some $A\in \mathbb{R}^{m \times n}$, and $b \in \mathbb{R}^m$ is a \emph{polyhedron}, and a bounded polyhedron is called a \emph{polytope}.
A point $z \in P$ is called a \emph{Basic Feasible Solution (BFS)} if there are $n$ linearly independent rows of $A$ such that $Az \leq b$ holds with equality in these $n$ rows.
\emph{Linear Programming (LP)} consists of maximizing (or minimizing) a linear function over a polyhedron, i.e., $\max\{c^Tx: Ax \leq b, x \in \mathbb{R}^n\}$. 
We repeatedly use:
\begin{proposition}[{{\cite[Thm.\ 2.8]{BT97}}}]
\label{prop:optbfs}
    If $P$ is a non-empty polytope, there exists a BFS that achieves $\max\{c^Tx: x\in P\}$ for a given $c\in \mathbb{R}^n$.
\end{proposition}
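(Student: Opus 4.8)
The plan is to first secure that the maximum is attained and then to \emph{push} an optimal point onto a basic feasible solution by a dimension-reduction argument. Since $P = \{x : Ax \leq b\}$ is an intersection of closed half-spaces it is closed, and as a polytope it is bounded, hence compact; the objective $x \mapsto c^{T} x$ is continuous, so by the extreme value theorem the value $v \coloneqq \max\{c^{T} x : x \in P\}$ is attained and the optimal set $Q \coloneqq \{x \in P : c^{T} x = v\}$ is a non-empty polytope (it is $P$ intersected with the hyperplane $c^{T} x = v$).

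Next I would choose, among all points of $Q$, one that maximizes the number of linearly independent tight rows of $A$; call it $x^{*}$ and let $I = \{i : a_{i}^{T} x^{*} = b_{i}\}$ collect its tight constraints. The claim is that $x^{*}$ is a BFS, i.e.\ that $\{a_{i}\}_{i \in I}$ has rank $n$, and I would prove this by contradiction. If the rank were below $n$, there would exist $d \neq 0$ with $a_{i}^{T} d = 0$ for every $i \in I$. For small $|t|$ the point $x^{*} + t d$ keeps each tight constraint tight and each slack constraint satisfied, so it stays feasible, and $c^{T}(x^{*} + t d) = v + t\, c^{T} d$. If $c^{T} d \neq 0$ some sign of $t$ would exceed $v$, contradicting optimality; hence $c^{T} d = 0$ and the whole segment remains inside $Q$. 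Because $P$ is bounded, travelling along $+d$ or $-d$ cannot continue forever, so at some first $t^{*} \neq 0$ a previously slack constraint $j$ turns tight. As $a_{j}^{T} d \neq 0$ while $a_{i}^{T} d = 0$ for all $i \in I$, the row $a_{j}$ is linearly independent of $\{a_{i}\}_{i \in I}$, so $x^{*} + t^{*} d \in Q$ has strictly more independent tight rows than $x^{*}$ — contradicting the choice of $x^{*}$. Thus the rank is $n$ and $x^{*}$ is a BFS attaining the maximum.

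The main obstacle, and the only place the polytope (rather than merely polyhedron) hypothesis is used, is guaranteeing that the walk along $d$ actually meets a new constraint in finite time. Without compactness, $d$ or $-d$ could be a recession direction of $P$ along which one travels indefinitely while $c^{T} x$ stays equal to $v$, so no additional constraint would ever become tight and the induction would stall; boundedness forbids this precisely because the recession cone of a polytope is trivial. Everything else — feasibility of small perturbations and the independence of $a_{j}$ — is routine once $d$ is fixed, so I would spend the care on stating the boundedness step cleanly.
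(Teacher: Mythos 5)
Your proof is correct. One point of context: the paper does not prove this proposition at all---it is imported verbatim from Bertsimas and Tsitsiklis \cite[Thm.~2.8]{BT97}---so there is no in-paper argument to compare against, only the textbook's. The textbook route is to show that the optimal set $Q = P \cap \{c^Tx = v\}$ contains no line and hence has an extreme point, then to argue that an extreme point of $Q$ is an extreme point of $P$, and finally to invoke the equivalence between extreme points and basic feasible solutions. Your argument is a streamlined alternative that works directly with the paper's definition of BFS ($n$ linearly independent tight rows of $A$): you pick a maximizer in $Q$ with the largest rank of tight rows of $A$, and do the rank-increasing walk inside $Q$, using optimality to force $c^Td = 0$ so the walk never leaves the optimal face. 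This neatly sidesteps a genuine subtlety in the extreme-point route, namely that a vertex of $Q$ is a priori only a BFS of the system describing $Q$ (which includes the extra constraint $c^Tx = v$, not a row of $A$), so one must separately argue it is an extreme point of $P$ before translating back to BFS language. Your handling of the boundedness issue is also exactly right: compactness of the polytope is what guarantees the walk along $d$ hits a new constraint at a finite $t^*$, and that newly tight row $a_j$ satisfies $a_j^Td \neq 0$, hence lies outside the span of the previously tight rows, which yields the contradiction. The proof is complete as stated and would serve as a valid self-contained replacement for the citation.
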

Such an optimal BFS can be found in (weakly) polynomial time using the ellipsoid method~\cite{khachiyan1980polynomial}.

\section{\efone{} Among Couples}
\label{sec:ef1}
In this section, we study \efone{} allocations for \emph{couples}, i.e., groups $g$ that all have size $2$.
For $n=2$ couples, we show that the ``gold standard''~\cite{FSV+19} of \efone{} can be achieved even when we must satisfy twice as many agents per bundle, compared to the classic, individual setting.
However, this positive result does not extend further, as we prove that \efone{} allocations may not exist for $n \geq 3$ couples.

\subsection{Existence of \efone{} with Two Couples}
\label{sec:ef1:twocouples}
In this part, we consider the case with two groups, which we call the \emph{first} group $f$ and the \emph{second} group $s$. 
\textcite{KSV20} proved that a balanced \efone{} allocation always exists when $(|f|, |s|) = (2, 1)$, and left the existence of such an allocation for $(|f|, |s|) = (2, 2)$ as an open question, which we answer in the affirmative.

Our high-level approach is to round an appropriate fractional allocation into an \efone{} (discrete) allocation.
Starting from a fractional allocation is promising because envy freeness is always achievable in this domain (say, by splitting each good equally between groups).
Broadly speaking, fractional allocations $x$ are easiest to round if they are already ``almost discrete'' (i.e., most entries are $0$ or $1$).
In such cases, we only need to round the few remaining fractional entries to 0 and 1, which yields a limited number of discrete allocations to reason over, all of which are still close to $x$ and might therefore be ``almost'' envy-free.

The most direct attempt at pursuing this approach would be to round a BFS from the polytope of envy-free allocations, which is defined as follows, where $x_{\alpha}$ is the share of good $\alpha$ given to the first group:
{
\begin{align*}
    &\textstyle\sum_{\alpha \in M} x_{\alpha} \, u_{fi}(\alpha) \geq \sum_{\alpha \in M} (1-x_{\alpha}) \, u_{fi}(\alpha) &i=1,2& \\
     &\textstyle\sum_{\alpha \in M} (1-x_{\alpha}) \, u_{si}(\alpha) \geq \sum_{\alpha \in M} x_{\alpha} \, u_{si}(\alpha) &i=1,2& \\
    &0 \leq x_{\alpha} \leq 1 &\forall \alpha \in M&.
\end{align*}
}

Using a BFS of this polytope allows us to obtain an almost discrete allocation.
Indeed, since there are $m$ variables, $m$ constraints must be tight in a BFS, and hence at least $m-4$ constraints of the shape $0 \leq x_\alpha$ or $x_\alpha \leq 1$ are tight.
This implies that at most four goods $\alpha$ are allocated \emph{fractionally} (i.e., $0 < x_\alpha < 1$), leaving $2^4 = 16$ ways of rounding.

Unfortunately, a BFS of this polytope may not have a way to be rounded into an \efone{} allocation. For example, consider the following valuations over goods $\{1, 2, 3, 4\}$:

\begin{center}
        \begin{tabular}{ccccc}
            \toprule
             valuation & $1$ & $2$ & $3$ & $4$ \\
             \midrule
             $u_{f 1}$ & $1$ & $0$ & $0.1$ & $0.1$\\
             $u_{f 2}$ & $0$ & $1$ & $0.1$ & $0.1$ \\
             $u_{s 1}$ & $0.5$ & $0.5$ & $0.1$ & $0.1$ \\
             $u_{s 2}$ & $0.2$ & $0$ & $0.5$ & $0.5$ \\
             \bottomrule
        \end{tabular}
\end{center}

In this case, one BFS\footnote{Specifically, the EF fractional allocation with maximum utilitarian welfare.} allocates a $3/5$ fraction of goods $1$ and $2$ to group $f$, and the rest of $1$ and $2$ plus the entirety of $3$ and $4$ to group $s$.
But no way of rounding the fractional goods leaves all agents \efone{}: agent $(f, 1)$ requires good $1$ to be given to $f$, $(f, 2)$ requires the same for good $2$, but giving both to $f$ leaves $(s, 1)$ envious.

To obtain a working rounding argument, we devise an alternative polytope whose BFS has even fewer fractional variables, and whose roundings are all \efone{} for agent $(f, 1)$, leaving us with one fewer agent to worry about. 
For this, assume w.l.o.g.\ that the number of goods $m$ is even (otherwise, we  add a dummy good with value 0 for every agent) and that the goods are ordered according to $(f, 1)$'s valuation, i.e., $u_{f1}(1) \geq \dots \geq u_{f1}(m)$.

We restrict ourselves to allocations in which each group receives exactly one out of the goods $\{1, 2\}$, one out of $\{3, 4\}$, \dots, and one out of $\{m-1, m\}$.
This structure is inspired by \textcite{KSV20}, who observe that any allocation with this structure is \efone{} for $(f, 1)$, 
allowing us to focus on satisfying \efone{} for the remaining three agents.
The structure also ensures balancedness.

This structure naturally generalizes to fractional allocations by ensuring that each group receives a total of one unit from each pair \( \{2j-1, 2j\} \) (for $1 \leq j \leq m/2$).
Specifically, if the first group receives a fraction \( y_j \) of good \( 2j-1 \), it receives \( 1 - y_j \) of good \( 2j \).

The polytope of fractional allocations of this structure, which are furthermore \ef{} for the three remaining agents, can be written as follows:
{
\footnotesize
\begin{align*}
        &\sum_{\mathclap{j\in [m/2]}} (2y_j\!-\!1) \, u_{f2}(2j\!-\!1) + (1\!-\!2y_j) \, u_{f2}(2j)\geq 0\\
        &\sum_{\mathclap{j\in [m/2]}} (1\!-\!2y_j) \, u_{si}(2j\!-\!1) + (2y_j\!-\!1) \, u_{si}(2j) \geq 0 & i=1, 2\\
        &0 \leq y_j \leq 1 & \mathllap{j = 1, \dots, m/2.}
\end{align*}
}

A BFS of this polytope has at most three fractional values.
We avoid one more fractional value and simplify the argument by not only requiring \ef{}, but maximizing the minimum gap $d$ by which any agent prefers their bundle over the other.
We believe that this trick for eliminating one more fractional variable can be useful for other settings as well.
{
\footnotesize
\begin{align*}
    \text{max}~&d \\
        \text{s.t.}~&\sum_{\mathclap{j\in [m/2]}} (2y_j\!-\!1) \, u_{f2}(2j\!-\!1) + (1\!-\!2y_j) \, u_{f2}(2j)\geq \mathrlap{d}\\
        &\sum_{\mathclap{j\in [m/2]}} (1\!-\!2y_j) \, u_{si}(2j\!-\!1) + (2y_j\!-\!1) \, u_{si}(2j) \geq d & i=1, 2\\
        &0 \leq y_j \leq 1 & \mathllap{j = 1, \dots, m/2.}
\end{align*}
}

Because this formulation has one more variable, one more constraint is binding at a BFS, which ensures that there are at most two fractional $y_j$.
Let $(y^*, d^*)$ be an optimal BFS for the LP, which can be found efficiently (\cref{prop:optbfs}). 
Since setting all $y_j = 1/2$ and $d = 0$ is feasible, we know that $d^*$ is nonnegative and that $y^*$ describes a fractional allocation that is \ef{} for the three agents.

\definecolor{mygreen}{RGB}{77, 233, 76}
\definecolor{myred}{RGB}{246, 0, 0}
\definecolor{myblue}{RGB}{55, 131, 255}
\definecolor{myorange}{RGB}{255, 140, 0}
\definecolor{lightgray}{RGB}{220, 220, 220}
\definecolor{darkgray}{RGB}{123,112,96}
\definecolor{mybowcolor}{RGB}{212, 210, 173}

\tikzset{
    pics/gift/.style={
        code = {
            \tikzset{gift/.cd, #1}
            \def\pv##1{\pgfkeysvalueof{/tikz/gift/##1}}
            \begin{scope}[scale=0.25]
                \pgfmathsetmacro{\splity}{-2.5 + 3 * \pv{split}}
                \fill[\pv{boxcolor}, opacity=\pv{bottomopacity}] (-1.8, -2.5) rectangle (1.8, \splity);
                \fill[\pv{boxcolor}, opacity=\pv{topopacity}] (-1.8, \splity) rectangle (1.8, 0.5);
                \fill[\pv{bowcolor}!90!brown] (0, 0.5) ellipse [x radius=0.4, y radius=0.2];
                \fill[\pv{bowcolor}!80!yellow] (0, 0.4) .. controls (-2, 2.4) and (-1.8, -0.4) .. (-0.5, 0.4) .. controls (-1.1, 0.8) and (-0.6, 1.4) .. (0, 0.4);
                \draw[mybowcolor] (-1.8, -2.5) rectangle (1.8, 0.5);
                \draw[very thick, \pv{bowcolor}!90!brown] (0, 0.4) .. controls (-2, 2.4) and (-1.8, -0.4) .. (-0.5, 0.4) .. controls (-1.1, 0.8) and (-0.6, 1.4) .. (0, 0.4);
                \fill[\pv{bowcolor}!80!yellow] (0, 0.4) .. controls (2, 2.4) and (1.8, -0.4) .. (0.5, 0.4) .. controls (1.1, 0.8) and (0.6, 1.4) .. (0, 0.4);
                \draw[very thick, \pv{bowcolor}!90!brown] (0, 0.4) .. controls (2, 2.4) and (1.8, -0.4) .. (0.5, 0.4) .. controls (1.1, 0.8) and (0.6, 1.4) .. (0, 0.4);
                \fill[\pv{bowcolor}!80!yellow] (0, 0.5) .. controls (-0.5, 0.2) and (-0.7, -0.5) .. (-0.5, -1) .. controls (-0.4, -0.6) and (-0.3, -0.1) .. (0, 0.5);
                \draw[very thick, \pv{bowcolor}!90!brown] (0, 0.5) .. controls (-0.5, 0.2) and (-0.7, -0.5) .. (-0.5, -1) .. controls (-0.4, -0.6) and (-0.3, -0.1) .. (0, 0.5);
                \fill[\pv{bowcolor}!80!yellow] (0, 0.5) .. controls (0.5, 0.2) and (0.7, -0.5) .. (0.5, -1) .. controls (0.4, -0.6) and (0.3, -0.1) .. (0, 0.5);
                \draw[very thick, \pv{bowcolor}!90!brown] (0, 0.5) .. controls (0.5, 0.2) and (0.7, -0.5) .. (0.5, -1) .. controls (0.4, -0.6) and (0.3, -0.1) .. (0, 0.5);
            \end{scope}
        }
    },
    gift/.cd,
    boxcolor/.initial = red!30,
    bowcolor/.initial = orange,
    split/.initial = 1,
    bottomopacity/.initial = 1,
    topopacity/.initial = 1,
}

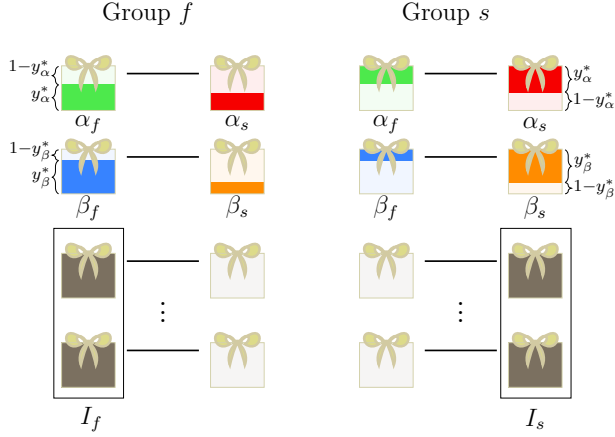
\begin{figure}[tb]
    \centering
    \resizebox{0.5\columnwidth}{!}{%
\begin{tikzpicture}[
    node distance=0.4cm and 0.8cm,
    every node/.style={font=\large}
]

\begin{scope}[shift={(-7,0)}]
    \node at (0, 4.2) {\large Group $f$};
    \node (alpha_f_top) at (-1, 2.35) {$\alpha_f$\vphantom{$\alpha_{fs}$}};
    \node (box_green) at (-1, 3.2) {\tikz \pic{gift={boxcolor=mygreen, bowcolor=mybowcolor, split=0.6, topopacity=0.07}};};
    \draw[decorate, decoration={brace, amplitude=3pt}]  (-1.5, 3.01) -- (-1.5, 3.32);
    \node[left] at (-1.5,3.3) {\footnotesize{$1{-}y^*_{\alpha}$}};
    \node (box_red) at (1.5, 3.2) {\tikz \pic{gift={boxcolor=myred, bowcolor=mybowcolor, split=0.4, topopacity=0.07}};};
    \node (alpha_s_top) at (1.5, 2.35) {$\alpha_s$\vphantom{$\alpha_{fs}$}};
    \draw[decorate, decoration={brace, amplitude=3pt}]  (-1.5, 2.58) -- (-1.5, 3.01);
    \node[left] at (-1.5,2.85) {\footnotesize{$y^*_{\alpha}$}};
    \draw[thick] (box_red) -- (box_green);
    \node (beta_f_bottom) at (-1, 0.95) {$\beta_f$\vphantom{$\beta_{fs}$}};
    \node (box_blue) at (-1, 1.8) {\tikz \pic{gift={boxcolor=myblue, bowcolor=mybowcolor, split=0.75, topopacity=0.07}};};
    \draw[decorate, decoration={brace, amplitude=3pt}]  (-1.5, 1.74) -- (-1.5, 1.92);
    \node[left] at (-1.5,1.9) {\footnotesize{$1{-}y^*_{\beta}$}};
    \node (box_orange) at (1.5, 1.8) {\tikz \pic{gift={boxcolor=myorange, bowcolor=mybowcolor, split=0.25, topopacity=0.07}};};
    \node (beta_s_bottom) at (1.5, 0.95) {$\beta_s$\vphantom{$\beta_{fs}$}};
    \draw[decorate, decoration={brace, amplitude=3pt}]  (-1.5, 1.18) -- (-1.5, 1.74);
    \node[left] at (-1.5,1.45) {\footnotesize{$y^*_{\beta}$}};
    \draw[thick] (box_blue) -- (box_orange);
    \begin{scope}[yshift=0.8cm]
        \draw (-1.58, -3.05) rectangle (-0.41, -0.2);
        \node at (-0.95, -3.4) {$I_f$};
        \node (l1) at (-1, -0.75) {\tikz \pic{gift={boxcolor=darkgray, bowcolor=mybowcolor}};};
        \node (r1) at (1.5, -0.75) {\tikz \pic{gift={boxcolor=darkgray, bowcolor=mybowcolor, bottomopacity=0.07, topopacity=0.07}};};
        \draw[thick] (l1) -- (r1);
        \node (l2) at (-1, -2.25) {\tikz \pic{gift={boxcolor=darkgray, bowcolor=mybowcolor}};};
        \node (r2) at (1.5, -2.25) {\tikz \pic{gift={boxcolor=darkgray, bowcolor=mybowcolor, bottomopacity=0.07, topopacity=0.07}};};
        \draw[thick] (l2) -- (r2);
        \node at (0.25, -1.5) {\LARGE{$\vdots$}};
    \end{scope}
\end{scope}

\begin{scope}[shift={(-2,0)}]
    \node at (0, 4.2) {\large Group $s$};
    \node (alpha_f_top_s) at (-1, 2.35) {$\alpha_f$};
    \node (box_green_s) at (-1, 3.2) {\tikz \pic{gift={boxcolor=mygreen, bowcolor=mybowcolor, split=0.6, bottomopacity=0.07}};};
    \node (box_red_s) at (1.5, 3.2) {\tikz \pic{gift={boxcolor=myred, bowcolor=mybowcolor, split=0.4, bottomopacity=0.07}};};
    \node (alpha_s_top_s) at (1.5, 2.35) {$\alpha_s$};
    \draw[decorate, decoration={brace, amplitude=3pt}]  (2, 3.32) -- (2, 2.89);
    \node[left] at (2.6,3.2) {\footnotesize{$y^*_{\alpha}$}};
    \draw[thick] (box_red_s) -- (box_green_s);
    \draw[decorate, decoration={brace, amplitude=3pt}]  (2, 2.89) -- (2, 2.58);
    \node[left] at (2.97,2.75) {\footnotesize{$1{-}y^*_{\alpha}$}};
    \node (beta_f_bottom_s) at (-1, 0.95) {$\beta_f$};
    \node (box_blue_s) at (-1, 1.8) {\tikz \pic{gift={boxcolor=myblue, bowcolor=mybowcolor, split=0.75, bottomopacity=0.07}};};
    \draw[decorate, decoration={brace, amplitude=3pt}]  (2, 1.92) -- (2, 1.37);
    \node[left] at (2.6,1.7) {\footnotesize{$y^*_{\beta}$}};
    \node (box_orange_s) at (1.5, 1.8) {\tikz \pic{gift={boxcolor=myorange, bowcolor=mybowcolor, split=0.25, bottomopacity=0.07}};};
    \node (beta_s_bottom_s) at (1.5, 0.95) {$\beta_s$};
    \draw[thick] (box_blue_s) -- (box_orange_s);
    \draw[decorate, decoration={brace, amplitude=3pt}]  (2, 1.37) -- (2, 1.18);
    \node[left] at (2.97,1.25) {\footnotesize{$1{-}y^*_{\beta}$}};
    \begin{scope}[yshift=0.8cm]
        \draw (0.92, -3.05) rectangle (2.1, -0.2);
        \node at (1.5, -3.4) {$I_s$};
        \node (l1_s) at (-1, -0.75) {\tikz \pic{gift={boxcolor=darkgray, bowcolor=mybowcolor, bottomopacity=0.07, topopacity=0.07}};};
        \node (r1_s) at (1.5, -0.75) {\tikz \pic{gift={boxcolor=darkgray, bowcolor=mybowcolor}};};
        \draw[thick] (l1_s) -- (r1_s);
        \node (l2_s) at (-1, -2.25) {\tikz \pic{gift={boxcolor=darkgray, bowcolor=mybowcolor, bottomopacity=0.07, topopacity=0.07}};};
        \node (r2_s) at (1.5, -2.25) {\tikz \pic{gift={boxcolor=darkgray, bowcolor=mybowcolor}};};
        \draw[thick] (l2_s) -- (r2_s);
        \node at (0.25, -1.5) {\LARGE{$\vdots$}};
    \end{scope}
\end{scope}
\end{tikzpicture}%
} 
    \caption{Illustration of $y^*$. Each good is paired with the good next to it. The bright, solid portion of a box represents the fraction of the good a group receives, while the faded portion represents the fraction allocated to the other group.}\label{fig:LPd}
\end{figure}

We are now ready to prove \efone{} existence, by rounding the fractional solution $y^*$ into an \efone{} allocation, in which each group receives exactly one good among $\{2j-1, 2j\}$ for each $1 \leq j \leq m/2$. 
Since the rounding argument is fairly complex, we outline the main ideas here and defer the proof to \cref{thm:2v2proof}.
\begin{restatable}{theorem}{twovstwo}
\label{thm:2v2}
In the case of two groups with two agents each, a balanced \efone{} allocation always exists and can be found in (weakly) polynomial time.
\end{restatable}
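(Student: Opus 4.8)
The plan is to instantiate the rounding scheme outlined above: compute an optimal BFS $(y^*, d^*)$ of the max-min gap LP, and then argue that at least one rounding of its (at most two) fractional pairs into a structured allocation is \efone{} for all four agents. Because any structured allocation is automatically \efone{} for $(f,1)$ and balanced (following \textcite{KSV20}), and because $d^* \geq 0$ certifies that $y^*$ is envy-free for the three remaining agents $(f,2), (s,1), (s,2)$, it suffices to produce one rounding that keeps these three agents \efone{}. After the WLOG reductions (pad to an even number of goods, sort by $u_{f1}$) and solving the LP via \cref{prop:optbfs}, everything reduces to this rounding claim.

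First I would record the structure that a BFS forces. Writing each remaining agent's envy-free constraint as $L_a(y) \ge d$ for an affine function $L_a$, a BFS of the LP with one objective variable $d$ and $m/2$ allocation variables must make $m/2+1$ constraints tight; since each integrally-set pair contributes exactly one tight box constraint, if $k$ pairs are fractional then $k+1$ of the three agent constraints must be tight, so $k \le 2$, and when $k=2$ all three are tight at value $d^*$. The cases $k=0$ (already discrete) and $k=1$ are easy: for a single fractional pair, rounding so that each group keeps the good it fractionally held in majority changes each agent's advantage $u_a(B_{\mathrm{own}}) - u_a(B_{\mathrm{other}})$ by at least $-u_a(o)$, where $o$ is the one good that then moves entirely to the other group and $u_a$ is the agent's valuation; since $o$ lands in the other group's bundle, \efone{} is preserved by dropping exactly $o$, and no optimality of $d^*$ is even needed.

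The main obstacle is the case $k=2$. Here the naive ``round each pair to its majority good'' can fail: it may cost an agent up to $u_a(o_p) + u_a(o_q)$ across the two fractional pairs $p,q$, whereas \efone{} only forgives the single most valuable good in the other bundle. My plan is to exploit the optimality of $d^*$ to rule out that all four roundings fail simultaneously. Concretely, since $y^*$ sits at an interior (in the two free variables $y_p, y_q$) maximizer of the minimum of the three affine functions, the origin must lie in the convex hull of their gradients; translating back, there are weights $\lambda_a \ge 0$ summing to one with $\sum_a \lambda_a \sigma_a (w_{a,p}, w_{a,q}) = 0$, where $\sigma_a = \pm 1$ encodes the agent's group and $w_{a,j} = u_a(2j-1) - u_a(2j)$ is the within-pair value gap. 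This relation says the three vectors $\sigma_a(w_{a,p}, w_{a,q})$ surround the origin, which should prevent a configuration in which, for every one of the four roundings, some agent is simultaneously hurt in both pairs. Writing each rounded advantage as $d^* + \sigma_a[(s_p-\tau_p)w_{a,p} + (s_q-\tau_q)w_{a,q}]$ with $\tau_j = 2y_j^*-1$ and rounding signs $s_j \in \{\pm1\}$, I would use the surrounding condition to select signs $(s_p,s_q)$ so that each failing agent recovers one of its two lost goods, leaving it to lose at most one good's worth of value relative to the baseline $d^*\ge 0$, hence remaining \efone{}; the key is that this recovery can be chosen \emph{consistently} across all three agents.

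I expect the delicate part to be turning the geometric ``surround the origin'' condition into an explicit choice among the four roundings together with a clean charging argument, handling the sign bookkeeping of $\sigma_a$ and $w_{a,j}$ and the fact that the droppable good $v_a$ itself depends on the rounding. A careful case analysis on the signs of $w_{a,p}, w_{a,q}$ for the three agents, organized by which agent is tight and in which direction each pair's majority good points, should discharge it. This is precisely the ``non-trivial combinatorial argument'' flagged before the statement, and is where I would concentrate the bulk of the effort.
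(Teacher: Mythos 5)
Your setup (the max-min-gap LP, the BFS count giving at most two fractional pairs, and the easy cases $k\le 1$) matches the paper, and your stationarity observation ($0$ lies in the convex hull of the three vectors $\sigma_a(w_{a,p},w_{a,q})$ at an interior optimum) is correct. The genuine gap is in the heart of the $k=2$ case: you restrict attention to the four roundings that keep the integral parts $I_f$ and $I_s$ where they are, and hope stationarity forces one of them to be \efone{}. This cannot work, because \emph{all four} such roundings can fail simultaneously at an optimal BFS. The flaw in your mechanism is the premise that a rounding can only fail if some agent is ``hurt in both pairs.'' That is true for the natural rounding, but false for anti-natural ones: if pair $j$ is rounded against its majority ($s_j=-1$ with $\tau_j>0$), a group-$s$ agent who values only the majority good $2j$ loses $(1+\tau_j)\,u(2j)$ of advantage, while dropping the transferred good recovers only $u(2j)$, leaving an unrecoverable deficit $\tau_j\,u(2j)>0$ from a \emph{single} pair. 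So three agents can easily block all four sign patterns, and no quadrant-counting or charging argument over those four options can succeed.

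Concretely, take goods $\{1,\dots,8\}$ with pairs $\{1,2\},\{3,4\},\{5,6\},\{7,8\}$, $u_{f1}$ any decreasing vector, and $u_{f2}=(0,10,0,10,4,0,4,0)$, $u_{s1}=(0,10,0,0,2,0,2,0)$, $u_{s2}=(0,0,0,10,2,0,2,0)$. Then $y^*=(0.7,\,0.7,\,1,\,1)$ with $d^*=0$ is an optimal BFS (the three agent constraints plus $y_3=1,y_4=1$ are tight and independent; optimality holds because the three agent constraint functions sum identically to zero, so no point achieves $d>0$), it satisfies your WLOG $\tau_p,\tau_q\ge 0$ and stationarity, and it lies in the hard regime $\tau_p+\tau_q=0.8<1$. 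Yet every one of your four roundings fails: the natural one gives $(f,2)$ envy $12$ with best droppable good worth $10$; and any rounding sending good $2$ (resp.\ good $4$) to $f$ gives $(s,1)$ (resp.\ $(s,2)$) utility $0$ against an opposing bundle worth $14$ with best droppable good worth $10$. The paper escapes exactly this situation with an additional move your proposal lacks: if no agent is \efone{} under all four options, it \emph{swaps the integral parts} $I_f\leftrightarrow I_s$ (its observation (C)), which restores \efone{} under three of the four sign patterns for each agent and hence leaves a common survivor among the resulting eight candidate allocations (here, giving $f$ the bundle $\{2,3,6,8\}$ works). Without enlarging the candidate set in some such way, the $k=2$ case of your plan is unfixable.
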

\begin{proof}[Proof sketch]
Since $y^*$ has at most two fractional values, we can fix $1 \leq \alpha, \beta \leq m/2$ such that all variables except for $y^*_\alpha, y^*_\beta$ are integral. Let $I_f$ and $I_s$ be the set of remaining goods that are allocated entirely to the first and second group, respectively.
We can assume w.l.o.g.\ that $y^*_\alpha, y^*_\beta \geq 1/2$.\footnote{Otherwise, one can swap the roles of, say, goods $2 \alpha - 1$ and $2 \alpha$, which keeps $(f,1)$ \efone{}.}
For convenience, set $\alpha_f \coloneqq 2\alpha-1$ to be the good of which group $f$ receives a $y^*_\alpha$ fraction and group $s$ receives a $1 - y^*_\alpha$ fraction; $\alpha_s\coloneqq 2\alpha$ to be the good of which $s$ receives a $y^*_\alpha$ fraction and $f$ receives a $1-y^*_\alpha$ fraction; and analogously for $\beta_f, \beta_s$.
This allocation is illustrated in \cref{fig:LPd}.

\paragraph{Case $y^*_\alpha + y^*_\beta \geq 3/2$.} If $y^*_\alpha + y^*_\beta \geq 3/2$, the fractional allocation is already very close to being integral.
In this case, allocating $\{\alpha_f, \beta_f\}$ to $f$ and $\{\alpha_s, \beta_s\}$ to $s$ turns out to be \efone{}.
Since this allocation gives each good to the group that had the larger fraction of it in $y^*$, we refer to this as the \emph{natural rounding}.
To see that the natural rounding is \efone{}, observe that the natural rounding is reached by starting from $y^*$, and transferring goods as follows:
\begin{center}
\begin{tikzpicture}
\node (A) at (0,0) {group $f$};
\node (B) at (6.5,0) {group $s$};

\draw[->, thick] ([yshift=2mm]A.east) -- ([yshift=1.5mm]B.west) node[midway, above] {\footnotesize $(1\!-\!y^*_\alpha) \times \alpha_s, (1\!-\!y^*_\beta) \times \beta_s$};

\draw[->, thick] ([yshift=-2mm]B.west) -- ([yshift=-1.5mm]A.east) node[midway, below] {\footnotesize $(1\!-\!y^*_\alpha) \times \alpha_f, (1\!-\!y^*_\beta) \times \beta_f$};
\end{tikzpicture}
\end{center}
Taking the perspective of, say, $(f, 1)$, they start from the envy-free allocation $y^*$ and receive some fraction of $\alpha_f$ and $\beta_f$, which only reduces their envy.
Then, $f$ hands some fraction of $\alpha_s, \beta_s$ to $s$, but the the amount of this transfer is $1-y^*_\alpha + 1 - y^*_\beta \leq 1/2$ goods.
This transfer increase $(f, 1)$'s envy twofold because $f$'s allocation shrinks and that of $s$ grows.
But $(f, 1)$'s envy is now at most $\max(u_{f1}(\alpha_s), u_{f1}(\beta_s))$, which can be eliminated by removing the higher-valued good from $s$'s bundle.

\paragraph{Case $y^*_\alpha + y^*_\beta < 3/2$.} In the remaining case, in which $3/2 > y^*_\alpha + y^*_\beta \geq 1$, the fractional allocation is further from the natural rounding.
As a consequence, we have to reason about which of the four rounding options (in which $f$ receives $\{\alpha_f, \beta_f\}$, $\{\alpha_f, \beta_s\}$, $\{\alpha_s, \beta_f\}$, or $\{\alpha_s, \beta_s\}$, respectively) are \efone{} for each agent to find a rounding option that works for everyone.
We call agent $(g, i)$ \emph{unhappy} with $\alpha_g$ if they prefer $\alpha_{g'}$ over it (where $g'$ is the other group) and unhappy with $\beta_g$ if they prefer $\beta_{g'}$.
The following two observations follow from arguments similar to the one of the previous case:
\begin{enumerate}[leftmargin=*,labelindent=.65em]
    \item[(A)] If an agent $(g,i)$ is unhappy with both $\alpha_g$ and $\beta_g$, they are \efone{} for all rounding options except the natural one.
    \item[(B)] If an agent is happy with at least one of $\alpha_g$ or $\beta_g$, they are \efone{} for the natural rounding and at least one other rounding option.
\end{enumerate}
Since any two agents have at least one \efone{} option in common, we successfully find an \efone{} allocation whenever at least one of the three agents is \efone{} for all four rounding options.

If, finally, all three agents have some rounding option that is not \efone{}, none of the four rounding options discussed so far may work, but we have one more ace up our sleeve:
\begin{enumerate}[leftmargin=*,labelindent=.65em]
    \item[(C)] If an agent is not \efone{} under some rounding option, then they become \efone{} under the other three options if we \emph{swap the integral parts} $I_f$ and $I_s$.
\end{enumerate}
Since this observation applies to all three agents, each of them rules out at most one of the four rounding options after swapping the integral parts, which leaves one that is \efone{} for all of them.\footnote{For the straight-forward EF polytope, swapping integral allocation parts does not overcome the rounding counterexample.}
Since this allocation is also still \efone{} for the set-aside agent $(f, 1)$, this establishes the claim.
\end{proof}

As the counting arguments do not refer to which group each agent is in, the same argument also shows the existence of \efone{} allocations for two groups of sizes $(|f|, |s|) = (3, 1)$, left open by \textcite{KSV20}, for the natural adaptation of the LP. We conclude:
\begin{corollary}
   When there are two groups with a total of four agents, a balanced \efone{} allocation exists and can be computed in (weakly) polynomial time.
\end{corollary}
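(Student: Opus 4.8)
The plan is to reduce to the one genuinely new case and then observe that the proof of \cref{thm:2v2} transfers essentially verbatim. Up to relabeling the two groups, ``two groups with a total of four agents'' means $(|f|, |s|) \in \{(2,2), (3,1)\}$. The case $(2,2)$ is exactly \cref{thm:2v2}, so it suffices to treat $(3,1)$, where group $f$ contains agents $(f,1), (f,2), (f,3)$ and group $s$ contains the single agent $(s,1)$.

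First I would set up the identical scaffolding. Order the goods by $(f,1)$'s valuation, pad to an even number of goods, and restrict to allocations in which each group receives exactly one good from each pair $\{2j-1, 2j\}$. By the observation of \textcite{KSV20}, this structure is automatically \efone{} for the set-aside agent $(f,1)$ and guarantees balancedness, leaving only the three agents $(f,2), (f,3), (s,1)$ to satisfy. I would then write the analogous LP maximizing the minimum envy-gap $d$, with one envy constraint per remaining agent: an $f$-favoring constraint for $(f,2)$ and $(f,3)$ (their group's bundle minus the other group's bundle is at least $d$) and an $s$-favoring constraint for $(s,1)$.

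The central point is that the counting is unchanged: there are still exactly three agent constraints and $m$ box constraints over the $m/2 + 1$ variables $(y, d)$, so an optimal BFS has at most two fractional $y_j$, exactly as in \cref{thm:2v2}. I would then fix the (at most two) fractional coordinates $\alpha, \beta$, assume $y^*_\alpha, y^*_\beta \geq 1/2$ w.l.o.g., and run the same two-case analysis on $y^*_\alpha + y^*_\beta$. The key is that observations (A), (B), and (C) are each stated and proved \emph{per agent}, using only that agent's own valuation and the direction in which they compare their group's bundle to the other's; none of them references the sizes of the two groups. Hence they hold for $(f,2), (f,3), (s,1)$ just as they did for $(f,2), (s,1), (s,2)$, and the same pigeonhole over the four rounding options (three agents, four options, with integral-part swapping as the tiebreaker) yields a rounding that is \efone{} for all three, and therefore for all four agents.

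The step I expect to require the most care is confirming that the combinatorial core is genuinely group-size-agnostic. Concretely, I would double-check that observations (A) and (B) do not implicitly exploit the $(2,2)$ balance between $f$-favoring and $s$-favoring constraints: the $(3,1)$ instance has two $f$-favoring agents and one $s$-favoring agent rather than one and two. Since the \emph{natural rounding} still assigns each fractional good to its majority group and the per-agent envy bookkeeping is identical, I expect this asymmetry to be harmless, but it is the one place where the transfer is not purely mechanical and must be verified. Efficiency follows as before, since the BFS is computed in (weakly) polynomial time via \cref{prop:optbfs} and the rounding is a constant-size search.
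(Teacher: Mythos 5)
Your proposal is correct and matches the paper's own argument: the paper likewise reduces to the $(3,1)$ case and observes that the proof of \cref{thm:2v2}\,---\,setting aside $(f,1)$ via the pair structure, writing the analogous max-gap LP for the remaining three agents, and running the identical counting and rounding argument with observations (A), (B), (C)\,---\,carries over verbatim because no step depends on which group each of the three remaining agents belongs to. Your flagged concern about the $f$-favoring/$s$-favoring asymmetry is exactly the point the paper dismisses with ``the counting arguments do not refer to which group each agent is in,'' and your verification that the per-agent observations are group-size-agnostic resolves it the same way.
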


\subsection{\efone{} Impossibility for Three or More Couples}
\label{sec:ef1:morecouples}
Since we were able to guarantee \efone{} existence for two couples, one may hope that this existence extends to any number of couples, an audacious hope that has not been contradicted by earlier papers.
In the special case where the agents $(g, 1)$ across all groups $g$ have identical valuations, \textcite{allocator2023} show that \efone{} allocations do exist, using a variant of envy-cycle elimination due to \textcite{cardinality}.

In general, however, we find that \efone{} allocation need no longer exist for three or more couples:

\begin{restatable}{theorem}{efonecex}
\label{lem:more couples}
For $n \geq 3$ couples, some fair division instances have no \efone{} allocations.
\end{restatable}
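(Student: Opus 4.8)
The plan is to exhibit, for each $n \geq 3$, an explicit instance with a fixed set of goods and simple (ideally binary) valuations, and to prove directly that no allocation is \efone{} by reasoning over all allocations. The engine of the impossibility is the intra-couple conflict: within a couple $g$, I would give the two partners $(g,1)$ and $(g,2)$ nearly opposed valuations, so that any bundle $B_g$ that looks good to one partner looks poor to the other. Because a single bundle must simultaneously keep both partners \efone{}, each couple effectively imposes a constraint that is far more rigid than for a single agent, and the goal is to arrange several such constraints so that they cannot all hold at once.

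The core technical tool I would isolate first is a violation criterion: an agent $(g,i)$ fails \efone{} as soon as some rival group $g'$ holds two goods that $(g,i)$ values while $B_g$ contains none of them (more generally, whenever $u_{gi}(B_{g'})$ exceeds $u_{gi}(B_g)$ by more than the single most valuable good in $B_{g'}$). With binary valuations this reduces to a clean combinatorial statement about how many of an agent's ``desired'' goods are concentrated in one foreign group. I would then design the desired sets of the $2n$ agents so that, in \emph{every} allocation, some agent's desired goods are unavoidably concentrated in a rival group. To keep the verification tractable, I would first use balancedness-type and pigeonhole arguments to restrict the shape of candidate allocations (e.g., how many groups can avoid holding a ``bad'' pair), then exploit the symmetry of the construction to collapse the remaining allocations into a few representative cases, each of which pins some agent as non-\efone{}.

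The subtle point, and the reason one cannot simply reduce larger $n$ to the case $n=3$, is what I would call \emph{offloading}: introducing additional groups gives the allocator new places to put contested goods, and a group that holds only one good an agent values is always \efone{} for that agent (the good can be removed). Thus adding $n-3$ ``dummy'' couples lets the allocator effectively discard up to $n-3$ contested goods for free, which can repair \efone{} if the contested goods are few. This forces the construction to \emph{scale} with $n$: the instance for $n$ couples must contain enough contested goods (growing with $n$) that no allocation can dilute the conflict by spreading goods thinly across groups. Concretely, I would build the general instance from $n$ copies of the per-couple conflict gadget tied together by a shared scarcity, and re-run the violation criterion to show that in any allocation at least one couple is pigeonholed into a configuration in which one of its partners is non-\efone{}.

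I expect the main obstacle to be establishing \emph{completeness} of the impossibility, i.e., ruling out \emph{all} allocations rather than the obvious balanced ones. Two forces fight each other here: the construction must be scarce and concentrated enough that the intra-couple conflict cannot be satisfied, yet robust enough that the offloading trick never rescues \efone{}, even as the number of available groups grows with $n$. Getting both properties out of a single scalable family, and organizing the case analysis so that it does not blow up with $n$ (via symmetry and the violation criterion rather than brute-force enumeration), is where I expect the real work to lie.
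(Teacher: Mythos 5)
Your proposal is a strategy outline, not a proof, and the gap is precisely where the theorem's difficulty lies: you never exhibit a concrete instance, and you never carry out the case analysis showing that \emph{every} allocation fails \efone{}. The ingredients you name are the right ones and match the paper's actual proof\,---\,opposed valuations within each couple, the violation criterion (an agent fails \efone{} when a rival group concentrates two of their valued goods while their own bundle is poor), pigeonhole restrictions on the shape of candidate allocations, and the correct observation that the construction must scale with $n$ because dummy couples let the allocator ``offload'' contested goods. But the entire content of the proof is the specific interlocking construction that makes these forces collide, and that is exactly what you leave unspecified. The paper's instance for $n=3$ uses four special goods $\{1,2,3,4\}$ and realizes, across the three couples, all three ways of partitioning these four goods into two pairs (each agent values one pair at $2$, their partner the complementary pair), plus one good valued $1$ by everyone. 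The punchline is that \emph{every} pair of special goods is simultaneously the high-value pair of some agent and worthless to that agent's partner, so whichever two special goods end up together in a bundle, someone is pinned as non-\efone{}. For $n>3$ the paper keeps the same four special goods and the same three patterns, and adds $n-2$ universally-valued filler goods; scarcity ($n+2$ goods for $n$ groups, each group needing a commonly-valued good) forces $n-2$ groups to take one filler good each and two groups to split the special goods two apiece, reducing to the $n=3$ analysis.

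Beyond incompleteness, one element of your plan points in a direction that would likely fail: building the general instance from ``$n$ copies of the per-couple conflict gadget.'' A gadget whose conflict is purely intra-couple (partner 1 values goods $A$, partner 2 values disjoint goods $B$) is neutralized by giving that couple all of $A \cup B$; what makes the paper's construction work is \emph{cross-couple} interlocking\,---\,the three pair-partitions of a single shared set of four goods\,---\,not replication of independent gadgets. Your alternative phrase ``tied together by a shared scarcity'' gestures at the right fix (the paper's filler goods play exactly this role), but without specifying the shared contested structure and verifying the pigeonhole cases, the proposal does not establish the theorem.
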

\begin{proof}
We prove the claim here for $n=3$ and generalize to $n>3$ in \cref{app:cex ef1}. Consider the following instance with 3 couples called $f, s, t$ and goods $\{1, 2, 3, 4, 5\}$:
\begin{center}
        \begin{tabular}{cccccc}
            \toprule
             valuation & $1$ & $2$ & $3$ & $4$ & $5$ \\
             \midrule
             $u_{f 1}$ & $2$ & $2$ & $0$ & $0$ & $1$\\
             $u_{f 2}$ & $0$ & $0$ & $2$ & $2$ & $1$ \\
             $u_{s 1}$ & $0$ & $2$ & $0$ & $2$ & $1$ \\
             $u_{s 2}$ & $2$ & $0$ & $2$ & $0$ & $1$ \\
             $u_{t 1}$ & $2$ & $0$ & $0$ & $2$ & $1$ \\
             $u_{t 2}$ & $0$ & $2$ & $2$ & $0$ & $1$ \\
             \bottomrule
        \end{tabular}
\end{center}

        Each agent has positive valuation for three goods: two with value $2$ and one with value $1$.
        The agent must receive at least one such good since, otherwise, some other group receives two or more of those goods, violating \efone.
        
            For the sake of contradiction, suppose that an \efone{} allocation exists. Since there are five goods and three groups, one group must receive a single good. Let this be the case for group $f$, w.l.o.g.\ by symmetry. Since this good must have positive value for both agents in the group, it must be good $5$.
            
           Because the remaining four goods are never liked by both agents in a group, the other two groups must receive two goods each.
           Consider the two goods given to group $s$.
           By construction, there must be some agent $(g, i)$ (not necessarily in group $s$) for whom both of these goods have value $2$, and for whose partner $(g, i')$ both goods have value $0$. If $g=s$, then $(g, i')$ receives value $0$ and must be envious. Otherwise, 
           $(g, i)$ envies group $s$ by more than one good.
        \end{proof}

\section{Proportionality}
\label{sec:prop}
Because an \efone{} allocation among couples may not exist, it is natural to ask whether the weaker axiom of \propone{} can be guaranteed instead.
For $n$ couples, \textcite{allocator2023} establish the existence of a \prop{\text{-}O(\log n)} allocation by iteratively bi-partitioning the agents and applying, in each step, a rounding argument for a fair allocation among two groups.\footnote{This argument is much easier than our argument for \efone{}. Though \ef{} and \prop{} are equivalent for two groups, \propone{} is weaker and easier to achieve through rounding than \efone{}.} 
In this section, we get much closer to the standard axiom of \propone{}; for $n$ couples, we achieve \propone{} for the first agent in each group and \proptwo{} for each second agent.

\subsection{Almost \prop{} Allocations for Small 
Groups}
\label{sec:prop:iterative}
In fact, the claim for couples follows from a general result for groups of arbitrary sizes, which shows the existence of an \fpo{} allocation in which every agent $(g,i)$ is \prop{i}.

We prove this existence using an algorithm based on the \emph{iterative rounding} method~\cite{Jain01}. This method has been widely used in combinatorial optimization, including fair allocation~\cite{CCK2009, NGUYEN2016, CMV2025}. Our rounding is inspired by the algorithm of \textcite{GAP} for the Generalized Assignment Problem.

Our algorithm maintains a sequence of fractional allocations, and iteratively freezes coordinates at $0$ and $1$ until it reaches a discrete allocation.

The steps of the algorithm are most easily explained by considering a bipartite graph, whose nodes on one side are the goods $M' \subseteq M$ not yet discretely allocated and on the other side are a set of $n$ groups $\mathcal{G}'$, obtained from $\mathcal{G}$ by removing some agents.
The set of edges $E \subseteq M' \times \mathcal{G}'$ denotes the \emph{allowed assignments} by specifying, for each good, the groups that the good may still be allocated to.
Initially, we have $M' \coloneqq M, \mathcal{G}' \coloneqq \mathcal{G}, E \coloneqq M \times \mathcal{G}$, meaning that no goods have been allocated, no agents eliminated, and all possible assignments are allowed.
Based on $M', \mathcal{G}', E$, and the bundles $B_g$ of goods already discretely allocated to group $g$, we consider the following polytope, which describes the currently allowed fractional allocations that are \prop{} for all agents remaining in $\mathcal{G}'$:
{
\footnotesize
\begin{align*}
        &\sum_{\mathclap{\alpha:  (\alpha, g) \in E}} x_{\alpha g} \, u_{gi}(\alpha) \geq \frac{u_{gi}(M)}{n}-u_{gi}(B_g) &\forall g\in \mathcal{G}', i \in [|g|]\notag \\
        &\sum_{\mathclap{g:(\alpha, g) \in E}} x_{\alpha g} = 1 & \forall \alpha \in M'\notag\\
        &0 \leq x_{\alpha g} \leq 1 & \forall (\alpha, g) \in E. 
\end{align*}}

We refer to the three types of constraints, in order from top to bottom,  as \emph{agent} constraints, \emph{good} constraints, and \emph{edge} constraints. Our algorithm is based on the following lemma, which will guide the rounding procedure.

\begin{lemma}\label{lemma:prop-cond}
    If $M'\neq \emptyset$, every BFS $x^*$ of the polytope above satisfies at least one of the following two conditions:
\end{lemma}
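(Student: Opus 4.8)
The plan is to prove the stated dichotomy in contrapositive form. I read the two conditions as being (i) that some edge variable is integral, i.e.\ $x^*_{\alpha g} \in \{0,1\}$ for some $(\alpha, g) \in E$, and (ii) that some group $g \in \mathcal{G}'$ is incident to few edges, namely at most $2\,|g|$ of them. I would assume that (i) fails\,---\,so $0 < x^*_{\alpha g} < 1$ for every $(\alpha, g) \in E$\,---\,and show that (ii) must then hold. Throughout it is cleanest to work with the bipartite graph on $M' \cup \mathcal{G}'$ with edge set $E$, writing $d_\alpha$ and $d_g$ for the degrees of a good $\alpha$ and a group $g$, so that $|E| = \sum_{\alpha \in M'} d_\alpha = \sum_{g \in \mathcal{G}'} d_g$.

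First I would lower-bound $|E|$ using the good constraints. Since $M' \neq \emptyset$, there is at least one good to reason about, and for every good $\alpha \in M'$ the good constraint $\sum_{g : (\alpha,g)\in E} x^*_{\alpha g} = 1$ is a sum of strictly fractional terms, by the assumed failure of condition (i). Such a sum cannot be realized by a single edge, so every good has $d_\alpha \geq 2$, giving $|E| = \sum_{\alpha} d_\alpha \geq 2\,|M'|$.

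Next I would upper-bound $|E|$ using the definition of a basic feasible solution: a BFS is pinned down by $|E|$ linearly independent tight constraints, so in particular the polytope has at least $|E|$ tight constraints. Under the failure of condition (i), no edge constraint $x_{\alpha g}\le 1$ or $0 \le x_{\alpha g}$ is tight, so the only tight constraints are the $|M'|$ good constraints (which are equalities, hence always tight) and the agent constraints, of which there are at most $\sum_{g\in\mathcal{G}'} |g|$ in total. Counting these yields $|E| \leq |M'| + \sum_{g \in \mathcal{G}'} |g|$. Combining with the lower bound gives $2\,|M'| \leq |E| \leq |M'| + \sum_g |g|$; substituting $|M'| \leq |E|/2$ into the upper bound then produces $|E| \leq \tfrac{1}{2}|E| + \sum_g |g|$, i.e.\ $\sum_{g\in\mathcal{G}'} d_g = |E| \leq 2\sum_{g\in\mathcal{G}'} |g|$. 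Since $\sum_{g} \big(d_g - 2|g|\big) \leq 0$, at least one group $g$ satisfies $d_g \leq 2|g|$, which is exactly condition (ii).

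I expect the main obstacle to be making the tight-constraint count airtight rather than the arithmetic. One must verify that under $\lnot$(i) the box constraints contribute nothing, that each good constraint really is an equality (and so is genuinely available as a tight constraint), and that no double counting inflates the bound $|E| \leq |M'| + \sum_g |g|$. It is worth noting that this bound is tight\,---\,the ``all-halves'' fractional allocation, with every good split evenly between two groups, is a BFS meeting $d_g = 2|g|$ with no integral coordinate\,---\,so the constant $2$ cannot be improved, and condition (ii) cannot be strengthened to, say, $d_g \leq |g| + 1$. The second load-bearing step is the degree bound $d_\alpha \geq 2$, which I would double-check uses only $M' \neq \emptyset$ together with the strict fractionality of all edges.
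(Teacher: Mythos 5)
Your proof establishes a different statement from the one the paper needs, because you have reconstructed condition~(ii) incorrectly. In the paper, condition~(ii) reads: $\sum_{\alpha : (\alpha,g) \in E} x^*_{\alpha g} \leq |g|$ for some \emph{nonempty} group $g \in \mathcal{G}'$ --- a bound on the fractional \emph{weight} incident to $g$, not on its \emph{degree}. You prove instead that some group has degree $d_g \leq 2|g|$. These are not interchangeable: under $\lnot$(i) every edge weight is strictly below $1$, so a degree bound of $2|g|$ only yields incident weight $< 2|g|$, never $\leq |g|$. The distinction is load-bearing downstream: when the algorithm eliminates agent $(g,i)$ (the last agent, so $i=|g|$ at that moment), the weight bound is exactly what allows the remaining fractional mass incident to $g$ to be replaced by at most $i$ discrete goods, giving \prop{i}; your degree bound would only give roughly \prop{2i}. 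Relatedly, your averaging step $\sum_g (d_g - 2|g|) \leq 0$ can be witnessed by an \emph{empty} group with no incident edges, which is useless because the algorithm must remove an agent from the flagged group; the paper's condition~(ii) explicitly demands nonemptiness, and its proof secures this.

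The good news is that your two counting bounds are precisely the paper's: $|E| \geq 2|M'|$ (each good constraint is a sum of strictly fractional terms, so every good has degree at least two) and $|E| \leq |M'| + \sum_{g \in \mathcal{G}'} |g|$ (tight constraints at a BFS, with no tight box constraints), which combine to give $|M'| \leq \sum_{g \in \mathcal{G}'} |g|$. The missing move is to convert this into a statement about weights rather than degrees: summing the good constraints gives $\sum_{(\alpha,g)\in E} x^*_{\alpha g} = |M'| \leq \sum_{g\in\mathcal{G}'} |g|$. Now suppose every nonempty group had incident weight strictly greater than $|g|$; empty groups trivially have weight $\geq 0 = |g|$, and some group is nonempty since $\sum_{g} |g| \geq |M'| > 0$, so summing over all groups would give total weight $> \sum_{g\in\mathcal{G}'} |g|$, a contradiction. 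That is the paper's finish, and it slots directly into your setup in place of your final paragraph. Finally, your tightness remark points the wrong way: in the all-halves example, the weight incident to a group of degree $2|g|$ is exactly $|g|$, i.e., it \emph{achieves} the paper's condition~(ii) with equality rather than showing that anything must be relaxed to $2|g|$.
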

\begin{enumerate}[leftmargin=*,labelindent=.4em]
    \item[(i)]  $x^*_{\alpha g} \in \{0,1\}$ for some $(\alpha, g)\in E$, or
    \item[(ii)] $\sum_{\alpha: (\alpha, g)\in E} x^*_{\alpha g} \leq |g|$ for some nonempty group $g \in \mathcal{G}'$.
\end{enumerate}
\begin{proof}
    Fix a BFS $x^*$, and assume that Condition~(i) does not hold.
    Since $|E|$ constraints must be tight, i.e., hold with equality, at a BFS, the number of agent and good constraints must be at least $|E|$:  $\sum_{g \in \mathcal{G}'} |g| + |M'| \geq |E|$.

Furthermore, since all $x^*_{\alpha g}$ are fractional and each good $\alpha \in M'$ has a total incident weight of $1$ by the good constraints, $\alpha$ must be incident to at least two edges. Hence, $\sum_{g \in \mathcal{G}'} |g| + |M'| \geq |E| \geq 2|M'|$, i.e., $\sum_{g \in \mathcal{G}'} |g| \geq |M'|$.
By summing over all good constraints, we obtain that $\sum_{(\alpha, g)\in E} x^*_{\alpha g} = |M'|$, hence $\sum_{g \in \mathcal{G}'} |g| \geq \sum_{(\alpha, g)\in E} x^*_{\alpha g}$.

Suppose, for contradiction, that Condition~(ii) were also violated.
In this case, each group $g \in \mathcal{G}'$ would satisfy $\sum_{\alpha : (\alpha, g) \in E} x^*_{\alpha g} \geq |g|$ and this inequality would be strict for the nonempty groups.\footnote{Some group is nonempty because $\sum_{g \in \mathcal{G}'} |g| \geq |M'| >0$.}
Summing up over all groups, we obtain $\sum_{(\alpha, g) \in E} x^*_{\alpha g} > \sum_{g \in \mathcal{G}'} |g|$, a contradiction.
\end{proof}

In each iteration, our algorithm finds a BFS $x^*$ for the polytope above, and then proceeds as below.
In the first iteration, we specifically select a BFS representing an fPO allocation, say, by solving an LP that maximizes the sum of all agents' utilities over the polytope. Then:
\begin{enumerate}
    \item We delete all edges $(\alpha, g)$ from $E$ for which $x^*_{\alpha g}= 0$.
    \item If $x^*_{\alpha g} = 1$ for some $(\alpha, g) \in E$, we discretely allocate $\alpha$ to $g$, remove $\alpha$ from $M'$ and $(\alpha, g)$ from $E$.
    \item We update $\mathcal{G}'$ by removing the last agent of every group $g$ for which Condition~(ii) from \cref{lemma:prop-cond} holds. 
\end{enumerate}
Since $x^*$ (restricted to the remaining edges) remains feasible for the updated polytope, the polytope remains nonempty, so we can find a new BFS $x^*$ and repeat the process from Step~1 until all goods are allocated.
Since, by \cref{lemma:prop-cond}, each iteration removes either an agent or an edge, the algorithm terminates in polynomially many iterations.

We now state the main theorem and sketch its proof. We defer pseudocode for the algorithm and the formal proof to Appendix \ref{app:iter_alg}.
\begin{restatable}{theorem}{iterround}
\label{thm:iter_round}
    In any group fair division instance with arbitrary group sizes, there exists an \fpo{} allocation which is \prop{i} for every agent $(g, i)$ where $g\in \mathcal{G}$ and $i \leq |g|$. This allocation can be computed in (weakly) polynomial time.
\end{restatable}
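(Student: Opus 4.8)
The plan is to prove all three assertions---existence, \fpo{}, and the per-agent proportionality guarantees---by analyzing the iterative-rounding algorithm described above, splitting the work into termination, \fpo{}, and the \prop{i} bound. For termination and running time, I would use \cref{lemma:prop-cond}: every BFS satisfies Condition~(i) or~(ii), so each iteration either deletes a zero-edge, freezes a good at $1$ (removing its incident edges), or removes one agent via Condition~(ii). Hence the potential $|E| + \sum_{g \in \mathcal G'} |g|$ strictly decreases each iteration and is bounded initially by $mn + \sum_g |g|$, yielding polynomially many iterations, each solving an LP in weakly polynomial time by \cref{prop:optbfs}. I would also verify the polytope stays nonempty: the current $x^*$, restricted to surviving edges, remains feasible, since freezing a good $\alpha$ at group $g$ (where $x^*_{\alpha g}=1$) decreases both the left- and right-hand sides of every surviving agent constraint of $g$ by exactly $u_{gi}(\alpha)$, while removing agents only deletes constraints.

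For \fpo{}, I would have the first iteration pick $x^{(0)}$ maximizing utilitarian welfare $\sum_{g,i} u_{gi}(x)$ over the polytope $P$. This $x^{(0)}$ is \fpo{}: any allocation Pareto-dominating it gives every agent weakly more utility, hence still satisfies every \prop{} (lower-bound) constraint and lies in $P$, yet has strictly larger welfare---a contradiction. I then invoke the standard characterization that an \fpo{} additive allocation maximizes $\sum_{g,i} w_{gi}\,u_{gi}$ for some strictly positive weights $w$. Since this objective is separable across goods, $x^{(0)}_{\alpha g}>0$ only when $g$ maximizes $\sum_i w_{gi} u_{gi}(\alpha)$. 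Because the first iteration deletes every zero-edge and edges are only ever removed thereafter, the surviving edge set is forever contained in the support of $x^{(0)}$; thus every good is ultimately frozen to a group that maximizes its good-specific weighted value, so the output also maximizes $\sum_{g,i} w_{gi} u_{gi}$ and is \fpo{}.

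For the \prop{i} guarantees, fix an agent $(g,i)$. If $(g,i)$ is never removed, its agent constraint is enforced until $M'=\emptyset$, giving $u_{gi}(B_g) \geq u_{gi}(M)/n$, i.e.\ \prop{0} and hence \prop{i}. Otherwise $(g,i)$ is removed in some iteration; because agents are deleted highest-index-first, the surviving agents of $g$ are then exactly $(g,1),\dots,(g,i)$, so $g$ has current size $i$ and Condition~(ii) reads $\sum_{\alpha} x^*_{\alpha g} \leq i$. Meanwhile $(g,i)$'s agent constraint still holds: $u_{gi}(B_g) + \sum_\alpha x^*_{\alpha g}\,u_{gi}(\alpha) \geq u_{gi}(M)/n$, where $B_g$ is the bundle frozen so far. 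The key step is a fractional-knapsack bound: since the coefficients $x^*_{\alpha g}$ lie in $[0,1]$ and sum to at most $i$, the quantity $\sum_\alpha x^*_{\alpha g}\,u_{gi}(\alpha)$ is at most the total $u_{gi}$-value of the $i$ fractionally held goods of largest $u_{gi}$-value; call this set $B^*$, so $|B^*|\le i$ and $B^* \cap B_g=\emptyset$, giving $u_{gi}(B_g \cup B^*)\ge u_{gi}(M)/n$. Writing $B_g^{\mathrm{fin}} \supseteq B_g$ for the final bundle and setting $B \coloneqq B^*\setminus B_g^{\mathrm{fin}}$, we get $|B|\le i$, $B\subseteq M\setminus B_g^{\mathrm{fin}}$, and $u_{gi}(B_g^{\mathrm{fin}}\cup B) \ge u_{gi}(B_g\cup B^*)\ge u_{gi}(M)/n$, which is exactly \prop{i}.

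I expect the main obstacle to be this third part, for two reasons. First, one must pin down that the budget supplied by Condition~(ii) at the removal step equals the \emph{current} group size $i$---which is precisely why deleting agents highest-index-first is essential, as it is what aligns index $i$ with \prop{i}. Second, one must convert the fractional proportionality guarantee into a discrete one by the knapsack-style covering of at most $i$ goods, carefully accounting for goods of $B^*$ that happen to end up in the agent's own final bundle (handled by intersecting with the complement of $B_g^{\mathrm{fin}}$ and using that the bundle only grows). A secondary subtlety is the \fpo{} argument's reliance on support-containment together with the positively-weighted-welfare characterization of \fpo{}.
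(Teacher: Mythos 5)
Your proposal is correct and follows essentially the same route as the paper's proof: the same feasibility-preservation and potential argument for termination, the same conversion of Condition~(ii) at the removal step into \prop{i} via the fractional-knapsack bound (including the detail of discarding goods of $B^*$ that later enter the group's own bundle), and the same support-containment-plus-weighted-welfare argument for \fpo{}. The only cosmetic difference is that you cite Varian's characterization of \fpo{} as standard, whereas the paper re-proves it for the group setting (\cref{lem:wuw}); otherwise the two arguments coincide.
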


\begin{proof}[Proof sketch]
We have already argued that the algorithm makes progress and terminates in polynomially many iterations. Since each iteration involves solving a linear program and some polynomial computation, the total running time is polynomial.
It remains to argue that the resulting allocation satisfies \prop{i} and \fpo{}.

\paragraph{Proportionality.} Fix an agent $(g, i)$.
If the agent gets never eliminated, their agent constraint ensures that the allocation even satisfies \prop{}.
Should the agent get eliminated in some iteration, it must hold that $\sum_{(\alpha, g) \in E} x^*_{\alpha g} \leq i$ for the BFS $x^*$ of this iteration.
Since $x^*$ satisfies the agent's constraint, the bundle $B_g$ already discretely allocated to $g$ before this iteration satisfies
$u_{gi}(B_g) + \sum_{\alpha : (\alpha, g) \in E} x^*_{\alpha g} \, u_{gi}(\alpha) \geq u_{gi}(M)/n$.
Since $\sum_{\alpha : (\alpha, g) \in E} x^*_{\alpha g} \, u_{gi}(\alpha)$ is at most the value of the $i$ most valuable goods outside of $B_g$, the agent is \prop{i}\,---\,even if the final allocation does not give their group any goods in addition to $B_g$.

\paragraph{Fractional Pareto Optimality.}
It is a classic result by \textcite{VARIAN} that a fractional allocation is fPO iff it maximizes a positively weighted sum of agent utilities. (We confirm that this equivalence persists in the group setting.)
Observe that a fractional allocation maximizes the weighted sum of agent utilities with weights $w_{gi}>0$ iff each good $\alpha$ is only allocated among groups $g \in \argmax_{g \in \mathcal{G}} \sum_{i \in [|g|]} w_{gi} \, u_{gi}(\alpha)$.
It follows that, if the fractional allocation $x$ is fPO and if, for another fractional allocation $x'$, $x_{g\alpha} = 0$ implies $x'_{g\alpha} = 0$ for all groups $g$ and goods $\alpha$, then $x'$ is also fPO.
This argument was previously used, for example, by \textcite{AMS20} and \textcite{BFG+22}.

Since the first iteration of the algorithm starts with an fPO $x^*$, and since the algorithm immediately removes all edges that were zero for $x^*$, the final allocation only allocates goods to groups that received a non-zero amount of this good in the initial fractional allocation.
Hence, the allocation found by the algorithm is fPO.
\end{proof}

\subsection{Possibility of \propone{} Allocations}
\label{sec:prop:prop1}

Although we can only prove the existence of a \proptwo{} allocation among couples, we conjecture that \propone{} allocations exist for any number of couples.

Our conjecture is supported, in part, by a failure to find counter-examples by hand and with computer aid.
More importantly, we were able to show the existence of PROP1 allocations for several special cases:

\begin{restatable}{theorem}{propsepcial}
\label{thm:prop1special}
    When each group $g \in \mathcal{G}$ has size $2$, a \propone{} allocation is guaranteed and efficiently computable whenever one of the following conditions holds.
    \begin{itemize}
        \item $m \leq 2n$.
        \item $m$ divides $n$ and $(g, 1)$ and $(g, 2)$ have opposite preference rankings over the goods for all $g \in \mathcal{G}$.
        \item All agents have binary valuations and approve the same number of goods.
        \item All agents have binary valuations and $n = 3$.
    \end{itemize}
\end{restatable}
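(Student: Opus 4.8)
The unifying tool is the characterization that an allocation is \propone{} for an agent $(g,i)$ exactly when
\[ u_{gi}(B_g) + \max_{\beta \in M \setminus B_g} u_{gi}(\beta) \;\ge\; u_{gi}(M)/n, \]
so that the single ``added good'' of \propone{} may be drawn from \emph{any} group's bundle. The plan is to handle each of the four cases by reducing \propone{} to a purely combinatorial condition on which goods each group captures, and then exhibiting a discrete allocation meeting that condition. A preliminary observation shortcuts the degenerate regime: if $m \le n$ then $u_{gi}(M)/n \le \max_\beta u_{gi}(\beta)$ for every agent, so \emph{every} allocation is \propone{} (add the agent's favorite good to its bundle). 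The substantive work lies in the three remaining conditions.

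For the case $m \le 2n$, I would first prove the estimate that the proportional share is dominated by an agent's two most valuable goods: writing $v_1, v_2$ for the two largest values $u_{gi}(\alpha)$, one has $u_{gi}(M)/n \le v_1 + v_2$ (sum the remaining $\le 2n-2$ values, each at most $v_2$, and compare to $n(v_1+v_2)$). From this, $(g,i)$ is \propone{} as soon as $B_g$ contains one good worth at least $v_2$, or its top good lies outside $B_g$ subject to a mild bound on $u_{gi}(B_g)$. I would then produce such an allocation by a sequential round-robin in which the $2n$ agents, in an arbitrary order, each claim one currently-available good of maximum value for their own group (padding with value-$0$ dummy goods so that exactly $2n$ goods are distributed and every group receives two). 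The main obstacle is the \propone{} verification: one must show that at an agent's turn, either it captures a good meeting its own threshold, or every good it prefers already sits in another group and is hence available as the added good. I expect to formalize this through an accounting of how many higher-valued goods can have been claimed before the agent's turn.

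For the divisibility-and-opposite-rankings case (read as $n \mid m$, so $m = tn$), I would partition the goods by a group-level round-robin that alternates, within each group, which partner's ranking is used to select the next good. Because the two rankings are reversed, the goods chosen on behalf of one partner are precisely those that partner values most while being least valuable to the other, so the two partners' demands never conflict. The standard round-robin bound then certifies \propone{} for each partner from the $\lceil t/2\rceil$ (resp.\ $\lfloor t/2\rfloor$) goods selected on their behalf, and the one-good slack in \propone{} absorbs any imbalance between the partners.

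For the binary cases, where every agent approves exactly $d$ goods, the share is $d/n$ and the characterization collapses to the requirement that each agent $(g,i)$ retain at least $\lceil d/n\rceil - 1$ of its approved goods inside $B_g$. The uniform fractional allocation $x_{\alpha g} = 1/n$ gives every agent exactly its share, so the task is to round it to a discrete allocation preserving each approved count up to one. I would obtain this by an integral rounding of the bipartite goods--groups allocation, either through the iterative-rounding machinery of \cref{lemma:prop-cond,thm:iter_round} or directly via a degree-constrained subgraph / Hall's-theorem argument, with the equal approval counts supplying the regularity that forces the simultaneous per-agent lower bounds. For $n=3$, the instance is small enough to argue by a finite case analysis on the approved-set intersection patterns among the six agents, again reducing to the same counting requirement. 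Throughout the binary cases the principal difficulty is the \emph{simultaneous} rounding: the averaged allocation is trivially proportional, but guaranteeing that \emph{both} partners of \emph{every} group keep enough approved goods at once is exactly where the equal-count regularity (or, for $n=3$, the finiteness) must be used.
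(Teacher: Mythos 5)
Your case decomposition matches the paper's, and your two preliminary reductions are sound (the share is at most $v_1+v_2$ when $m\le 2n$, and binary \propone{} is equivalent to each agent keeping at least $\lceil d/n\rceil-1$ approved goods). But the algorithms you propose for the first two bullets are incorrect, not merely unfinished. For $m\le 2n$: take $n=3$ couples and goods $\{1,\dots,6\}$, let $(1,1),(1,2),(2,1),(2,2),(3,1)$ value only goods $1,2,3,4,5$ respectively, and let $(3,2)$ value goods $1,2,3,4$ at $1$ each. Under your round-robin with claiming order $(1,1),(1,2),(2,1),(2,2),(3,1),(3,2)$, the bundles are $\{1,2\},\{3,4\},\{5,6\}$; agent $(3,2)$ has bundle value $0$ and proportional share $4/3$, yet no single added good is worth more than $1$, so \propone{} fails. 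This refutes the dichotomy you planned to formalize: ``every good it prefers already sits in another group'' does not rescue the agent, because \propone{} permits adding only \emph{one} such good. The opposite-rankings bullet has the same quantitative flaw: each partner picks only once every $2n$ picks, so the standard round-robin bound yields them roughly $u_{gi}(M)/(2n)$, while their partner's picks are, precisely because the rankings are opposite, worth essentially nothing to them; the share is $u_{gi}(M)/n$, and a factor-two deficit cannot be absorbed by one good (an analogous three-couple, six-good instance with opposite rankings breaks it). The paper argues both cases differently: for $m\le 2n$ it shows an agent is \propone{} once their group holds one of their top $n$ goods, achieved by a two-phase algorithm (first serve groups whose two partners' top-$n$ sets intersect, then a Hall's-theorem matching for the remaining agents, \cref{lem:m<=2n}); for opposite rankings it observes that the two partners then have the \emph{same} segment partition, so a perfect matching in an $n$-regular bipartite goods-versus-segments graph gives each agent one good from every segment, which implies \propone{} (\cref{lem:segprop1,lem:sameseg}).

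The binary bullets also have gaps, because neither tool you invoke suffices. Citing \cref{thm:iter_round} cannot work: it guarantees only \prop{2} to the second member of each couple, which is exactly the gap this theorem is meant to close. The agent-level degree-constrained subgraph (each good matched to at most one \emph{agent}, each agent to at least $\lceil d/n\rceil-1$ approved goods) is the wrong model, since it double-counts goods approved by both partners: if all $2n$ agents approve the same $d$ goods, your model demands about $2d$ matched slots from only $d$ usable goods and is infeasible, even though a \propone{} allocation trivially exists by splitting those $d$ goods evenly. The paper's \cref{lem:needs-k} repairs exactly this with a first phase that allocates doubly-approved goods (each satisfies two agents at once, at most $k$ per group); only afterwards is every remaining good approved by at most one agent per group, and then the bipartite incidence matrix is totally unimodular and an explicit fractional point certifies feasibility. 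Finally, for $n=3$ your ``finite case analysis'' is not finite in any usable sense, since $m$ is unbounded; the paper's \cref{lem:prop1for3} instead constructs, for each group, an explicit partition of $M$ into triples (goods approved by neither partner first, then by one, then by both, with a short analysis of the at most eight leftovers) and again concludes via the $n$-regular bipartite matching argument.
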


Since \propone{} is weaker than \efone{}, one might hope for the existence of \propone{} for even larger groups.
However, \propone{} may fail to exist for groups of size three:
\begin{restatable}{theorem}{hardness}
\label{thm:hardness}
For $n \geq 5$ groups of three agents, PROP1 allocations need not exist, even when utilities are binary and the groups are all identical.
Moreover, deciding whether a PROP1 (or EF1) allocation exists is NP-complete for groups of three agents, even for binary utilities.
\end{restatable}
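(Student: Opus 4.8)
The plan is to treat the two claims separately, but to base both on a structural observation that, for binary valuations, each fairness notion reduces to a clean combinatorial condition on how every agent's approved goods are spread across the bundles. Membership in NP is immediate: a discrete allocation is a polynomial-size certificate, and checking \propone{} (or \ef{1}) for all $3n$ agents only requires evaluating additive binary utilities. For the rest, write $A_{gi}\subseteq M$ for the goods approved by agent $(g,i)$. I would first record that $(g,i)$ is \ef{1} exactly when $|B_g\cap A_{gi}| \geq |B_{g'}\cap A_{gi}| - 1$ for every other group $g'$, i.e.\ the goods of $A_{gi}$ are spread over the bundles with maximum minus minimum at most one; and that $(g,i)$ is \propone{} exactly when $|B_g\cap A_{gi}| \geq \lceil |A_{gi}|/n\rceil - 1$, since the one extra good is available precisely when it is needed. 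Thus, for identical groups, \ef{1} asks for a partition of $M$ into $n$ bundles that simultaneously \emph{balances} the three approval sets, whereas \propone{} asks only for a per-type \emph{lower bound} in every bundle.

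\textbf{Impossibility.} For the non-existence claim I would exhibit explicit binary approval sets $A_1,A_2,A_3$ (one per agent type, with $n$ identical groups) for which no partition meets the \propone{} lower bounds. The \ef{1} version is comparatively easy to frustrate: three sets forming an odd ``conflict cycle'' (e.g.\ $A_1=\{1,2\},A_2=\{2,3\},A_3=\{1,3\}$) force a non-$2$-colouring and break balance already for small $n$. The subtlety, and the reason the threshold is $n\geq 5$, is that \propone{} imposes only one-sided lower bounds, which are far more permissive. I would therefore tune the set sizes so that the proportional thresholds round up to $\lceil |A_t|/n\rceil - 1 \geq 2$, and arrange the overlaps so that the required two-per-bundle coverage is globally infeasible: assigning a good shared between two types to satisfy one type's demand in one bundle necessarily starves the other type in a different bundle. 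Infeasibility would then be certified by a counting/Hall-type argument over the bundles, after which dummy agents and goods lift the construction from $n=5$ to every $n\geq 5$, exactly as done for the \ef{1} counterexample of \cref{lem:more couples} in \cref{app:cex ef1}.

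\textbf{NP-hardness.} For the reduction I would start from an NP-complete partition/covering problem whose structure mirrors the ``spread the approved goods evenly across $n$ bundles'' condition above; natural candidates are a balanced-partition (discrepancy) problem or \textsc{3-Dimensional Matching}/\textsc{Exact Cover by 3-Sets}. Given a source instance, I would build goods encoding its elements and groups of three agents (now with possibly distinct binary valuations) encoding its sets/constraints, and wire in copies of the impossibility gadget as a \emph{forcing component} that can be completed consistently only when a valid cover/matching exists. The target is a gap-preserving equivalence: a \propone{} allocation (and, through the same encoding, an \ef{1} allocation) exists if and only if the source instance is a yes-instance; NP-hardness for both axioms then follows from the single reduction.

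\textbf{Main obstacle.} The crux is the \propone{} impossibility construction. Because one-sided lower bounds are so easily satisfied, I expect that all naive symmetric designs\,---\,pairwise-disjoint sets, full triple overlap, or evenly sized pairwise overlaps\,---\,admit a valid partition, since in each case the proportional thresholds themselves guarantee enough supply. The counterexample must therefore carefully balance on a knife's edge: set sizes chosen so the thresholds round \emph{up} to $2$, yet overlaps coupled so the two-per-bundle demands are jointly unsatisfiable. Getting this tuning exactly right, and then showing that the same gadget embeds into a reduction that preserves the semantics of both \propone{} and \ef{1} simultaneously, is where essentially all the difficulty lies.
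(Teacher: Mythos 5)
Your NP-membership observation and your reformulation of binary \propone{} as the lower bound $|B_g\cap A_{gi}| \geq \lceil |A_{gi}|/n\rceil - 1$ are both correct, and 3DM (which you list as one candidate) is indeed the source problem the paper reduces from. But the proposal stops at a plan precisely where the proof has to do its work, and the intuition you give for completing it points away from the construction that succeeds. You insist the counterexample must have thresholds $\lceil |A_t|/n\rceil - 1 \geq 2$, and you explicitly rule out ``evenly sized pairwise overlaps'' as designs whose ``thresholds themselves guarantee enough supply.'' The paper's counterexample is exactly such a design with all thresholds equal to $1$: the goods are $\{1,\dots,2n-1\}$, and agent $(g,i)$ disapproves precisely the goods $\alpha$ with $\alpha\equiv i \pmod 3$, so every good is approved by exactly two of the three agent types. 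For $n\geq 5$ every agent's proportional share lies strictly in $(1,2)$, so every agent needs at least one approved good; since there are only $2n-1<2n$ goods, some group receives at most one good, and that good is disapproved by one of its three agents. The impossibility is thus driven by scarcity (one good fewer than $2n$) plus pigeonhole, not by demands of two goods per agent; the role of $n\geq 5$ is only to make every share exceed $1$ (for $n=3,4$ the residue classes are unbalanced and some agent's share drops to $1$, killing the argument). Since you neither produce a construction of your knife-edge type nor show that one exists, the non-existence half of the theorem remains unproven in your write-up.

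The hardness half has the same status. The paper's reduction from 3DM is not ``wire in the impossibility gadget'' in the abstract: it pads with a set $Y$ of $3(k+|T|+2)$ goods and $2k+|T|+3$ extra groups whose agents each approve two-thirds of $Y$ in the same two-of-three pattern, calibrated so that every agent's share is strictly between $1$ and $2$; a counting argument then shows the padding groups must absorb all but $|T|-k$ goods of $Y$, so $k$ of the triple-groups can only be satisfied by receiving their own triple $\{x_1,x_2,x_3\}$, which is exactly a perfect matching. Note also that the \efone{} statement needs no second gadget: in one direction the matching-induced allocation is checked directly to be \efone{}, and in the other direction non-existence of \propone{} rules out \efone{} because \efone{} implies \propone{} for additive valuations. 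As submitted, your proposal is a reasonable research outline, but both of its load-bearing constructions are missing, and the heuristic meant to guide the search for the first one would have steered you away from the construction that works.
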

\begin{proof}[Proof sketch.]
We only give the counter-example for five groups of three agents here, and defer the rest to \cref{app:hardness}.
Consider an instance with goods $\{1, \dots, 9\}$ and five groups $g$, each of which has the following valuations:
\begin{center}
        \begin{tabular}{cccccccccc}
        \toprule
         valuation & $1$ & $2$ & $3$ & $4$ & $5$ & $6$ & $7$ & $8$ & $9$ \\
         \midrule
         $u_{g1}$ & $1$ & $1$ & $1$ & $1$ & $1$ & $1$ & $0$ & $0$ & $0$\\
         $u_{g2}$ & $1$ & $1$ & $1$ & $0$ & $0$ & $0$ & $1$ & $1$ & $1$ \\
         $u_{g3}$ & $0$ & $0$ & $0$ & $1$ & $1$ & $1$ & $1$ & $1$ & $1$ \\
         \bottomrule
    \end{tabular}
\end{center}
Each agent has a total valuation of $6$ and her proportional share is $\frac{6}{5} > 1$.
Hence, each agent must receive at least one good with value $1$ to be \propone{}.
As there are $9$ goods and $5$ groups, one group receives only one good.
By construction, this good has zero value for one of the agents in the group, implying a \propone{} allocation cannot be achieved.
\end{proof}
    
\section{Experiments}
\label{sec:empirics}
\begin{figure}
    \centering
    \includegraphics[width=0.7\columnwidth]{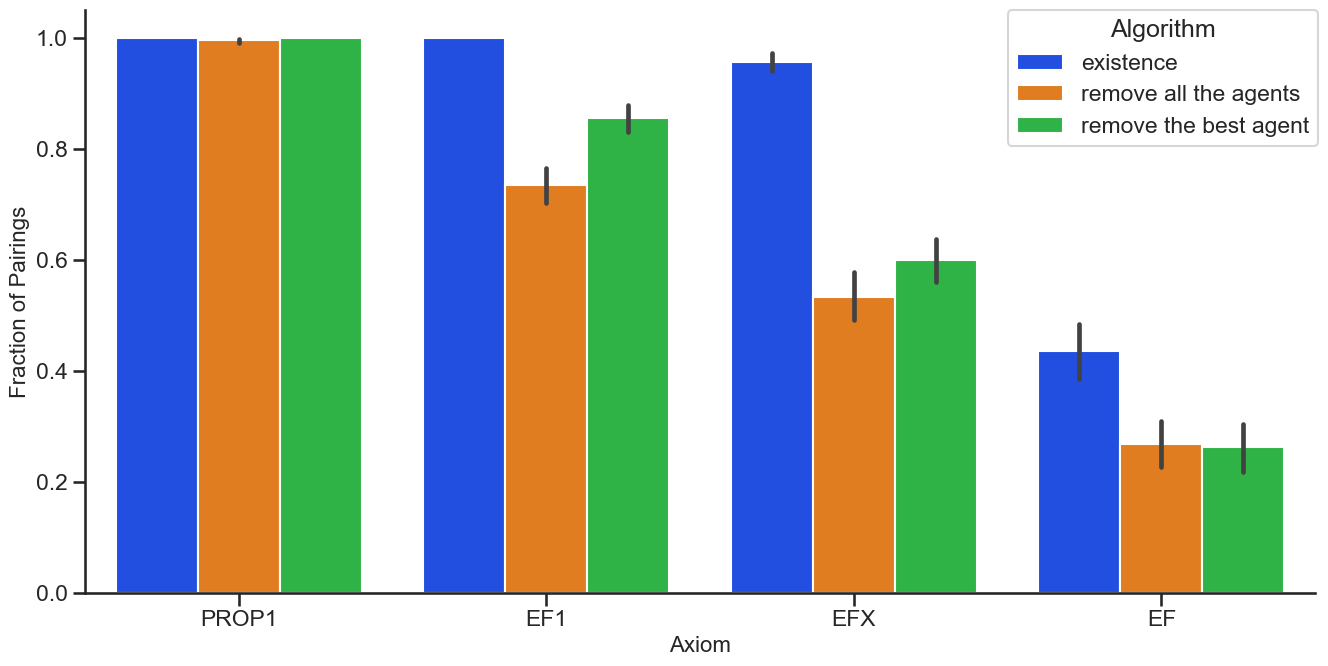}
    \caption{Fraction of pairings for which fair allocations exist or are found by one of two algorithms, averaged over all Spliddit instances. Axioms imply axioms to their left. Error bars indicate 95\% confidence intervals (bootstrapping).}
    \label{fig:plot}
\end{figure}
We now use real-world preference data to empirically examine how often fair allocations exist for practical allocation problems among couples and if our iterative-rounding algorithm exceeds its theoretical guarantees in practice.
Our dataset consists of all allocation problems for indivisible goods (over individuals) submitted to the website \emph{Spliddit}~\cite{GP14a,Shah17} as of June 2025.
To allow us to meaningfully group agents, we consider only Spliddit instances with at least four agents.
The remaining data consists of 254 instances, whose number of agents ranges between 4 and 15 (median: 5) and whose number of goods ranges between 1 and 59 (median: 6).
See \cref{app:experiments} for more details on our data and experiments.

We transform each Spliddit instance into fair allocation problems over couples by iterating over all partitions of agents into pairs (if the number of agents is odd, one agent remains on their own), considering 1000 random pairings if the number of pairings exceeds this number.
Since the different pairings of the same Spliddit instance produce correlated observations, we do not treat them as independent datapoints.
Instead, we calculate for each Spliddit instance the fraction of its pairings that satisfies some property (say, EF1), and report averages over these fractions of pairings.
In \cref{fig:plot}, we display the average fractions of pairs for the existence of several fairness axioms, and for whether these axioms are achieved by two variants of our iterative rounding algorithm.
In \cref{app:experiments}, we show that the patterns remain similar when restricting to instances with many or few agents, or with many or few goods.

While our \cref{lem:more couples} shows that \efone{} allocations do not exist for all fair allocation instances among couples, such allocations seem to exist for most practical problems.
Strikingly, we find \efone{} allocations (hence also \propone{} allocations) for each of the over 13,000 instance--pairing combinations we study.
We also tested the frequency of allocations satisfying EF and EFX,\footnote{An allocation $\{B_{g'}\}_{g'\in \mathcal{G}}$ is EFX for $(g,i)$ if removing any good $\alpha \in B_{g'}$ with $u_{gi}(\alpha) > 0$, from $B_{g'}$, eliminates the envy of the agent.} an axiom between \ef{} and \efone{}, whose existence is a tantalizing open question~\cite{CKM+19,CGM20} in the individual setting. 
As shown by the blue bars in \cref{fig:plot}, EFX exists for 96\% of pairings on average, whereas EF is rarer at 44\%.
    
In \cref{sec:prop}, we proposed a natural iterative-rounding algorithm with proportionality and efficiency guarantees.
To test this algorithm's usefulness in practice, we apply it to the same datasets, and report the fraction of pairings for which the algorithm satisfies each fairness axiom.
The orange bars in \cref{fig:plot} (``remove all the agents'') represent a direct implementation of our algorithm.
Though the algorithm only guarantees \prop{2} for the second agents in the worst case, it satisfies \prop{1} almost always on our data (99\% of pairings on average).
The algorithm even finds \efone{} (73\% of pairings) and EFX (53\%) reasonably often, though substantially less often than the existence of these axioms, which is to be expected since the algorithm does not avoid envy.

We also repeated the experiment with a variant of the iterative-rounding algorithm, in which we do not immediately eliminate the last agent from all groups satisfying condition~(ii) of \cref{lemma:prop-cond}.
Instead, we find the group with the lowest incident weight among eligible groups, and eliminate only a single agent, namely the remaining agent in this group with the largest utility from already discretely allocated goods.
Heuristically, this might lead to fairer allocations by deferring when we drop the constraints of agents who have not yet reached proportionality.

The performance of this variant is shown in \cref{fig:plot} by the green bar (``remove the best agent'').
Eliminating only one agent per iteration leads to \propone{} allocations on all our considered instances and pairings.
For EF1, the variant increases the average fraction of pairings from 73\% to 86\%, an increase clearly beyond the confidence intervals of both estimates (see figure).
The change also moderately increases the fraction of EFX pairings from 53\% to 60\%, and has no discernible effect on EF.
In light of these improvements, it would be interesting to study in future work if starting from \fpo{} allocations other than the one with maximum utilitarian welfare and using other heuristics for eliminating agents can lead to even better practical performance.

\section{Conclusion}
\label{sec:conclusion}
We studied the allocation of indivisible goods among small groups, a well-motivated setting that most prior works, due to their focus on asymptotic bounds in the group size, have left largely unexplored.
For two couples and envy freeness, or for any number of small groups and proportionality, we showed that fairness axioms must not be relaxed by much more than in the individual setting to guarantee existence.

Though our hope of \efone{} existence for arbitrary numbers of couples did not materialize, our work leaves open many possibilities for positive results.
For example, we do not know if \efone{} allocations exist for all allocation problems over couples with binary valuations, whether \propone{} allocations exist for any number of couples with additive utilities (as we believe), or whether, say, it is possible to guarantee \efone{} for one partner and \propone{} for the other in each couple.

\printbibliography

@article{khachiyan1980polynomial,
  author = {L. G. Khachiyan},
  title = {Polynomial algorithms in linear programming},
  journal = {USSR Computational Mathematics and Mathematical Physics},
  year = {1980},
  volume = {20},
  pages = {53--72},
  note = {English translation of: Zhurnal Vychislitel'noi Matematiki i Matematicheskoi Fiziki 20 (1980) 51--68}
}

@misc{allocator,
      title={Fair Division with Allocator's Preference}, 
      author={Xiaolin Bu and Zihao Li and Shengxin Liu and Jiaxin Song and Biaoshuai Tao},
      year={2025},
      eprint={2310.03475v2},
      archivePrefix={arXiv},
      primaryClass={cs.GT},
      url={https://arxiv.org/abs/2310.03475v2}, 
}

@InProceedings{allocator2023,
author="Bu, Xiaolin
and Li, Zihao
and Liu, Shengxin
and Song, Jiaxin
and Tao, Biaoshuai",
editor="Garg, Jugal
and Klimm, Max
and Kong, Yuqing",
title="Fair Division with Allocator's Preference",
booktitle="Web and Internet Economics",
year="2024",
publisher="Springer",
address="Cham",
pages="77--94",
abstract="We consider the problem of fairly allocating indivisible resources to agents, which has been studied for years. Most previous work focuses on fairness and/or efficiency among agents given agents' preferences. However, besides the agents, the allocator as the resource owner may also be involved in many real-world scenarios (e.g., government resource allocation, heritage division, company personnel assignment, etc.). The allocator has the inclination to obtain a fair or efficient allocation based on her own preference over the items and to whom each item is allocated. In this paper, we propose a new model and focus on the following two problems concerning the allocator's fairness and efficiency: 1.Is it possible to find an allocation that is fair for both the agents and the allocator?2.What is the complexity of maximizing the allocator's social welfare while satisfying the agents' fairness?",
}

@misc{cardinality,
      title={Fair Division Under Cardinality Constraints}, 
      author={Siddharth Barman and Arpita Biswas},
      year={2020},
      eprint={1804.09521},
      archivePrefix={arXiv},
      primaryClass={cs.GT},
      url={https://arxiv.org/abs/1804.09521}, 
}

@article{GAP,
author = {Shmoys, David B. and Tardos, \'{E}va},
title = {An approximation algorithm for the generalized assignment problem},
year = {1993},
issue_date = {February  1993},
publisher = {Springer-Verlag},
address = {Berlin, Heidelberg},
volume = {62},
number = {1–3},
issn = {0025-5610},
abstract = {The generalized assignment problem can be viewed as the following problem of scheduling parallel machines with costs. Each job is to be processed by exactly one machine; processing jobj on machinei requires timepij and incurs a cost ofcij; each machinei is available forTi time units, and the objective is to minimize the total cost incurred. Our main result is as follows. There is a polynomial-time algorithm that, given a valueC, either proves that no feasible schedule of costC exists, or else finds a schedule of cost at mostC where each machinei is used for at most 2Ti time units.We also extend this result to a variant of the problem where, instead of a fixed processing timepij, there is a range of possible processing times for each machine--job pair, and the cost linearly increases as the processing time decreases. We show that these results imply a polynomial-time 2-approximation algorithm to minimize a weighted sum of the cost and the makespan, i.e., the maximum job completion time. We also consider the objective of minimizing the mean job completion time. We show that there is a polynomial-time algorithm that, given valuesM andT, either proves that no schedule of mean job completion timeM and makespanT exists, or else finds a schedule of mean job completion time at mostM and makespan at most 2T.},
journal = {Math. Program.},
month = feb,
pages = {461–474},
numpages = {14},
keywords = {scheduling unrelated parallel machines, generalized assignment problem, Approximation algorithms}
}

@article{VARIAN,
title = {Equity, envy, and efficiency},
journal = {Journal of Economic Theory},
volume = {9},
number = {1},
pages = {63-91},
year = {1974},
issn = {0022-0531},
doi = {https://doi.org/10.1016/0022-0531(74)90075-1},
url = {https://www.sciencedirect.com/science/article/pii/0022053174900751},
author = {Hal R Varian}
}

@article{SUKSOMPONG2018,
title = {Approximate maximin shares for groups of agents},
journal = {Mathematical Social Sciences},
volume = {92},
pages = {40-47},
year = {2018},
issn = {0165-4896},
doi = {https://doi.org/10.1016/j.mathsocsci.2017.09.004},
url = {https://www.sciencedirect.com/science/article/pii/S016548961730121X},
author = {Warut Suksompong},
abstract = {We investigate the problem of fairly allocating indivisible goods among interested agents using the concept of maximin share. Procaccia and Wang showed that while an allocation that gives every agent at least her maximin share does not necessarily exist, one that gives every agent at least 2∕3 of her share always does. In this paper, we consider the more general setting where we allocate the goods to groups of agents. The agents in each group share the same set of goods even though they may have conflicting preferences. For two groups, we characterize the cardinality of the groups for which a positive approximation of the maximin share is possible regardless of the number of goods. We also show settings where an approximation is possible or impossible when there are several groups.}
}

@article{SN2019,
 ISSN = {01761714, 1432217X},
 URL = {http://www.jstor.org/stable/45223337},
 abstract = {We study the fair division of a continuous resource, such as a land-estate or a time-interval, among pre-specified groups of agents, such as families. Each family is given a piece of the resource and this piece is used simultaneously by all family members, while different members may have different value functions. Three ways to assess the fairness of such a division are examined, (a) Average Fairness means that each family's share is fair according to the "family value function", defined as the arithmetic mean of the value functions of the family members, (b) Unanimous Fairness means that all members in all families feel that their family's share is fair according to their personal value function, (c) Democratic Fairness means that in each family, at least a fixed fraction (e.g. a half) of the members feel that their family's share is fair. We compare these criteria based on the number of connected components in the resulting division and on their compatibility with Pareto-efficiency.},
 author = {Erel Segal-Halevi and Shmuel Nitzan},
 journal = {Social Choice and Welfare},
 number = {4},
 pages = {709--740},
 publisher = {Springer},
 title = {Fair cake-cutting among families},
 urldate = {2025-07-30},
 volume = {53},
 year = {2019}
}

@inproceedings{ghodsi2018,
author = {Ghodsi, Mohammad and Latifian, Mohamad and Mohammadi, Arman and Moradian, Sadra and Seddighin, Masoud},
title = {Rent Division Among Groups},
year = {2018},
publisher = {Springer},
address = {Berlin, Heidelberg},
url = {https://doi.org/10.1007/978-3-030-04651-4_39},
doi = {10.1007/978-3-030-04651-4_39},
abstract = {In this paper, we extend the Rent Sharing problem to the case that every room must be allocated to a group of agents. In the classic Rent Sharing problem, there are n agents and a house with n rooms. The goal is to allocate one room to each agent and assign a rent to each room in a way that no agent envies any other option. Our setting deviates from the classic Rent Sharing problem in a sense that the rent charged to each room must be divided among the members of the resident group.We define three notions to evaluate fairness, namely, weak envy-freeness, aggregate envy-freeness and strong envy-freeness. We also define three different policies to divide the cost among the group members, namely, equal, proportional, and free cost-sharing policies.We present several positive and negative results for different combinations of the fairness criteria and rent-division policies. Specifically, when the groups are pre-determined, we propose a strong envy-free solution that allocates the rooms to the agents, with free cost-sharing policy. In addition, for the case that the groups are not pre-determined, we propose a strong envy-free allocation algorithm with equal cost-sharing policy. We leverage our results to obtain an algorithm that determines the maximum total rent along with the proper allocation and rent-division method.},
booktitle = {Combinatorial Optimization and Applications: 12th International Conference, COCOA 2018, Atlanta, GA, USA, December 15-17, 2018, Proceedings},
pages = {577–591},
numpages = {15},
keywords = {Fairness, Envy-freeness, Rent sharing, House allocation},
location = {Atlanta, GA, USA}
}

@article{MS2022,
  title={Almost envy-freeness for groups: Improved bounds via discrepancy theory},
  author={Manurangsi, Pasin and Suksompong, Warut},
  journal={Theoretical Computer Science},
  volume={930},
  pages={179--195},
  year={2022},
  publisher={Elsevier}
}

@article{SS2020,
  title={How to cut a cake fairly: A generalization to groups},
  author={Segal-Halevi, Erel and Suksompong, Warut},
  journal={The American Mathematical Monthly},
  volume={128},
  number={1},
  pages={79--83},
  year={2020},
  publisher={Taylor \& Francis}
}

@article{SS2023,
  title={Cutting a cake fairly for groups revisited},
  author={Segal-Halevi, Erel and Suksompong, Warut},
  journal={The American Mathematical Monthly},
  volume={130},
  number={3},
  pages={203--213},
  year={2023},
  publisher={Taylor \& Francis}
}

@article{CKS2025,
  title={A new lower bound for multi-color discrepancy with applications to fair division},
  author={Caragiannis, Ioannis and Larsen, Kasper Green and Shyam, Sudarshan},
  journal={arXiv preprint arXiv:2502.10516},
  year={2025}
}

@article{MM2025,
  title={Tight Lower Bound for Multicolor Discrepancy},
  author={Manurangsi, Pasin and Meka, Raghu},
  journal={arXiv preprint arXiv:2504.18489},
  year={2025}
}

@inproceedings{CMV2025,
author = {Cembrano, Javier and Moraga, Andr\'{e}s and Verdugo, Victor},
title = {Near-feasible Fair Allocations in Two-sided Markets},
year = {2025},
isbn = {9798400719431},
publisher = {Association for Computing Machinery},
address = {New York, NY, USA},
url = {https://doi.org/10.1145/3736252.3742644},
doi = {10.1145/3736252.3742644},
abstract = {We study resource allocation in two-sided markets from a fundamental perspective and introduce a general modeling and algorithmic framework to effectively incorporate the complex and multidimensional aspects of fairness. Our main technical contribution is to show the existence of a range of near-feasible resource allocations parameterized in different model primitives to give flexibility when balancing the different policymaking requirements, allowing policy designers to fix these values according to the specific application. To construct our near-feasible allocations, we start from a fractional resource allocation and perform an iterative rounding procedure to get an integer allocation. We show a simple yet flexible and strong sufficient condition for the target feasibility deviations to guarantee that the rounding procedure succeeds, exhibiting the underlying trade-offs between market capacities, agents' demand, and fairness. To showcase our framework's modeling and algorithmic capabilities, we consider three prominent market design problems: school allocation, stable matching with couples, and political apportionment. In each of them, we obtain strengthened guarantees on the existence of near-feasible allocations capturing the corresponding fairness notions, such as proportionality, envy-freeness, and stability.},
booktitle = {Proceedings of the 26th ACM Conference on Economics and Computation},
pages = {898–915},
numpages = {18},
keywords = {fair allocation, rounding algorithms},
location = {Stanford University, Stanford, CA, USA},
series = {EC '25}
}

@article{NGUYEN2016,
title = {Assignment Problems with Complementarities},
journal = {Journal of Economic Theory},
volume = {165},
pages = {209-241},
year = {2016},
issn = {0022-0531},
doi = {https://doi.org/10.1016/j.jet.2016.04.006},
url = {https://www.sciencedirect.com/science/article/pii/S0022053116300151},
author = {Thành Nguyen and Ahmad Peivandi and Rakesh Vohra},
keywords = {One sided matching, Complementarities, Strategy-proof, Efficient, Envy-free, Mechanism},
abstract = {The problem of allocating bundles of indivisible objects without transfers arises in many practical settings, including the assignment of courses to students, of siblings to schools, and of truckloads of food to food banks. In these settings, the complementarities in preferences are small compared with the size of the market. We exploit this to design mechanisms satisfying constrained efficiency and asymptotic strategy-proofness. We introduce two mechanisms, one for cardinal and the other for ordinal preferences. When agents do not want bundles of size larger than k, these mechanisms over-allocate each good by at most k−1 units, ex-post. These results are based on a generalization of the Birkhoff–von Neumann theorem on how probability shares of bundles can be expressed as lotteries over approximately feasible allocations, which is of independent interest.}
}

@INPROCEEDINGS{CCK2009,
  author={Chakrabarty, Deeparnab and Chuzhoy, Julia and Khanna, Sanjeev},
  booktitle={2009 50th Annual IEEE Symposium on Foundations of Computer Science}, 
  title={On Allocating Goods to Maximize Fairness}, 
  year={2009},
  volume={},
  number={},
  pages={107-116},
  keywords={Approximation algorithms;Polynomials;Upper bound;Computer science;Information science;Iterative algorithms;Resource management;Approximation Algorithms;Allocation Problems},
  doi={10.1109/FOCS.2009.51}}

@article{AMS20,
  title = {A Polynomial-Time Algorithm for Computing a {{Pareto}} Optimal and Almost Proportional Allocation},
  author = {Aziz, Haris and Moulin, Herv{\'e} and Sandomirskiy, Fedor},
  year = {2020},
  journal = {Operations Research Letters},
  volume = {48},
  number = {5},
  pages = {573--578},
  publisher = {Elsevier},
  doi = {10.1016/j.orl.2020.07.005},
  file = {/Users/paul/Zotero/storage/JSFTW9WH/AMS20a.pdf;/Users/pgoelz/Library/Mobile Documents/com~apple~CloudDocs/Papers/ams20.pdf}
}

@inproceedings{BFG+22,
  title = {Fair {{Allocations}} for {{Smoothed Utilities}}},
  author = {Bai, Yushi and Feige, Uriel and G{\"o}lz, Paul and Procaccia, Ariel D.},
  year = {2022},
  doi = {10.1145/3490486.3538285},
  annotation = {Note: Forthcoming},
  booktitle = {Proceedings of the {{ACM Conference}} on {{Economics}} and {{Computation}} ({{EC}})}
}

@book{BT97,
  title = {Introduction to Linear Optimization},
  author = {Bertsimas, Dimitris and Tsitsiklis, John N.},
  year = {1997},
  series = {Athena {{Scientific}} Series in Optimization and Neural Computation},
  publisher = {Athena Scientific},
  address = {Belmont, Mass},
  lccn = {T57.74 .B465 1997},
  file = {/Users/pgoelz/Library/Mobile Documents/com~apple~CloudDocs/Papers/bt97.pdf}
}

@inproceedings{CGM20,
  title = {{{EFX}} Exists for Three Agents},
  author = {Chaudhury, Bhaskar Ray and Garg, Jugal and Mehlhorn, Kurt},
  year = {2020},
  pages = {1--19},
  doi = {10/gqcq7k},
  file = {/Users/pgoelz/Library/Mobile Documents/com~apple~CloudDocs/Papers/cgm20.pdf},
  booktitle = {Proceedings of the {{ACM Conference}} on {{Economics}} and {{Computation}} ({{EC}})}
}

@article{CKM+19,
  title = {The {{Unreasonable Fairness}} of {{Maximum Nash Welfare}}},
  author = {Caragiannis, Ioannis and Kurokawa, David and Moulin, Herv{\'e} and Procaccia, Ariel D. and Shah, Nisarg and Wang, Junxing},
  year = {2019},
  month = oct,
  journal = {ACM Transactions on Economics and Computation},
  volume = {7},
  number = {3},
  pages = {1--32},
  issn = {2167-8375, 2167-8383},
  doi = {10/gqcq9s},
  langid = {english},
  file = {/Users/paul/Zotero/storage/LV3E9B7E/3355902.pdf;/Users/pgoelz/Library/Mobile Documents/com~apple~CloudDocs/Papers/ckm+19.pdf;/Users/pgoelz/Library/Mobile Documents/com~apple~CloudDocs/Papers/ckm+2.pdf}
}

@inproceedings{FSV+19,
  title = {Equitable {{Allocations}} of {{Indivisible Goods}}},
  author = {Freeman, Rupert and Sikdar, Sujoy and Vaish, Rohit and Xia, Lirong},
  year = {2019},
  month = aug,
  pages = {280--286},
  doi = {10/gn8t3c},
  langid = {english},
  file = {/Users/pgoelz/Library/Mobile Documents/com~apple~CloudDocs/Papers/fsv+19.pdf},
  booktitle = {Proceedings of the {{International Joint Conference}} on {{Artificial Intelligence}} ({{IJCAI}})}
}

@article{GP14a,
  title = {Spliddit: {{Unleashing}} Fair Division Algorithms},
  shorttitle = {Spliddit},
  author = {Goldman, Jonathan and Procaccia, Ariel D.},
  year = {2014},
  journal = {ACM SIGecom Exchanges},
  volume = {13},
  number = {2},
  pages = {41--46},
  publisher = {ACM New York, NY, USA},
  doi = {10/gn8t3j},
  file = {/Users/pgoelz/Library/Mobile Documents/com~apple~CloudDocs/Papers/gp15.pdf;/Users/paul/Zotero/storage/LCNJMGS9/2728732.html}
}

@article{JA17,
  title = {On the Chromatic Number of Generalized {{Kneser}} Graphs},
  author = {Jafari, Amir and Alipour, Sharareh},
  year = {2017},
  journal = {Contributions to Discrete Mathematics},
  volume = {12},
  number = {2},
  pages = {69--76},
  file = {/Users/paul/Zotero/storage/P2AF69JU/JA17.pdf}
}

@article{Jain01,
  title = {A Factor 2 Approximation Algorithm for the Generalized {{Steiner}} Network Problem},
  author = {Jain, Kamal},
  year = {2001},
  journal = {Combinatorica},
  volume = {21},
  pages = {39--60},
  publisher = {Springer},
  doi = {10.1007/s004930170004},
  file = {/Users/paul/Zotero/storage/22RQSUI9/Jain01.pdf}
}

@inbook{Jukna11,
  title = {Intersecting {{Families}}},
  booktitle = {Extremal {{Combinatorics}}},
  author = {Jukna, Stasys},
  year = {2011},
  pages = {99--106},
  address = {Berlin, Heidelberg},
  doi = {10.1007/978-3-642-17364-6_7},
  collaborator = {Jukna, Stasys},
  langid = {english},
  file = {/Users/paul/Zotero/storage/RGPDA34B/Jukna11.pdf},
  publisher = {Springer}
}

@article{KSV20,
  title = {Almost Envy-Freeness in Group Resource Allocation},
  author = {Kyropoulou, Maria and Suksompong, Warut and Voudouris, Alexandros A.},
  year = {2020},
  journal = {Theoretical Computer Science},
  volume = {841},
  pages = {110--123},
  publisher = {Elsevier},
  doi = {10/gk46rq},
  file = {/Users/pgoelz/Library/Mobile Documents/com~apple~CloudDocs/Papers/ksv20.pdf;/Users/paul/Zotero/storage/8MUYSQMX/S0304397520303807.html}
}

@inproceedings{LMM+04,
  title = {On Approximately Fair Allocations of Indivisible Goods},
  author = {Lipton, Richard J. and Markakis, Evangelos and Mossel, Elchanan and Saberi, Amin},
  year = {2004},
  pages = {125--131},
  file = {/Users/pgoelz/Library/Mobile Documents/com~apple~CloudDocs/Papers/lmm+04.pdf;/Users/paul/Zotero/storage/LANMFKBR/988772.html},
  booktitle = {Proceedings of the {{ACM Conference}} on {{Economics}} and {{Computation}} ({{EC}})}
}

@article{MS17a,
  title = {Asymptotic Existence of Fair Divisions for Groups},
  author = {Manurangsi, Pasin and Suksompong, Warut},
  year = {2017},
  month = sep,
  journal = {Mathematical Social Sciences},
  volume = {89},
  pages = {100--108},
  issn = {01654896},
  doi = {10/gb2cj5},
  langid = {english},
  file = {/Users/pgoelz/Library/Mobile Documents/com~apple~CloudDocs/Papers/ms17a.pdf}
}

@article{Shah17,
  title = {Spliddit: Two Years of Making the World Fairer},
  shorttitle = {Spliddit},
  author = {Shah, Nisarg},
  year = {2017},
  month = sep,
  journal = {XRDS: Crossroads, The ACM Magazine for Students},
  volume = {24},
  number = {1},
  pages = {24--28},
  issn = {1528-4972, 1528-4980},
  doi = {10.1145/3123738},
  langid = {english}
}

\section*{Appendix}
\appendix
\section{Proof of \cref{thm:2v2}: Existence of \efone{} for Two Couples} \label{thm:2v2proof}
The following lemma is a simplified version of Lemma~4.1 in \textcite{KSV20}.
\begin{lemma}\label{lem:f1ef1}
   Every allocation that contains exactly one good from each pair $\{2j-1, 2j\}$ for all $j \in [m/2]$ is \efone{} for agent $(f, 1)$.
    \begin{proof}
        Let $\alpha_1, \alpha_2, \dots, \alpha_{m/2}$ and $\beta_1, \beta_2, \dots, \beta_{m/2}$ be the goods $f$ and $s$ receives respectively, where $\{\alpha_j, \beta_j\} = \{2j - 1, 2j\}$. As $u_{f1}(\alpha_j) \geq u_{f1}(\beta_{j+1})$, we have $\sum_{j =1}^{m/2} u_{f1}(\alpha_j) 
            \geq \sum_{j = 1}^{m/2 - 1} u_{f1}(\beta_{j+1})$.
        Therefore, once $\beta_1$, the most valuable good of $s$, is taken from them, agent $(f, 1)$ is no longer envious.
    \end{proof}
\end{lemma}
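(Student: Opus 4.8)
The plan is to exploit the sorted order $u_{f1}(1) \geq \dots \geq u_{f1}(m)$ together with the pairing structure to show directly that $(f,1)$'s bundle dominates the other group's bundle once a single good is removed. First I would fix notation: write $\alpha_1, \dots, \alpha_{m/2}$ for the goods received by $f$ and $\beta_1, \dots, \beta_{m/2}$ for the goods received by $s$, indexed so that $\{\alpha_j, \beta_j\} = \{2j-1, 2j\}$ for every $j \in [m/2]$. Since $(f,1)$'s envy is measured only against $B_s$, the entire task reduces to exhibiting a single good in $B_s$ whose removal makes $u_{f1}(B_f) \geq u_{f1}(B_s \setminus \{\text{that good}\})$.

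The key step is a per-pair comparison across \emph{consecutive} pairs. Since $\alpha_j \in \{2j-1, 2j\}$ we have $u_{f1}(\alpha_j) \geq u_{f1}(2j)$, and since $\beta_{j+1} \in \{2j+1, 2j+2\}$ we have $u_{f1}(\beta_{j+1}) \leq u_{f1}(2j+1)$; combining these with the sorted-order inequality $u_{f1}(2j) \geq u_{f1}(2j+1)$ yields $u_{f1}(\alpha_j) \geq u_{f1}(\beta_{j+1})$ for each $j$. Summing this over $j = 1, \dots, m/2 - 1$ and adding the nonnegative term $u_{f1}(\alpha_{m/2})$ on the left gives
\[
\sum_{j=1}^{m/2} u_{f1}(\alpha_j) \;\geq\; \sum_{j=1}^{m/2-1} u_{f1}(\beta_{j+1}) \;=\; \sum_{j=2}^{m/2} u_{f1}(\beta_j).
\]
The left-hand side is exactly $u_{f1}(B_f)$ and the right-hand side is $u_{f1}(B_s \setminus \{\beta_1\})$, so removing the single good $\beta_1$ from $s$'s bundle certifies \efone{} for $(f,1)$.

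This is fundamentally a sorting/counting argument, so I do not expect a serious obstacle; the one place that needs care is the index bookkeeping in the telescoping-style sum. Specifically, the per-pair inequalities only link pair $j$ to pair $j+1$, so the top term $u_{f1}(\alpha_{m/2})$ is left unmatched on the left (harmless, since it only strengthens the bound) while $u_{f1}(\beta_1)$ is left unmatched on the right — and that unmatched term is precisely the good we remove. I would also note, as a sanity check, that $\beta_1 \in \{1,2\}$ is in fact the most valuable good of $s$ from $(f,1)$'s viewpoint, so removing it is the natural choice, although \efone{} only requires the existence of \emph{some} such good.
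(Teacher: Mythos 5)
Your proof is correct and is essentially identical to the paper's own argument: the same per-pair inequality $u_{f1}(\alpha_j) \geq u_{f1}(\beta_{j+1})$ derived from the sorted order, summed over consecutive pairs, with the unmatched good $\beta_1$ removed from $s$'s bundle to certify \efone{}. Your extra bookkeeping (spelling out $u_{f1}(\alpha_j) \geq u_{f1}(2j) \geq u_{f1}(2j+1) \geq u_{f1}(\beta_{j+1})$ and noting that $\beta_1$ is indeed $s$'s most valuable good from $(f,1)$'s perspective) only makes the paper's terser argument more explicit.
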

For convenience, we repeat our LP formulation here:
{
\footnotesize
\begin{align}\label{LPd}
    \text{max}~&d \notag\\
     \text{s.t.}~&\sum_{\mathclap{j\in [m/2]}} (2y_j\!-\!1) \, u_{f2}(2j\!-\!1) + (1\!-\!2y_j) \, u_{f2}(2j)\geq \mathrlap{d}\notag\\
        &\sum_{\mathclap{j\in [m/2]}} (1\!-\!2y_j) \, u_{si}(2j\!-\!1) + (2y_j\!-\!1) \, u_{si}(2j) \geq d & i=1, 2\notag\\
        &0 \leq y_j \leq 1 & \mathllap{j = 1, \dots, m/2.} \tag{LP$d$}
\end{align}
}
\twovstwo*
\begin{proof}
    Let $(y^*, d^*)$ be an optimal BFS for the LP, which can be found efficiently (\cref{prop:optbfs}). 
Since setting all $y_j = 1/2$ and $d = 0$ is feasible, we know that $d^*$ is nonnegative and that $y^*$ describes a fractional allocation that is \ef{} for the three agents. We will round an optimal BFS of \ref{LPd} to an \efone{} allocation in which each group receives exactly one good from $\{2j-1, 2j\}$ for every $j\in [m/2]$. The resulting allocation is balanced and, by \cref{lem:f1ef1}, ensures \efone{} for agent $(f, 1)$.
Therefore, we focus on proving \efone{} for the other three agents.
   
Since $y^*$ has at most two fractional values, we can fix $1 \leq \alpha, \beta \leq m/2$ such that all variables except for $y^*_\alpha, y^*_\beta$ are integral. Let $I_f$ and $I_s$ be the set of remaining goods that are allocated entirely to the first and second group, respectively.
We can assume w.l.o.g.\ that $y^*_\alpha, y^*_\beta \geq 1/2$.\footnote{Otherwise, one can swap the roles of, say, goods $2 \alpha - 1$ and $2 \alpha$, which keeps $(f,1)$ \efone{} by \cref{lem:f1ef1}.}
For convenience, set $\alpha_f \coloneqq 2\alpha-1$ to be the good of which group $f$ receives a $y^*_\alpha$ fraction and group $s$ receives a $1 - y^*_\alpha$ fraction; $\alpha_s\coloneqq 2\alpha$ to be the good of which $s$ receives a $y^*_\alpha$ fraction and $f$ receives a $1-y^*_\alpha$ fraction; and analogously for $\beta_f, \beta_s$.
This allocation is illustrated in \cref{fig:appLPd}.

We now slightly modify the allocation $y^*$ to define a partial fractional allocation $z^*$, which will make the next steps of the proof more intuitive. Picture each group $f$ and $s$ transferring the overlapping parts of each good to a \emph{bank}. These overlapping portions include a $(1 - y^*_\alpha)$ fraction of $\alpha_f$ and $\alpha_s$, and a $(1 - y^*_\beta)$ fraction of $\beta_f$ and $\beta_s$. Consequently, each group $g$ is left with $z^*_\alpha \coloneqq y^*_\alpha - (1 - y^*_\alpha)$ of $\alpha_g$ and $z^*_\beta \coloneqq y^*_\beta - (1 - y^*_\beta)$ of $\beta_g$.
Since both groups surrender the same amount of each good, the envy of the agents is unaffected. We will then show how $z^*$ can be rounded to an \efone{} allocation. During the rounding process, if a good from $\{\alpha_f, \alpha_s, \beta_f, \beta_s\}$ is assigned to group $g$, then $g$ still receives the entire good \,---\, part from the groups, and part from the bank. Note that this modification of the initial fractional allocation does not affect the logic of the proof and is just a matter of presentation. $z^*$ is illustrated in \cref{fig:appLPdz}.

\begin{figure}[tb]
\begin{subfigure}[t]{0.38\textwidth}
\resizebox{\textwidth}{!}{
        \centering
\begin{tikzpicture}[
    node distance=0.4cm and 0.8cm,
    every node/.style={font=\large}
]

\begin{scope}[shift={(-7,0)}]
    \node at (0, 4.2) {\large Group $f$};
    \node (alpha_f_top) at (-1, 2.35) {$\alpha_f$\vphantom{$\alpha_{fs}$}};
    \node (box_green) at (-1, 3.2) {\tikz \pic{gift={boxcolor=mygreen, bowcolor=mybowcolor, split=0.6, topopacity=0.07}};};
    \draw[decorate, decoration={brace, amplitude=3pt}]  (-1.5, 3.01) -- (-1.5, 3.32);
    \node[left] at (-1.5,3.3) {\footnotesize{$1{-}y^*_{\alpha}$}};
    \node (box_red) at (1.5, 3.2) {\tikz \pic{gift={boxcolor=myred, bowcolor=mybowcolor, split=0.4, topopacity=0.07}};};
    \node (alpha_s_top) at (1.5, 2.35) {$\alpha_s$\vphantom{$\alpha_{fs}$}};
    \draw[decorate, decoration={brace, amplitude=3pt}]  (-1.5, 2.58) -- (-1.5, 3.01);
    \node[left] at (-1.5,2.85) {\footnotesize{$y^*_{\alpha}$}};
    \draw[thick] (box_red) -- (box_green);
    \node (beta_f_bottom) at (-1, 0.95) {$\beta_f$\vphantom{$\beta_{fs}$}};
    \node (box_blue) at (-1, 1.8) {\tikz \pic{gift={boxcolor=myblue, bowcolor=mybowcolor, split=0.75, topopacity=0.07}};};
    \draw[decorate, decoration={brace, amplitude=3pt}]  (-1.5, 1.74) -- (-1.5, 1.92);
    \node[left] at (-1.5,1.9) {\footnotesize{$1{-}y^*_{\beta}$}};
    \node (box_orange) at (1.5, 1.8) {\tikz \pic{gift={boxcolor=myorange, bowcolor=mybowcolor, split=0.25, topopacity=0.07}};};
    \node (beta_s_bottom) at (1.5, 0.95) {$\beta_s$\vphantom{$\beta_{fs}$}};
    \draw[decorate, decoration={brace, amplitude=3pt}]  (-1.5, 1.18) -- (-1.5, 1.74);
    \node[left] at (-1.5,1.45) {\footnotesize{$y^*_{\beta}$}};
    \draw[thick] (box_blue) -- (box_orange);
    \begin{scope}[yshift=0.8cm]
        \draw (-1.58, -3.05) rectangle (-0.41, -0.2);
        \node at (-0.95, -3.4) {$I_f$};
        \node (l1) at (-1, -0.75) {\tikz \pic{gift={boxcolor=darkgray, bowcolor=mybowcolor}};};
        \node (r1) at (1.5, -0.75) {\tikz \pic{gift={boxcolor=darkgray, bowcolor=mybowcolor, bottomopacity=0.07, topopacity=0.07}};};
        \draw[thick] (l1) -- (r1);
        \node (l2) at (-1, -2.25) {\tikz \pic{gift={boxcolor=darkgray, bowcolor=mybowcolor}};};
        \node (r2) at (1.5, -2.25) {\tikz \pic{gift={boxcolor=darkgray, bowcolor=mybowcolor, bottomopacity=0.07, topopacity=0.07}};};
        \draw[thick] (l2) -- (r2);
        \node at (0.25, -1.5) {\LARGE{$\vdots$}};
    \end{scope}
\end{scope}

\begin{scope}[shift={(-2,0)}]
    \node at (0, 4.2) {\large Group $s$};
    \node (alpha_f_top_s) at (-1, 2.35) {$\alpha_f$};
    \node (box_green_s) at (-1, 3.2) {\tikz \pic{gift={boxcolor=mygreen, bowcolor=mybowcolor, split=0.6, bottomopacity=0.07}};};
    \node (box_red_s) at (1.5, 3.2) {\tikz \pic{gift={boxcolor=myred, bowcolor=mybowcolor, split=0.4, bottomopacity=0.07}};};
    \node (alpha_s_top_s) at (1.5, 2.35) {$\alpha_s$};
    \draw[decorate, decoration={brace, amplitude=3pt}]  (2, 3.32) -- (2, 2.89);
    \node[left] at (2.6,3.2) {\footnotesize{$y^*_{\alpha}$}};
    \draw[thick] (box_red_s) -- (box_green_s);
    \draw[decorate, decoration={brace, amplitude=3pt}]  (2, 2.89) -- (2, 2.58);
    \node[left] at (2.97,2.75) {\footnotesize{$1{-}y^*_{\alpha}$}};
    \node (beta_f_bottom_s) at (-1, 0.95) {$\beta_f$};
    \node (box_blue_s) at (-1, 1.8) {\tikz \pic{gift={boxcolor=myblue, bowcolor=mybowcolor, split=0.75, bottomopacity=0.07}};};
    \draw[decorate, decoration={brace, amplitude=3pt}]  (2, 1.92) -- (2, 1.37);
    \node[left] at (2.6,1.7) {\footnotesize{$y^*_{\beta}$}};
    \node (box_orange_s) at (1.5, 1.8) {\tikz \pic{gift={boxcolor=myorange, bowcolor=mybowcolor, split=0.25, bottomopacity=0.07}};};
    \node (beta_s_bottom_s) at (1.5, 0.95) {$\beta_s$};
    \draw[thick] (box_blue_s) -- (box_orange_s);
    \draw[decorate, decoration={brace, amplitude=3pt}]  (2, 1.37) -- (2, 1.18);
    \node[left] at (2.97,1.25) {\footnotesize{$1{-}y^*_{\beta}$}};
    \begin{scope}[yshift=0.8cm]
        \draw (0.92, -3.05) rectangle (2.1, -0.2);
        \node at (1.5, -3.4) {$I_s$};
        \node (l1_s) at (-1, -0.75) {\tikz \pic{gift={boxcolor=darkgray, bowcolor=mybowcolor, bottomopacity=0.07, topopacity=0.07}};};
        \node (r1_s) at (1.5, -0.75) {\tikz \pic{gift={boxcolor=darkgray, bowcolor=mybowcolor}};};
        \draw[thick] (l1_s) -- (r1_s);
        \node (l2_s) at (-1, -2.25) {\tikz \pic{gift={boxcolor=darkgray, bowcolor=mybowcolor, bottomopacity=0.07, topopacity=0.07}};};
        \node (r2_s) at (1.5, -2.25) {\tikz \pic{gift={boxcolor=darkgray, bowcolor=mybowcolor}};};
        \draw[thick] (l2_s) -- (r2_s);
        \node at (0.25, -1.5) {\LARGE{$\vdots$}};
    \end{scope}
\end{scope}
\end{tikzpicture}%
}
\caption{Illustration of $y^*$. Each good is paired with the good next to it. The bright, solid portion of a box represents the fraction of the good a group receives, while the faded portion represents the fraction allocated to the other group.}\label{fig:appLPd}
\end{subfigure}
\hfill
\begin{subfigure}[t]{0.57\textwidth}
    \centering
    \resizebox{\textwidth}{!}{
\begin{tikzpicture}[
    node distance=0.4cm and 0.8cm,
    every node/.style={font=\large}
]

\begin{scope}[shift={(-7,0)}]
    \node at (0, 4.2) {\large Group $f$};
    \node (alpha_f_top) at (-1, 2.35) {$\alpha_f$\vphantom{$\alpha_{fs}$}};
    \node (box_green) at (-1, 3.2) {\tikz \pic{gift={boxcolor=mygreen, bowcolor=mybowcolor, split=0.2, topopacity=0.07}};};

    \node (box_red) at (1.5, 3.2) {\tikz \pic{gift={boxcolor=myred, bowcolor=mybowcolor, topopacity=0.07, bottomopacity=0.07}};};
    \node (alpha_s_top) at (1.5, 2.35) {$\alpha_s$\vphantom{$\alpha_{fs}$}};
    \draw[decorate, decoration={brace, amplitude=3pt}]  (-1.5, 2.58) -- (-1.5, 2.72);
    \node[left] at (-1.5,2.65) {\footnotesize{$z^*_{\alpha}=2y^*_\alpha{-}1$}};
    \draw[thick] (box_red) -- (box_green);
    \node (beta_f_bottom) at (-1, 0.95) {$\beta_f$\vphantom{$\beta_{fs}$}};
    \node (box_blue) at (-1, 1.8) {\tikz \pic{gift={boxcolor=myblue, bowcolor=mybowcolor, split=0.5, topopacity=0.07}};};
    \node (box_orange) at (1.5, 1.8) {\tikz \pic{gift={boxcolor=myorange, bowcolor=mybowcolor, topopacity=0.07, bottomopacity=0.07}};};
    \node (beta_s_bottom) at (1.5, 0.95) {$\beta_s$\vphantom{$\beta_{fs}$}};
    \draw[decorate, decoration={brace, amplitude=3pt}]  (-1.5, 1.18) -- (-1.5, 1.55);
    \node[left] at (-1.5,1.35) {\footnotesize{$z^*_\beta = 2y^*_{\beta}{-}1$}};
    \draw[thick] (box_blue) -- (box_orange);
    \begin{scope}[yshift=0.8cm]
        \draw (-1.58, -3.05) rectangle (-0.41, -0.2);
        \node at (-0.95, -3.4) {$I_f$};
        \node (l1) at (-1, -0.75) {\tikz \pic{gift={boxcolor=darkgray, bowcolor=mybowcolor}};};
        \node (r1) at (1.5, -0.75) {\tikz \pic{gift={boxcolor=darkgray, bowcolor=mybowcolor, bottomopacity=0.07, topopacity=0.07}};};
        \draw[thick] (l1) -- (r1);
        \node (l2) at (-1, -2.25) {\tikz \pic{gift={boxcolor=darkgray, bowcolor=mybowcolor}};};
        \node (r2) at (1.5, -2.25) {\tikz \pic{gift={boxcolor=darkgray, bowcolor=mybowcolor, bottomopacity=0.07, topopacity=0.07}};};
        \draw[thick] (l2) -- (r2);
        \node at (0.25, -1.5) {\LARGE{$\vdots$}};
    \end{scope}
\end{scope}

\begin{scope}[shift={(-2,0)}]
    \node at (0, 4.2) {\large Group $s$};
    \node (alpha_f_top_s) at (-1, 2.35) {$\alpha_f$};
    \node (box_green_s) at (-1, 3.2) {\tikz \pic{gift={boxcolor=mygreen, bowcolor=mybowcolor, bottomopacity=0.07, topopacity=0.07}};};
    \node (box_red_s) at (1.5, 3.2) {\tikz \pic{gift={boxcolor=myred, bowcolor=mybowcolor, split=0.2, topopacity=0.07}};};
    \node (alpha_s_top_s) at (1.5, 2.35) {$\alpha_s$};
    \draw[thick] (box_red_s) -- (box_green_s);
    \draw[decorate, decoration={brace, amplitude=3pt}]  (2, 2.72) -- (2, 2.58);
    \node[left] at (2.58,2.66) {\footnotesize{$z^*_\alpha$}};
    \node (beta_f_bottom_s) at (-1, 0.95) {$\beta_f$};
    \node (box_blue_s) at (-1, 1.8) {\tikz \pic{gift={boxcolor=myblue, bowcolor=mybowcolor, bottomopacity=0.07, topopacity=0.07}};};
    \node (box_orange_s) at (1.5, 1.8) {\tikz \pic{gift={boxcolor=myorange, bowcolor=mybowcolor, split=0.5, topopacity=0.07}};};
    \node (beta_s_bottom_s) at (1.5, 0.95) {$\beta_s$};
    \draw[thick] (box_blue_s) -- (box_orange_s);
    \draw[decorate, decoration={brace, amplitude=3pt}]  (2, 1.56) -- (2, 1.18);
    \node[left] at (2.58,1.36) {\footnotesize{$z^*_\beta$}};
    \begin{scope}[yshift=0.8cm]
        \draw (0.92, -3.05) rectangle (2.1, -0.2);
        \node at (1.5, -3.4) {$I_s$};
        \node (l1_s) at (-1, -0.75) {\tikz \pic{gift={boxcolor=darkgray, bowcolor=mybowcolor, bottomopacity=0.07, topopacity=0.07}};};
        \node (r1_s) at (1.5, -0.75) {\tikz \pic{gift={boxcolor=darkgray, bowcolor=mybowcolor}};};
        \draw[thick] (l1_s) -- (r1_s);
        \node (l2_s) at (-1, -2.25) {\tikz \pic{gift={boxcolor=darkgray, bowcolor=mybowcolor, bottomopacity=0.07, topopacity=0.07}};};
        \node (r2_s) at (1.5, -2.25) {\tikz \pic{gift={boxcolor=darkgray, bowcolor=mybowcolor}};};
        \draw[thick] (l2_s) -- (r2_s);
        \node at (0.25, -1.5) {\LARGE{$\vdots$}};
    \end{scope}
\end{scope}

\begin{scope}[shift={(3,0)}]
    \node at (-1, 4.2) {\large Bank};
    \node (alpha_f_top_s) at (-1, 2.35) {$\alpha_f$};
    \node (box_green_s) at (-1, 3.2) {\tikz \pic{gift={boxcolor=mygreen, bowcolor=mybowcolor, split =0.8, topopacity=0.07}};};
    \draw[decorate, decoration={brace, amplitude=3pt}]  (-0.5, 3.17) -- (-0.5, 2.58);
    \node[left] at (2.1,2.86) {\footnotesize{$1{-}z^*_\alpha=2(1{-}y^*_\alpha)$}};
    \node (beta_f_bottom_s) at (-1, 0.95) {$\alpha_s$};
    \node (box_blue_s) at (-1, 1.8) {\tikz \pic{gift={boxcolor=myred, bowcolor=mybowcolor, split = 0.8, topopacity=0.07}};};
    \draw[decorate, decoration={brace, amplitude=3pt}]  (-0.5, 1.77) -- (-0.5, 1.18);
    \node[left] at (2.1,1.45) {\footnotesize{$1{-}z^*_\alpha=2(1{-}y^*_\alpha)$}};
    \begin{scope}[yshift=0.8cm]
        \node (l1_s) at (-1, -0.75) {\tikz 
        \pic{gift={boxcolor=myblue, bowcolor=mybowcolor, split= 0.5, topopacity=0.07}};};
        \draw[decorate, decoration={brace, amplitude=3pt}]  (-0.5, -1) -- (-0.5, -1.37);
        \node[left] at (2.1,-1.19) {\footnotesize{$1{-}z^*_\beta=2(1{-}y^*_\beta)$}};
        \node (alpha_s_top_s) at (-1, -1.6) {$\beta_f$};
        \node (l2_s) at (-1, -2.25) {\tikz \pic{gift={boxcolor=myorange, bowcolor=mybowcolor, split=0.5, topopacity=0.07}};};
        \draw[decorate, decoration={brace, amplitude=3pt}]  (-0.5, -2.5) -- (-0.5, -2.87);
        \node[left] at (2.1, -2.69) {\footnotesize{$1{-}z^*_\beta=2(1{-}y^*_\beta)$}};
        \node (alpha_s_top_s) at (-1, -3.1) {$\beta_s$};
    \end{scope}
\end{scope}
\end{tikzpicture}%
}
\caption{Illustration of $z^*$, where the bright, solid region shows the part of the good held by the group or the bank.}\label{fig:appLPdz}
\end{subfigure}
\caption{}
\end{figure}

We use $g$ to denote the group under consideration, and $g'$ for the other group. When we say a bundle $B$ is \ef{} (or \efone{}) for $(g, i)$, we mean the allocation $\{B, M\setminus B\}$ is \ef{} (or \efone{}) for that agent.
We are now ready to describe the rounding procedure. As we intended to give each groups exactly one good from each pair of goods, we have four rounding options, where $f$ receives $\{\alpha_f, \beta_f\}, \{\alpha_f, \beta_s\}, \{\alpha_s, \beta_f\}$, and $\{\alpha_s, \beta_s\}$, and $s$ receives the compliment. Among these, we refer to the one that assigns $\{\alpha_g,\beta_g\}$ to group $g$ \emph{natural rounding}. We have two main cases:
    \paragraph*{Case 1: $z^*_{\alpha} + z^*_{\beta} \geq 1$.} In this case, each group $g$ receives at least one combined unit of good from $\{\alpha_g, \beta_g\}$. As a result, applying the natural rounding adds at most one unit of combined goods to each group’s bundle, which increases the envy of other agents by no more than one unit. For the agent $(g, i)$ we have:
    {
    \footnotesize
    \begin{align*}
            u_{gi}(I_g) + u_{gi}(\alpha_g) + u_{gi}(\beta_g)
            \geq 
            &\ u_{gi}(I_g) + z^*_{\alpha}\, u_{gi}(\alpha_g) + z^*_{\beta}\, u_{gi}(\beta_g) \tag{as $z^*_{\alpha}, z^*_{\beta} \in [0, 1]$}\\
            \geq &\ u_{gi}(I_{g'}) + z^*_{\alpha}\, u_{gi}(\alpha_{g'}) + z^*_{\beta}\, u_{gi}(\beta_{g'}) \tag{by envy-freeness of $z^*$}\\
            \geq &\ u_{gi}(I_{g'}) + (z^*_{\alpha} + z^*_{\beta})\, \min(u_{gi}(\alpha_{g'}), u_{gi}(\beta_{g'})) \\
            \geq &\ u_{gi}(I_{g'}) + \min(u_{gi}(\alpha_{g'}), u_{gi}(\beta_{g'})) \tag{as $z^*_{\alpha} + z^*_{\beta} \geq 1$}
    \end{align*}
        }
        
        \paragraph*{Case 2: $z^*_{\alpha} + z^*_{\beta} < 1$.}
        We call agent $(g, i)$ \emph{unhappy} with $\alpha_g$ if they prefer $\alpha_{g'}$ over it (where $g'$ is the other group) and unhappy with $\beta_g$ if they prefer $\beta_{g'}$. We have the following observations regarding our three agents:
        \begin{enumerate}[leftmargin=*,labelindent=.65em]
        \item[(A)] If an agent $(g,i)$ is unhappy with both $\alpha_g$ and $\beta_g$, they are \efone{} for all rounding options except the natural one.
            \begin{proof}
                As the agents is unhappy with both $\alpha_g$ and $\beta_g$, we have $u_{gi}(\alpha_g) \leq u_{gi}(\alpha_{g'})$, and $u_{gi}(\beta_{g}) \leq u_{gi}(\beta_{g'})$.  Also, by envy-freeness of $(g, i)$ we have
                {
                \footnotesize
                \begin{align}
                u_{gi}(I_g) + z^*_{\alpha}\, u_{gi}(\alpha_g) + z^*_{\beta}\, u_{gi}(\beta_g)
                \geq u_{gi}(I_{g'}) + z^*_{\alpha}\, u_{gi}(\alpha_{g'}) + z^*_{\beta}\, u_{gi}(\beta_{g'})\label{eq:ef}
                \end{align}
                }
                
                which implies $u_{gi}(I_g) \geq u_{gi}(I_{g'})$. 
                 Hence, $(g,i)$ is \efone{} as long as she receives one of her preferred goods $\alpha_{g'}$ or $\beta_{g'}$.
            \end{proof}
            \item[(B)] If an agent is happy with at least one of $\alpha_g$ or $\beta_g$, they are \efone{} for the natural rounding and at least one other rounding option.
            \begin{proof}
            Note that as $(g, i)$ is happy with at least one of $\alpha_g$ and $\beta_g$, we have
            {
            \footnotesize
            \begin{align}
                \max(u_{gi}(\alpha_g) - u_{gi}(\alpha_{g'}), u_{gi}(\beta_g) - u_{gi}(\beta_{g'})) \geq 0\label{eq:max}
            \end{align}
            }
            
            Hence, by rearranging the terms of \cref{eq:ef} we get
                {
                \footnotesize
                \begin{align*}
                    u_{gi}(I_{g'}) - u_{gi}(I_g)
                    \leq &\ z^*_{\alpha}\, (u_{gi}(\alpha_g) - u_{gi}(\alpha_{g'})) + z^*_{\beta}\, (u_{gi}(\beta_g) - u_{gi}(\beta_{g'}))\\
                    \leq &\ (z^*_{\alpha} +z^*_{\beta})\, \max(u_{gi}(\alpha_g) - u_{gi}(\alpha_{g'}), u_{gi}(\beta_g) - u_{gi}(\beta_{g'}))\\
                    \leq &\ \max(u_{gi}(\alpha_g) - u_{gi}(\alpha_{g'}), u_{gi}(\beta_g) - u_{gi}(\beta_{g'}))\tag{as $z^*_{\alpha} + z^*_{\beta} < 1$ and by \cref{eq:max}}\\
                    =&\  u_{gi}(\alpha_g) - u_{gi}(\alpha_{g'})\tag{w.l.o.g.}
                \end{align*}
                }
                Again, by rearranging the terms, we get $u_{gi}(I_g) + u_{gi}(\alpha_g)
                    \geq u_{gi}(I_{g'}) + u_{gi}(\alpha_{g'})$ which shows both $I_g \cup \{\alpha_g, \beta_{g}\}$, the allocation given by the trivial rounding, and $I_g \cup \{\alpha_g, \beta_{g'}\}$ are \efone{} for $(g,i)$.
            \end{proof}
             \item[(C)] If an agent is not \efone{} under some rounding option, then they become \efone{} under the other three options if we \emph{swap the integral parts} $I_f$ and $I_s$.
            \begin{proof}
            Assume $I_g \cup \{\alpha, \beta\}$ is not \efone{} for $(g, i)$ where $\alpha \in \{\alpha_f, \alpha_s\}$ and  $\beta \in \{\beta_f, \beta_s\}$. And let $\alpha' \in \{\alpha_f, \alpha_s\} \setminus \{\alpha\}$ and $\beta' \in \{\beta_f, \beta_s\} \setminus \{\beta\}$ be the other two goods. As $(g,i)$ is not \efone, we have:
             {
                \footnotesize
                \begin{align*}
                    &u_{gi}(I_g) + u_{gi}(\alpha) + u_{gi}(\beta) < u_{gi}(I_{g'}) + u_{gi}(\alpha') \\
                    &u_{gi}(I_g) + u_{gi}(\alpha) + u_{gi}(\beta) < u_{gi}(I_{g'}) +u_{gi}(\beta')
                \end{align*}
                }
                This implies if $(g, i)$ receives $I_{g'}$ and at least one of the $\alpha'$ or $\beta'$, she will be \efone.
            \end{proof}
        \end{enumerate}
        Given these observations, the analysis splits into two subcases:
        \begin{itemize}
           \item \textbf{One of the three agents is \efone{} under all four rounding options.}  
           As a result of observations (A) and (B), any two agents have at least one \efone{} rounding option in common: either natural rounding is \efone{} for both, or an overlap is assured because one agent is \efone{} under three rounding options, and the other one is \efone{} under at least two rounding options.

            The existence of an \efone{} allocation is guaranteed as one of the agents is \efone{} under all four rounding options, and the other two agents have an \efone{} rounding option in common.
            
            \item \textbf{None of the agents is \efone{} under all four rounding options.}  
            Then by observation (C), swapping $I_f$ and $I_s$ ensures that each agent becomes \efone{} under at least three of the four options—eliminating only one. Hence, there exists at least one rounding option that is \efone{} for all three agents.
\end{itemize}
        Note that, in all rounding schemes above, one good from each pair \( \{2j-1, 2j\} \) is allocated to each group, as intended.
        
        The algorithm runs in polynomial time by first computing an optimal BFS of \ref{LPd}, which can be done in polynomial time by \cref{prop:optbfs}, and then applying a linear-time rounding procedure.
\end{proof}

\section{Proof of \cref{lem:more couples}: Non-Existence of \efone{} for Three or More Couples} \label{app:cex ef1}
\efonecex*
\begin{proof}
        Consider an instance with four special goods $\{1, 2, 3, 4\}$ and $n-2$ goods with value $1$ for every agent. For each group $g$, the utilities $u_{g1}(\cdot)$ and $u_{g2}(\cdot)$ restricted to the special goods follow one of the valuation pairs listed in the table below. Also, each of the below pairs occurs for at least one group.
        \begin{table}[H] 
        \centering
        \footnotesize
        \begin{tabular}{ccccc}
            \toprule
             valuation & $1$ & $2$ & $3$ & $4$\\
             \midrule
             $u_{g1}$ & $2$ & $2$ & $0$ & $0$\\
             $u_{g2}$ & $0$ & $0$ & $2$ & $2$\\
             \midrule
             $u_{g1}$ & $0$ & $2$ & $0$ & $2$\\
             $u_{g2}$ & $2$ & $0$ & $2$ & $0$\\
             \midrule
             $u_{g1}$ & $2$ & $0$ & $0$ & $2$\\
             $u_{g2}$ & $0$ & $2$ & $2$ & $0$\\
             \bottomrule
        \end{tabular}
        \label{tab:cex ef1}
        \end{table}

        Each agent has positive valuation for $n$ goods: two with value $2$ and $n-2$ with value $1$. 
        The agent must receive at least one such good since, otherwise, some other group receives two or more of those goods, violating \efone.

        For the sake of contradiction, suppose that an \efone{} allocation exists. With $n+2$ goods and $n$ groups, and since each group must get at least one good, at most two groups can receive more than one good. Therefore, $n-2$ groups receive only one good, and since this good must have positive value for both agents in the group, it can not be among the special goods $\{1, 2, 3, 4\}$. W.l.o.g.\ let the first $n-2$ groups each get one non-special good. Since no special good is liked by both agents in a group, the remaining two groups, $n-1$ and $n$, must receive two of the special goods each. Now, consider the two goods given to the group $n$. By construction, there must be some agent $(g, i)$ for whom each of these two goods has value $2$. Observe that these two goods must then have utility $0$ for $(g, i)$'s partner $(g, i')$. If $g$ is the group $n$, then $(g, i')$ receives value $0$ and must be envious. Otherwise, if $g$ is another group, $(g, i)$ envies the group $n$ by more than one good.
        \end{proof}

\section{Details of \cref{sec:prop:iterative} and the Iterative Rounding Algorithm}\label{app:iter_alg}
\subsection{On Fractional Pareto optimality}
For a set $S \subseteq \mathbb{R}^n$, we say $x \in S$ is \emph{Pareto optimal} if there exists no $y \in S$ such that $y \neq x$ and $y \geq x$. For completeness, we begin by providing a proof of the following proposition.

\begin{proposition}\label{prop:Pareto}
     Let $P=\{ x\in \mathbb{R}^n: Ax \leq b\}$ be a polyhedron. Then $p \in P$ is \emph{Pareto optimal} iff there exists $w > 0$ such that $p$ maximizes  $w^T x$ over $P$, i.e., $\max\{w^Tx: x\in P\} = w^Tp$. 
\end{proposition}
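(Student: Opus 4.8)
The plan is to prove both directions of the equivalence, with the reverse implication (maximizer $\Rightarrow$ Pareto optimal) being immediate and the forward implication (Pareto optimal $\Rightarrow$ maximizer) carried by linear programming duality. For the easy direction, suppose $p$ maximizes $w^T x$ over $P$ for some $w > 0$. If $p$ were not Pareto optimal, there would be $y \in P$ with $y \geq p$ and $y \neq p$; then $y - p$ is nonnegative with at least one strictly positive coordinate, so $w^T(y - p) > 0$ because every weight $w_i$ is strictly positive. This gives $w^T y > w^T p$, contradicting optimality.

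For the hard direction, I would first reformulate Pareto optimality as a statement about an auxiliary LP. Observe that $p$ is Pareto optimal exactly when no $y \in P$ satisfies $y \geq p$ and $\mathbf{1}^T y > \mathbf{1}^T p$, since a weak domination ($y \geq p$, $y \neq p$) is equivalent to $y \geq p$ together with $\mathbf{1}^T y > \mathbf{1}^T p$. Equivalently, the linear program $\max\{\mathbf{1}^T y : Ay \leq b,\ y \geq p\}$ is feasible (witnessed by $y = p$), bounded above by $\mathbf{1}^T p$, and attains its optimum $\mathbf{1}^T p$ at $y = p$.

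Next I would invoke strong LP duality. Writing the primal constraints as $Ay \leq b$ and $-y \leq -p$, the dual reads $\min\{b^T \lambda - p^T \mu : A^T \lambda - \mu = \mathbf{1},\ \lambda \geq 0,\ \mu \geq 0\}$, and strong duality yields an optimal dual solution $(\lambda^*, \mu^*)$ with $b^T \lambda^* - p^T \mu^* = \mathbf{1}^T p$. I then set $w \coloneqq \mathbf{1} + \mu^*$; since $\mu^* \geq 0$, we get $w \geq \mathbf{1} > 0$, so $w$ is strictly positive as required. The dual feasibility equation gives $A^T \lambda^* = \mathbf{1} + \mu^* = w$ with $\lambda^* \geq 0$, so $\lambda^*$ is feasible for the dual of $\max\{w^T y : Ay \leq b\}$, and weak duality yields $w^T y \leq b^T \lambda^*$ for every $y \in P$. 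A short computation then shows $b^T \lambda^* = \mathbf{1}^T p + p^T \mu^* = w^T p$, whence $w^T y \leq w^T p$ for all $y \in P$, i.e., $p$ maximizes $w^T x$ over $P$.

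The main obstacle is the forward direction's requirement of a \emph{strictly} positive weight vector rather than merely a nonnegative one: a generic supporting hyperplane at $p$ (e.g., one obtained by separating the weak-domination cone) could have some zero coordinates. The construction $w = \mathbf{1} + \mu^*$ coming from the dual of the ``$\mathbf{1}^T y$'' program is exactly what forces strict positivity while simultaneously certifying, through the same dual vector $\lambda^*$, that $p$ optimizes the new objective $w$. The one subtlety to check carefully is that the auxiliary LP is bounded\,---\,which is precisely the content of Pareto optimality\,---\,so that strong duality applies.
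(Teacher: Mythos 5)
Your proof is correct, but it takes a genuinely different route from the paper. The paper argues by contradiction using convex geometry: it characterizes the normal cone $\mathcal{N}_P(p)$ as the cone generated by the rows of the tight constraints, assumes $\mathcal{N}_P(p)$ misses the open positive orthant, and applies the separating hyperplane theorem to produce a direction $c \geq 0$, $c \neq 0$ along which $p$ can be feasibly perturbed, contradicting Pareto optimality. You instead encode Pareto optimality as optimality of $p$ in the auxiliary program $\max\{\mathbf{1}^T y : Ay \leq b,\ y \geq p\}$ (note that Pareto optimality in fact forces the feasible region of this program to be the single point $\{p\}$, so boundedness is immediate), and then extract the weights directly from a dual optimal solution via $w = \mathbf{1} + \mu^*$, with the same dual multipliers $\lambda^*$ certifying through weak duality that $p$ maximizes $w^T x$ over $P$. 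Your scalarization argument is constructive rather than by contradiction, needs only standard LP strong duality (which the paper already relies on elsewhere, e.g., for basic feasible solutions), and cleanly isolates the one delicate point\,---\,strict positivity of $w$\,---\,by building it in through the $\mathbf{1} + \mu^*$ shift; the paper's argument, by contrast, needs the (nontrivial, Farkas-type) identification of the normal cone with the cone of tight rows plus a separation argument, but in exchange gives a more geometric picture of why the statement is true.
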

\begin{proof}
    $\Longleftarrow$: If $w > 0$ and $w^T p = \max\{w^T x : x \in P\}$, then $p$ is Pareto optimal; otherwise, a dominating $p'$ would satisfy $w^T p' > w^T p$.

    $\Longrightarrow$: The normal cone of $P$ at point $p$ is defined as $\mathcal{N}_P(p) = \{w \in \mathbb{R}^n : w^T p \geq w^T x \ \forall x \in P\}$. It consists of all $w \in \mathbb{R}^n$ for which $p$ is a maximizer of $w^T x$ over $P$.
    Let $A^= p = b^=$ be the tight constraints of $Ax \leq b$ at point $p$. It is known that $\mathcal{N}_P(p) = \text{cone}\{a_1^=, a_2^=, \dots, a_k^=\}$ where $a_i^=$ is the $i$-th row of $A^=$. Also, as $p$ is Pareto optimal, at least one of the constraints $Ap \leq b$ must be tight\footnote{Otherwise, $p+\epsilon\mathbf{1}$ will be feasible for small enough $\epsilon$.} implying $\mathcal{N}_P(p)\ne \emptyset$.

    We want to show $\mathcal{N}_P(p)$ contains a $w > 0$. For the sake of contradiction, assume $\mathcal{N}_P(p) \cap \mathbb{R}_{>0}^n = \emptyset$. As both $\mathcal{N}_P(p)$ and $\mathbb{R}_{>0}^n$ are convex and non-empty, by the seperating hyperplane theorem, there exists $b\in \mathbb{R}$ and $c \in \mathbb{R}^n\setminus \{0\}$ such that $c^Tx \leq b \leq c^Ty$ for every $x\in \mathcal{N}_P(p)$ and  $y\in \mathbb{R}^n_{> 0}$.
    As $\epsilon \mathbf{1} \in \mathbb{R}^n_{> 0}$ for every $\epsilon >0$, we have $b \leq \lim_{\epsilon \to 0} c^T \epsilon \mathbf{1} = 0$. Similarly, for an $x\in \mathcal{N}_P(p)$ and $\epsilon > 0$ we have $\epsilon x\in \mathcal{N}_P(p)$ implying $b \geq \lim_{\epsilon \to 0} c^T\epsilon x = 0$ giving $b = 0$. As $c^Ty \geq 0$ for every $y \in \mathbb{R}^n_{> 0}$, then $c \geq 0$. As $c^Tx \leq 0$ for every $x\in \mathcal{N}_P(p)$ and in particular for $a_i^=$, we have that $A^=c \leq 0$ which implies there exists a small enough $\epsilon > 0$ such that $A(p+\epsilon a) \leq b$. Therefore, $p+\epsilon c \in P$, and as $c \geq 0$ and $c\ne 0$, this contradicts the Pareto optimality of $p$. 
\end{proof}
We next prove the following result by \textcite{VARIAN} for our group setting.
\begin{lemma}\label{lem:wuw}
    Considering a fair division instance $(\mathcal{G}, M, \mathcal{U})$, a fractional allocation $x$ is \fpo{} iff there exists a weight vector $(w_{gi})_{g\in \mathcal{G}, i\in [|g|]} > 0$ such that
    {
        \footnotesize
        \begin{align*}
            x_{\alpha g} > 0 \Longrightarrow g\in \arg\max_{g' \in \mathcal{G}} \sum_{i \in [|g'|]} w_{g'i}(\alpha)\, u_{g'i}(\alpha)
        \end{align*}
    }
\end{lemma}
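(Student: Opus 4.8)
The plan is to reduce the statement to \cref{prop:Pareto}, but applied in the space of \emph{utility profiles} rather than in the space of allocations. Write $N \coloneqq \sum_{g\in\mathcal{G}} |g|$ for the total number of agents, let $\mathcal{P} \coloneqq \{x \in [0,1]^{M\times\mathcal{G}} : \sum_{g\in\mathcal{G}} x_{\alpha g} = 1 \text{ for all } \alpha \in M\}$ be the polytope of fractional allocations, and let $U : \mathcal{P} \to \mathbb{R}^N$ be the linear map sending $x$ to the profile $(u_{gi}(x))_{g,i}$ of agent utilities. Define $S \coloneqq U(\mathcal{P})$. Since $\mathcal{P}$ is a polytope and $U$ is linear, $S$ is the linear image of a polytope and is therefore itself a polytope; in particular it can be written as $\{u : Au \leq b\}$, which is the form required to invoke \cref{prop:Pareto}.

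The first step is to observe that \fpo{}-ness of an allocation is exactly Pareto optimality of its utility profile inside $S$. By definition, an allocation $x'$ Pareto dominates $x$ precisely when $U(x') \geq U(x)$ componentwise with at least one strict inequality. Hence $x$ is \fpo{} if and only if no point of $S$ dominates $U(x)$, i.e.\ if and only if $U(x)$ is Pareto optimal in $S$ in the sense of \cref{prop:Pareto}. The second step then applies \cref{prop:Pareto} to $S$: the point $U(x)$ is Pareto optimal in $S$ if and only if there is a weight vector $w = (w_{gi})_{g,i} > 0$ with $w^T U(x) = \max\{w^T u : u \in S\}$. Because $S = U(\mathcal{P})$, maximizing $w^T u$ over $S$ is the same as maximizing the composed objective $w^T U(x')$ over $x' \in \mathcal{P}$, and $U(x)$ attains the maximum over $S$ exactly when $x$ attains it over $\mathcal{P}$.

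The final step rewrites this objective to read off the support condition. Using additivity of the valuations and linearity, for any $x' \in \mathcal{P}$,
\[
  w^T U(x') = \sum_{g,i} w_{gi}\, u_{gi}(x') = \sum_{\alpha \in M} \sum_{g\in\mathcal{G}} x'_{\alpha g}\Bigl(\sum_{i\in[|g|]} w_{gi}\, u_{gi}(\alpha)\Bigr),
\]
which is separable across goods subject only to the constraints $\sum_{g} x'_{\alpha g} = 1$. Consequently, this linear objective is maximized over $\mathcal{P}$ precisely by those allocations that, for every good $\alpha$, place all of its mass on groups attaining $\max_{g'\in\mathcal{G}} \sum_{i\in[|g'|]} w_{g'i}\, u_{g'i}(\alpha)$; that is, $x$ maximizes $w^T U$ if and only if $x_{\alpha g} > 0 \Rightarrow g \in \arg\max_{g'\in\mathcal{G}} \sum_{i} w_{g'i}\, u_{g'i}(\alpha)$. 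Chaining the three equivalences of the previous paragraphs gives the lemma.

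The argument is largely routine once \cref{prop:Pareto} is available, so I do not expect a serious obstacle; the point that most needs care is \textbf{locating the Pareto optimality in utility space}. Applying \cref{prop:Pareto} directly to the allocation polytope $\mathcal{P}$ would characterize allocations that are componentwise undominated in the variables $x_{\alpha g}$, which is not what \fpo{} means. The correct move is to pass through $S = U(\mathcal{P})$, which forces me to confirm the one non-trivial fact used above, namely that $S$ is a polyhedron (as a linear image of a polytope) so that \cref{prop:Pareto} indeed applies to it.
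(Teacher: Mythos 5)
Your proof is correct and follows essentially the same route as the paper's: both pass to the image polytope $S = U(\mathcal{P})$ of utility profiles, invoke the weighted-sum characterization of Pareto optimal points of a polyhedron (\cref{prop:Pareto}), and then use the per-good separability of the weighted utilitarian objective to read off the support condition. Your closing remark about why \cref{prop:Pareto} must be applied in utility space rather than allocation space is exactly the step the paper also takes (via its polytope $Q$), so there is nothing to correct.
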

\begin{proof}
    Let $P$ denote the set of all fractional allocations for the instance $(\mathcal{G}, M, \mathcal{U})$, and let $U(\cdot)$ be the linear transformation that maps each fractional allocation $x'$ to its utility profile, i.e., $U(x') = \left(\sum_{\alpha \in M} x'_{\alpha g}\, u_{gi}(\alpha)\right)_{g\in \mathcal{G}, i \in [|g|]}$

    Let $Q = \{U(x'):x'\in P\}$ be the image of $P$ under $U(\cdot)$. Observe that $P$ is a polytope, and since $Q$ is a linear image of $P$, it is also a polytope. A point $x \in P$ is an \fpo{} allocation iff $U(x)$ is a Pareto optimal point of $Q$, which, by \cref{prop:Pareto}, is equivalent to the existence of a weight vector $(w_{gi})_{g\in \mathcal{G}, i\in [|g|]} > 0$ such that $U(x)$ maximizes $w^T U(x')$ over all fractional allocations $x'$. Also, 
    {
            \footnotesize
    \begin{align*}
        w^TU(x)= \sum_{\alpha \in M} \sum_{g\in \mathcal{G}} x_{\alpha g}\, \left(\sum_{i \in [|g|]} w_{gi}(\alpha)\, u_{gi}(\alpha)\right),
    \end{align*}
    }
    
is in fact  a ``weighted utilitarian welfare'' of the fractional allocation $x$. Hence, $w^T U(x')$ is maximized at $U(x)$, iff $x$ allocates each good $\alpha$ among the groups that maximize $\sum_{i \in [|g|]} w_{gi}(\alpha)\, u_{gi}(\alpha)$. That is
    {
            \footnotesize
    \begin{align*}
        x_{\alpha g} > 0 \Longrightarrow g\in \arg\max_{g' \in \mathcal{G}} \sum_{i \in [|g'|]} w_{g'i}(\alpha)\, u_{g'i}(\alpha)
    \end{align*}
    }
\end{proof}
\begin{restatable}{corollary}{lemfpo}
\label{lemma:fpo}
If a fractional allocation $x$ is \fpo{} for an instance $(\mathcal{G}, M, \mathcal{U})$ and $x'$ is another fractional allocation such that $x_{\alpha g} = 0$ implies $x'_{\alpha g} =0$, then $x'$ is also \fpo{}.
\end{restatable}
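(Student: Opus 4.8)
The plan is to derive this as an immediate consequence of the weighted-welfare characterization in \cref{lem:wuw}. That lemma tells us that a fractional allocation is \fpo{} if and only if there is a strictly positive weight vector $(w_{gi})_{g \in \mathcal{G}, i \in [|g|]} > 0$ under which each good is placed only on groups maximizing the corresponding weighted agent-value. So the whole argument reduces to transporting a single witnessing weight vector from $x$ to $x'$.

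Concretely, I would first apply the ``only if'' direction of \cref{lem:wuw} to the \fpo{} allocation $x$, obtaining weights $(w_{gi}) > 0$ such that $x_{\alpha g} > 0$ implies $g \in \arg\max_{g' \in \mathcal{G}} \sum_{i \in [|g'|]} w_{g'i}\, u_{g'i}(\alpha)$ for every good $\alpha$. The key step is then to observe that the hypothesis ``$x_{\alpha g} = 0 \implies x'_{\alpha g} = 0$'' is, by contraposition, exactly the support-containment statement ``$x'_{\alpha g} > 0 \implies x_{\alpha g} > 0$''. Chaining these two implications, any $(\alpha, g)$ with $x'_{\alpha g} > 0$ also has $x_{\alpha g} > 0$, and hence $g \in \arg\max_{g' \in \mathcal{G}} \sum_{i \in [|g'|]} w_{g'i}\, u_{g'i}(\alpha)$. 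Thus $x'$ satisfies the same maximizer condition with respect to the \emph{same} weight vector $w$.

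To finish, I would invoke the ``if'' direction of \cref{lem:wuw}: since there exists a strictly positive $w$ witnessing the maximizer condition for $x'$, the allocation $x'$ is \fpo{}. I do not expect any genuine obstacle here; the statement is a clean corollary, and the only point requiring minor care is getting the direction of the contrapositive right (we need the support of $x'$ to lie inside the support of $x$, not the reverse) so that the crucial weight vector can be reused unchanged. No new feasibility or optimality computation is needed, because $w$ is held fixed throughout and only the support structure is used.
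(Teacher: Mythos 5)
Your proposal is correct and matches the paper's own proof: both take the contrapositive of the hypothesis to get the support containment $x'_{\alpha g} > 0 \Rightarrow x_{\alpha g} > 0$, and then apply \cref{lem:wuw} to $x'$ reusing the same positive weight vector that witnesses \fpo{} for $x$. The paper simply states this more tersely; no substantive difference exists.
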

\begin{proof}
By the contrapositive of the assumption, we have $x'_{\alpha g} > 0 \Longrightarrow x_{\alpha g} > 0$. We can then apply \cref{lem:wuw} to $x'$ using the same weight vector as $x$.
\end{proof}

\subsection{Proof of \cref{thm:iter_round}}\label{app:iter_thm}
We repeat the polytope formulation, along with a pseudocode of the algorithm for convenience.
{
\footnotesize
\begin{align*}\label{PROP-LP-iter}
        &\sum_{\mathclap{\alpha:  (\alpha, g) \in E}} x_{\alpha g} \, u_{gi}(\alpha) \geq \frac{u_{gi}(M)}{n}-u_{gi}(B_g) &\forall g\in \mathcal{G}', i \in [|g|]\notag \\
        &\sum_{\mathclap{g:(\alpha, g) \in E}} x_{\alpha g} = 1 & \forall \alpha \in M'\notag\\
        &0 \leq x_{\alpha g} \leq 1 & \forall (\alpha, g) \in E. \tag{$\mathcal{P}$}
\end{align*}
}

\begin{algorithm}[t]
\caption{Finding an Almost \prop{} Allocation}
\label{alg:algorithm-prop}
\textbf{Input}: A group fair division instance consisting of $\mathcal{G}$, $M$, and a set of valuation functions $\{u_{gi}\}_{g\in \mathcal{G}, i\in |g|}$\\
\textbf{Output}: An \fpo{} and almost \prop{} allocation
\begin{algorithmic}[1] 
\STATE Let $M' = M$, $\mathcal{G}' = \mathcal{G}$, $E = M \times \mathcal{G}$.
\STATE Let $B_g = \emptyset$ for every $g \in \mathcal{G}$ be the current bundle of $g$.
\STATE Let $x^*$ be a BFS of \ref{PROP-LP-iter} corresponding to an \fpo{} fractional allocation.
\WHILE{True}
\FOR{$(\alpha, g) \in E$}
\IF {$x^*_{\alpha g} = 0$}
\STATE $E = E \setminus \{(\alpha, g)\}$
\ELSIF {$x^*_{\alpha g} = 1$}
\STATE $B_g = B_g \cup \{\alpha\}$
\STATE$M' = M' \setminus \{\alpha\}$
\STATE$E = E \setminus \{(\alpha, g)\}$
\ENDIF
\ENDFOR
\STATE Update $\mathcal{G}'$ by removing the last agent of every group $g$ with $|g| > 0$ such that $\sum_{\alpha: (\alpha, g)\in E} x^*_{\alpha g} \leq |g|$.
\IF {$M' = \emptyset$} 
\STATE{\textbf{break}}
\ENDIF
\STATE Let $x^*$ be a BFS of \ref{PROP-LP-iter}.
\ENDWHILE
\STATE \textbf{return}  $\{B_g\}_{g \in \mathcal{G}}$
\end{algorithmic}
\end{algorithm}

\iterround*

We prove the theorem by showing that \cref{alg:algorithm-prop} is well-defined and its output satisfies our desired properties. To be more rigorous, we use $\mathcal{P}_t$, $x^{*t}$, $E_t$, $M'_t$, $\mathcal{G}'_t$, and $B_g^t$ to denote the polytope, the computed BFS, the set of edges, remaining goods, state of the groups, and the bundle already allocated to $g$ at the start of iteration $t$.
\paragraph{Well-Definedness and Running Time} We first show that we can find a BFS of $\mathcal{P}_1$ corresponding to an \fpo{} allocation.
Because of the constraint $0 \leq x_{\alpha g} \leq 1$, $\mathcal{P}_1$ is a polytope and is non-empty, as setting \( x_{\alpha g} = \frac{1}{n} \) for all \( (\alpha, g) \in M \times \mathcal{G} \) is feasible for it. Hence, by \cref{prop:optbfs} we can compute a BFS $x^{*1}$ maximizing the total utilitarian welfare, $\sum_{g\in \mathcal{G}} \sum_{i \in [|g|]} \sum_{\alpha \in M} x_{\alpha g}\, u_{gi}(\alpha)$. This BFS corresponds to an \fpo{} fractional allocation, as if there existed another fractional allocation \( x' \) that Pareto dominates $x^{*1}$, then \( x' \) would also be feasible for $\mathcal{P}_1$ and yield a higher total utility, contradicting optimality of $x^{*1}$.

Next, we prove that as long as there are unallocated goods, i.e., $M' \ne \emptyset$, we can find a BFS $x^*$. Again, by \cref{prop:optbfs}, it suffices to show our polytope \ref{PROP-LP-iter} remains non-empty throughout the algorithm, which we prove by induction. We already showed that $\mathcal{P}_1$ is non-empty. Suppose $\mathcal{P}_t$ is non-empty for some iteration \( t \). We claim $x^{*t}_{E_{t+1}}$, i.e., $x^{*t}$ restricted to the edges of $E_{t+1}$, is feasible for $\mathcal{P}_{t+1}$. Obviously $0 \leq x^{*t}_{E_{t+1}} \leq 1$. Since we only eliminate edges with \( x^{*t}_{\alpha g} \in \{0,1\} \), and we remove \( \alpha \) from the set of goods whenever \( x^{*t}_{\alpha g} = 1 \), the total weight of \( x^{*t}_{E_{t+1}} \) incident to each good in $M'_{t+1}$ is $1$. 
As we allocate $\alpha$ to $g$ only when  $x^{*t}_{\alpha g} = 1$, the agent constraints continue to hold. Rigorously, for every $(g, i)$ we have:
{
\footnotesize
\begin{align*}
    \frac{u_{gi}(M)}{n}-u_{gi}(B^t_g)
    \leq&\sum_{\alpha: (\alpha, g) \in E_t} x^{*t}_{\alpha g}\, u_{gi}(\alpha) \tag{by feasibility of $x^{*t}$ for $\mathcal{P}_t$} \\
    = &\sum_{\alpha: (\alpha, g) \in E_{t+1}} x^{*t}_{\alpha g}\, u_{gi}(\alpha) + u_{gi}(B_g^{t+1} \setminus B_g^t)\tag{as $E_{t+1} = E_t \setminus \{(\alpha, g): x^{*t}_{\alpha g} \in \{0, 1\}\}$ and $B_g^{t+1} = B_g^t \cup \{\alpha: x^{*t}_{\alpha g} = 1\}$}
\end{align*}
}

Hence, $x^{*t}_{E_{t+1}}$ is feasible for $\mathcal{P}_{t+1}$, implying the polytope at the beginning of iteration $t+1$ is non-empty.

By \cref{lemma:prop-cond}, each iteration eliminates either an edge or an agent, implying that the total number of iterations is bounded by \( |M \times \mathcal{G}| + \sum_{g \in \mathcal{G}} |g| \). As in every iteration, we compute a BFS of \ref{PROP-LP-iter}\footnote{The formulation of \ref{PROP-LP-iter} is polynomial in the size of the problem input.} and perform a polynomial-time rounding step, the overall runtime of the algorithm is polynomial.

\paragraph{Proportionality} Consider an agent $(g, i)$ we eliminate at some iteration $t$. Let $B \in M'_t$ be the set of $i$ most valuable remaining goods for $(g, i)$, and if $|M'_t| < i$ let $B = M'_t$. As $(g, i)$ is removed at $t$ the total weight of $x^{*t}$ incident to $g$ is at most $i$. So, 
{
\footnotesize
\begin{align*}
    u_{g_i} (B) \geq \sum_{\alpha: (\alpha, g) \in E_t} x^{*t}_{\alpha g}\, u_{gi}(\alpha) \geq\frac{ u_{gi}(M)}{n}- u_{gi}(B^t_g)
\end{align*}
}

Hence, adding $B$ to $g$'s bundle makes $(g, i)$ proportional. As $B \cap B_g^t = \emptyset$ and $|B| \leq i$, $(g, i)$ is \prop{i} at iteration $t$. Since $g$ only receives more goods in later iterations, $(g, i)$ continues to satisfy \prop{i}.
Now consider an agent $(g, i)$ we never eliminate. In the last iteration $T$, as we allocate all the remaining goods, $x^{*T}$ must be fully integral. Therefore we have
{
\footnotesize
\begin{align*}
    \frac{u_{gi}(M)}{n}- u_{gi}(B^T_g)
    \leq&\ \sum_{\alpha: x^{*T}_{\alpha g} = 1} u_{gi}(\alpha) \tag{by feasibility of $x^{*T}$ for $\mathcal{P}_T$ and $x^{*T} \in \{0,1\}$} \\
    = &\ u_{gi}(B_g^{T+1}\setminus B_g^{T}) \tag{as $B_g^{T+1} = B_g^T \cup \{\alpha: x^{*T}_{\alpha g} = 1\}$}
\end{align*}
}

Hence, $B_{g}^{T+1}$, the final bundle of $g$, is \prop{} for $(g, i)$.

\paragraph{Fractionally Pareto optimality} We show our final allocation is \fpo{} by induction. Define the fractional allocation $x^t$ as the union of $x^{*t}$ and the previously allocated goods, i.e., $x^t_{E_t} = x^{*t}$, $x^t_{\alpha g} = 1$ for every $g\in \mathcal{G}$ and $\alpha \in B_g^t$, and $x^t_{\alpha g} = 0$ on the rest of $(\alpha, g) \in M\times \mathcal{G}$. 
We have $x^1 = x^{*1}$ and we chose $x^{*1}$ to be \fpo{}. Assume $x^t$ is \fpo{} for some $t$. Note that for every $(\alpha, g)$ with $x^t_{\alpha g} = 0$,  $\alpha \notin B_g^{t+1}$, and $(\alpha, g)$ is either already removed or will be removed during iteration $t$. Thus, $x^{t+1}_{\alpha g}$ is also $0$ and by \cref{lemma:fpo} $x^{t+1}$ is \fpo{}.

\section{Special Cases for Existence of \propone{} and \efone{} Among Couples}
We begin by introducing some new definitions and notations. W.l.o.g.\, assume that the number of goods $m$ is divisible by the number of groups $n$; otherwise, we can add dummy goods with zero value for all agents to the set of goods $M$. For each agent $(g, i)$, we define their \emph{Segment Partition} as a partition of the goods into $\frac{m}{n}$ segments, each of size $n$, where the first segment contains their top $n$ most valued goods, the second segment contains the next $n$ most valued goods, and so on. We denote this partition by $\mathcal{S}^{gi} = \{S_1^{gi}, S_2^{g_i}, \dots , S_{m/n}^{gi}\}$ where $S_1^{gi}$ is the set of $n$ most vaued good for $(g, i)$.

\begin{lemma} \label{lem:segprop1}
    If an agent is allocated exactly one good from each segment of their segment partition, they satisfy \propone{}.
    \begin{proof}
         Consider an agent $(g, i)$ and suppose w.l.o.g.\ that $u_{gi}(1) \geq u_{gi}(2) \geq \dots \geq u_{gi}(m)$.
                
        Let $S = \{1, n+1, \dots, m-n+1\}$. We first show that $u_{gi}(S)\geq \frac{u_{gi}(M)}{n}$. To see this, partition the goods into $n$ sets based on their indices modulo $n$. Observe that, the set $S$ is the most valuable of the $n$ sets and thus have a value of at least $\frac{u_{gi}(M)}{n}$.
        
        Now, consider any bundle $B = \{\alpha_1, \alpha_2, \dots, \alpha_{m/n}\}$ such that $\alpha_j \in S^{gi}_j$ is in the $j$-th segment. Let $k$ be the smallest integer such that $k\notin B$. Note that $k$ is either $1$ or we have $\alpha_1 = 1$ and $k = 2$. Thus, $1 \in \{\alpha_1, k\}$ and as both $\alpha_1$ and $k$ are in the first segment, we have $u_{gi}(\{\alpha_1, k\}) \geq u_{gi}(1) + u_{gi}(n+1)$. As $\alpha_j$ belongs to the $j$-th segment and $nj+1$ belongs to the $(j{+}1)$-th segment, we have $u_{gi}(\alpha_j) \geq u_{gi}(nj + 1)$. Hence,
        {
        \footnotesize
        \begin{align*}
            u_{gi}(B \cup \{k\}) &= u_{gi}(\{\alpha_1, k\}) + \sum_{j = 2}^{m/n} u_{gi}(\alpha_j)\\
            &\geq u_{gi}(1) + u_{gi}(n+1) + \sum_{j = 2}^{m/n - 1} u_{gi}(nj + 1)  + u_{gi}(\alpha_{m/n}) \\
            &\geq u_{gi}(S)\\
            &\geq \frac{u_{gi}(M)}{n}
        \end{align*}
        }

        implying $B$ is \propone{} for $(g, i)$.
    \end{proof}
\end{lemma}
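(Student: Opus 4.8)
The plan is to relabel the goods in $(g,i)$'s decreasing order of value, so that $u_{gi}(1) \geq \cdots \geq u_{gi}(m)$ and each segment $S_j^{gi}$ becomes the contiguous block $\{(j-1)n+1, \dots, jn\}$. Since \propone{} asks for a single good $\gamma \notin B$ with $u_{gi}(B \cup \{\gamma\}) \geq u_{gi}(M)/n$, I would first fix a concrete benchmark set of value at least $u_{gi}(M)/n$ against which to compare the bundle.

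For the benchmark I would take $S = \{1, n+1, 2n+1, \dots, m-n+1\}$, the most valuable good of each segment. To verify $u_{gi}(S) \geq u_{gi}(M)/n$, I would partition $M$ into the $n$ residue classes modulo $n$; each contains exactly $m/n$ goods, and $S$ is the class of indices congruent to $1$. Comparing $S$ term-by-term against any other class, each element of $S$ carries the smaller index, hence the larger value, so $S$ is the most valuable of the $n$ classes and its value is at least the average $u_{gi}(M)/n$.

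I would then show that the bundle $B = \{\alpha_1, \dots, \alpha_{m/n}\}$ dominates $S$ once a single extra good is added. The driving monotonicity is that every good in segment $j$ is at least as valuable as every good in segment $j+1$: since $\alpha_j \in S_j^{gi}$ and $jn+1$ is the top of segment $j+1$, we get $u_{gi}(\alpha_j) \geq u_{gi}(jn+1)$ for each $j \leq m/n - 1$. Summing these inequalities already accounts for every element of $S$ except the globally best good $1$.

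The one genuine obstacle is this off-by-one gap: the term-shift leaves $u_{gi}(1)$ uncovered, and it is precisely this slack that the single \propone{} good must absorb. The clean fix is to let $\gamma = k$ be the smallest-indexed good missing from $B$. Because $B$ meets the first segment $\{1,\dots,n\}$ in exactly the one good $\alpha_1$, either $1 \notin B$ and $k=1$, or $\alpha_1 = 1$ and $k=2$; in both cases $1 \in \{\alpha_1, k\}$ with both goods lying in segment $1$, so $u_{gi}(\{\alpha_1,k\}) \geq u_{gi}(1) + u_{gi}(n+1)$. Adding $k$ thus lets segment $1$ cover the two largest benchmark elements simultaneously, and combining this with the shifted bounds for $\alpha_2, \dots, \alpha_{m/n-1}$ yields $u_{gi}(B \cup \{k\}) \geq u_{gi}(S) \geq u_{gi}(M)/n$, establishing \propone{}. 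I expect the crux to be recognizing that the smallest missing good always lands in the first segment; the remainder is monotonicity bookkeeping.
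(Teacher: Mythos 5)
Your proposal is correct and follows essentially the same argument as the paper: the same benchmark set $S = \{1, n+1, \dots, m-n+1\}$ justified by the residue-classes-modulo-$n$ averaging argument, the same choice of the smallest missing good $k$ (with the observation that $1 \in \{\alpha_1, k\}$ and both lie in the first segment), and the same segment-shift monotonicity bookkeeping. No gaps; this matches the paper's proof in all essentials.
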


\begin{lemma}\label{lem:sameseg}
If for every group $g\in\mathcal{G}$, the segment partition of both $(g,1)$ and $(g,2)$ is the same, i.e., $\mathcal{S}^{g1} = \mathcal{S}^{g2}$, a \propone{} allocation exists.
    \begin{proof}
        Let $\mathcal{S}^g$ be the segment partition of agents in group $g$, and let $\mathcal{S} = \uplus_{g\in \mathcal{G}} \mathcal{S}^g$ be the multiset obtained by the additive union of all $n\cdot \frac{m}{n}$ segments. Construct a bipartite graph $G=(M \cup \mathcal{S}, E)$ where there is an edge between $\alpha\in M$ and $S\in \mathcal{S}$ iff good $\alpha$ belongs to the segment $S$. Note that both parts of $G$ have $m$ nodes and $G$ is $ n$-regular, as each segment contains $n$ goods and each good belongs to exactly one segment in the segment partition of each group. Therefore, $G$ admits a perfect matching $F$.For each edge $(\alpha, S) \in F$, we allocate good $\alpha$ to the group $g$ such that $S \in \mathcal{S}^g$. With this assignment, every agent receives exactly one good from each segment of their segment partition and, by \cref{lem:segprop1}, satisfies \propone{}.
    \end{proof}
\end{lemma}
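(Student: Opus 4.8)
The plan is to reduce the problem to finding a perfect matching in an auxiliary bipartite graph, and then invoke \cref{lem:segprop1}. Since the two partners in every group $g$ share a segment partition, write $\mathcal{S}^g = \mathcal{S}^{g1} = \mathcal{S}^{g2}$ for the common partition of group $g$. My goal is to produce an allocation in which each group $g$ receives exactly one good from every segment in $\mathcal{S}^g$; by \cref{lem:segprop1}, this immediately makes both $(g,1)$ and $(g,2)$ satisfy \propone{}, so the resulting allocation is \propone{} overall.

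To construct such an allocation, I would form the multiset $\mathcal{S} = \biguplus_{g \in \mathcal{G}} \mathcal{S}^g$ of all segments across all groups (as a multiset, since distinct groups may contribute identical segments), and build a bipartite graph $G$ with the goods $M$ on one side and the segments $\mathcal{S}$ on the other, placing an edge between a good $\alpha$ and a segment $S$ whenever $\alpha \in S$. Each good $\alpha$ belongs to exactly one segment of every group's partition, so $\alpha$ has degree exactly $n$ in $G$; each segment contains exactly $n$ goods by definition, so every segment node also has degree $n$. Hence $G$ is $n$-regular and bipartite, with both sides of equal size $m = n \cdot \frac{m}{n}$, and it therefore admits a perfect matching $F$. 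Reading off the matching, I allocate each good $\alpha$ to the group $g$ for which its matched segment lies in $\mathcal{S}^g$.

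It then remains to verify that this assignment is a genuine allocation with the desired per-segment property. Because $F$ is a perfect matching, each good is matched to exactly one segment and hence handed to exactly one group, so the bundles partition $M$. Moreover, since $F$ matches every segment node, each of the $m/n$ segments of group $g$ is paired with a distinct good, so $g$ receives exactly one good from each of its segments\,---\,precisely the hypothesis of \cref{lem:segprop1} for both of its agents. This yields \propone{} for all agents, and the matching is computable in polynomial time.

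I expect the main obstacle to be establishing $n$-regularity cleanly, as this is where the shared-partition hypothesis is indispensable: it guarantees that group $g$ imposes a single, unambiguous segment partition, so each good contributes exactly one segment per group rather than two possibly different ones (which is exactly what breaks in the general case). Once regularity is in hand, the existence of a perfect matching is standard\,---\,it follows from Hall's condition, since any set $T$ of segment nodes spans $n\,|T|$ edges reaching at least $|T|$ goods\,---\,and the remaining bookkeeping is routine.
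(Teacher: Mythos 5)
Your proposal is correct and follows essentially the same route as the paper's proof: the same multiset of segments, the same $n$-regular bipartite graph between goods and segments, a perfect matching read off as the allocation, and the final appeal to \cref{lem:segprop1}. The extra details you supply (Hall's condition for regular bipartite graphs and the check that the matching yields a genuine partition of $M$) are sound but routine elaborations of the paper's argument.
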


\begin{lemma}\label{lem:m<=2n}
            If the number of goods is at most the number of agents, i.e., $m \leq 2n$, \propone{} allocation is guaranteed to exist.
            \begin{proof}
            
            If $m = n$, every allocation is \propone{}. Therefore, we suppose $m = 2n$. Using the same reasoning as in \cref{lem:segprop1}, observe that each agent $(g, i)$ will be \propone{} if they receive a good from their $n$ most valued goods, $S^{gi}_1$. We now present an algorithm that finds an allocation where this condition holds. The algorithm works in two phases.

           In the first phase, we greedily assign a good $\alpha$ to a group $g$ such that $\alpha \in S^{g1}_1 \cap S^{g2}_1$, which means $\alpha$ makes both $(g, 1)$ and $(g, 2)$ \propone{}. We then exclude $g$ from further consideration. At the end of the first phase, we either have satisfied every group or every remaining good is among the $n$ most valuable goods of at most one agent in each group. 
           
           In the latter case, we run the second phase. Assume we allocate $k$ goods in the first phase. Note that $k$ is also the number of satisfied groups. Let $M'$ be the set of remaining goods and $N'$ be the set of remaining \emph{agents}. So we have $|M'| = m - k$ and $|N'| = 2n - 2k$.
           
           Consider a bipartite graph $G = (M' \cup N', E)$ where $(\alpha, (g, i)) \in E$ iff $\alpha \in S^{gi}_1$. We next show that Hall's condition holds in $G$ for part $N'$. Note that as $|S^{gi}_1| = n$ and as we removed $k$ goods in the first phase, each remaining agent $(g, i)$ is incident to at least $n-k$ goods. For every subset $A \subseteq N'$, if $|A| \leq n-k$ then $|N_G(A)| \geq n - k \geq |A|$ as the degree of every agent is at least $n-k$, where $N_G(A)$ is the set of neighbors of $A$ in $G$. Otherwise, if $|A| > n-k$, as there are only $n-k$ groups left there exists some group $g$ where both $(g, 1), (g,2) \in A$. Now, note that $(g,1)$ and $(g, 2)$ do not have any common neighbor, as if they did, we would allocate that good to $g$ in the first phase. As each of $(g, 1)$ and $(g, 2)$ are incident to at least $n-k$ goods, we have $|N_G(A)| \geq 2n - 2k = |N'| \geq |A|$. Thus, by Hall's condition, there exists a matching in $G$ that covers $N'$. That is, we can assign each of the remaining agents one of their most $n$ valuable goods, and then we can allocate the remaining goods arbitrarily, achieving \propone{} for every agent.
            \end{proof}
        \end{lemma}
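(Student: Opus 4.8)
The plan is to reduce the statement to finding an allocation in which every agent receives at least one good among their $n$ most valued goods, and then to produce such an allocation by a greedy step followed by a Hall's-theorem matching. First I would invoke the divisibility assumption to suppose $m$ is a multiple of $n$, so that $m \leq 2n$ leaves only the two cases $m = n$ and $m = 2n$. The case $m = n$ is immediate: the proportional share $u_{gi}(M)/n$ is exactly the average value of a good, so every agent's single most valuable good already meets it, and hence any allocation is \propone{} (each agent either owns a good worth at least the average or can add one). For $m = 2n$, the segment partition of each agent $(g,i)$ has two segments, and \cref{lem:segprop1} (or a direct one-line argument) shows that receiving any good from the top segment $S^{gi}_1$ suffices for \propone{}: if the received good is the agent's favorite I add good $n+1$, and otherwise I add the favorite, so in both cases the resulting pair dominates $\{1, n+1\}$, which is worth at least $u_{gi}(M)/n$. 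The whole task therefore becomes: allocate the $2n$ goods so that every agent gets a good in their own top-$n$ set.

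Next I would run a greedy first phase. As long as some as-yet-unsatisfied group $g$ admits a still-available good $\alpha \in S^{g1}_1 \cap S^{g2}_1$\,---\,a good that lies in the top-$n$ set of both partners\,---\,I assign $\alpha$ to $g$, which makes both $(g,1)$ and $(g,2)$ \propone{} simultaneously, and set $g$ aside. Say this phase allocates $k$ goods and thereby satisfies exactly $k$ groups. Its stopping condition is the linchpin of the whole argument: at termination no remaining good lies in $S^{g1}_1 \cap S^{g2}_1$ for any unsatisfied group $g$, so within each unsatisfied group the two partners' top-$n$ sets are disjoint on the remaining goods.

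The second phase handles the $n-k$ unsatisfied groups, i.e.\ the $|N'| = 2(n-k)$ agents still to be served and the $|M'| = 2n - k$ remaining goods. I would build the bipartite graph $G$ on $M' \cup N'$ with an edge $(\alpha, (g,i))$ whenever $\alpha \in S^{gi}_1$, and seek a matching saturating $N'$; such a matching assigns each remaining agent a distinct top-$n$ good, routed to that agent's group, after which the leftover goods are distributed arbitrarily. The crux, and the step I expect to be the main obstacle, is verifying Hall's condition for $N'$. Each remaining agent retains at least $n-k$ of its original top-$n$ goods, since Phase 1 removed only $k$ goods; hence for any $A \subseteq N'$ with $|A| \leq n-k$ a single agent already contributes $|N_G(A)| \geq n-k \geq |A|$. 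For $|A| > n-k$, the pigeonhole principle (only $n-k$ groups remain) forces both partners of some group $g$ into $A$, and by the Phase 1 stopping condition their neighborhoods in $G$ are \emph{disjoint}, each of size at least $n-k$, giving $|N_G(A)| \geq 2(n-k) = |N'| \geq |A|$. Thus Hall's condition holds, the saturating matching exists, and combined with Phase 1 every agent receives a good from their top-$n$ set, establishing \propone{}. Since both phases and the matching are polynomial-time computable, the allocation is found efficiently.
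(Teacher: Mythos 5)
Your proposal is correct and takes essentially the same route as the paper's own proof: reduce to $m \in \{n, 2n\}$ via dummy goods, observe that receiving any good from $S^{gi}_1$ suffices for \propone{}, run the identical greedy phase on goods in $S^{g1}_1 \cap S^{g2}_1$, and verify Hall's condition for the remaining agents with the same case split ($|A| \leq n-k$ via the minimum degree $n-k$; $|A| > n-k$ via pigeonhole and the disjointness of the two partners' neighborhoods guaranteed by the phase-one stopping condition). The only cosmetic difference is that you give a direct two-good argument for the top-segment claim where the paper cites \cref{lem:segprop1}.
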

        The next two lemmas assume binary valuation functions. When we say an agent $(g, i)$ likes or approves a good $\alpha$, we mean $u_{gi}(\alpha) = 1$.
        \begin{lemma}\label{lem:needs-k}
            When valuations are binary and all agents approve between $kn+1$ and $(k+1)n$ goods for some given integer $k$, a \propone{} allocation exists.
            \begin{proof}
                Note that each agent must receive at least $k$ goods they like to be \propone{}.
                Similar to the previous lemma, the algorithm consists of two phases. In the first phase, as long as there exists a group $g$ where both $(g, 1)$ and $(g, 2)$ like one of the remaining goods, we allocate that good to $g$. We also make sure no group receives more than $k$ goods.
                At the end of the first phase, we either have satisfied every group or every remaining good is liked by at most one agent in each group. 
                
                In the latter case, we run the second phase. Let $M'$ be the set of remaining goods, let $k_g \leq k$ be the number of goods allocated to $g$ in the first phase, and let $k' = \sum_{g\in \mathcal{G}}k_g$ be the total number of allocated goods. 
                Consider a bipartite graph $G = (M' \cup N, E)$ where $N$ is the set of agents and $(\alpha, (g, i)) \in E$ iff $(g, i)$ likes $\alpha$, i.e., $u_{gi}(\alpha) = 1$.
                Observe that to find a \propone{} allocation, it suffices to find a subset of edges $F$ such that in $G_F = (M \cup N, F)$ each $\alpha\in M'$ has degree at most $1$ and each agent $(g, i)$ has degree at least $k-k_i$. This is equivalent to finding an integral solution of the following polytope:

                {
                \footnotesize
                \begin{align}\label{allkpolytope}
                    &\sum_{\alpha:(\alpha, (g, i)) \in E} x_{\alpha,(g,i)} \geq k - k_g &\forall (g, i)\in N\notag\\
                    &\sum_{(g, i): (\alpha, (g, i)) \in E} x_{\alpha, (g,i)} \leq 1 & \forall \alpha \in M'\notag\\
                    &0 \leq x_{\alpha,(g,i)} \leq 1 & \forall \alpha\in M', (g,i)\in N \tag{$\mathcal{P}$}
                \end{align}
                }

                Notice that the coefficient matrix of \ref{allkpolytope} is the incidence matrix of $G$, which is known to be totally unimodular (TU), and therefore \ref{allkpolytope} is integral, i.e., every BFS of the \ref{allkpolytope} is integral. Thus, it remains to prove \ref{allkpolytope} is non-empty, and then we are done by \cref{prop:optbfs}.
                We claim that \ref{allkpolytope}  contains the following $x$:
                 {
                \footnotesize
                \begin{align*}
                    x_{\alpha, (g, i)} = \begin{cases}
                        \frac{k-k_g}{kn - k'} & \text{$(g, i)$ likes $\alpha$}\\
                        0 & \text{otherwise}
                    \end{cases}
                \end{align*}
                }
                
                Note that for every $g\in \mathcal{G}$ we have $k - k_g \geq 0$ and $kn- k'> 0$. Also
                {
                \footnotesize
                \begin{align*}
                    \frac{k-k_g}{kn -k'} \leq 1 \Longleftrightarrow k-k_g \leq kn -k' \Longleftrightarrow k' - k_g\leq k(n-1)
                \end{align*}
                }

                and the last inequality holds as each group receives at most $k$ goods in the first phase.
                As $(g, i)$ likes at least $kn + 1 - k'$ of the remaining goods, we have
                {
                \footnotesize
                \begin{align*}
                    \sum_{\alpha \in M'} x_{\alpha, (g, i)} \geq (kn+1-k')\cdot\frac{k-k_g}{kn - k'}  \geq k - k_g
                \end{align*}
                }
                
                On the other hand, each good $\alpha \in M'$ is liked by at most one agent of each group $g$, as otherwise we would allocate $\alpha$ to $g$ in the first phase. Hence,
                {
                \footnotesize
                \begin{align*}
                    \sum_{(g, i) \in N} x_{\alpha, (g, i)} \leq \sum_{g \in \mathcal{G}} \frac{k-k_g}{kn - k'} = \frac{kn-\sum_{g\in \mathcal{G}} k_g}{kn - k'} = \frac{kn-m'}{kn-m'} = 1
                \end{align*}
                }
                
                Therefore, the polytope is non-empty, and we can make every agent \propone{}.
            \end{proof}
        \end{lemma}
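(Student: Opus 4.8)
The plan is to reduce \propone{} to a purely combinatorial covering requirement and then satisfy it with a two-phase assignment whose second phase is handled by a totally-unimodular feasibility argument.

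\textbf{Reducing to a matching condition.} Because valuations are binary, the proportional share of an agent $(g,i)$ who approves $a_{gi}$ goods is $a_{gi}/n \in (k, k+1]$, using $kn+1 \le a_{gi} \le (k+1)n$. Since $u_{gi}(B_g)$ is an integer and adding a single approved good raises it by exactly one, I would first observe that $(g,i)$ is \propone{} precisely when $u_{gi}(B_g) \ge k$: if $B_g$ already contains all approved goods then $u_{gi}(B_g) = a_{gi} \ge a_{gi}/n$ trivially, and otherwise $\max_\alpha u_{gi}(B_g\cup\{\alpha\}) = u_{gi}(B_g)+1 \ge k+1 \ge a_{gi}/n$. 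Thus the problem reduces to: assign each good to one group so that every agent collects at least $k$ of the goods they approve.

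\textbf{Two phases.} The obstacle to a direct bipartite $b$-matching is that a good approved by \emph{both} partners of a group can cover two deficits simultaneously, a synergy a plain good-to-agent matching cannot express. I would eliminate this synergy up front. In Phase 1, repeatedly select a group $g$ that still holds fewer than $k$ goods and for which some unassigned good is approved by both $(g,1)$ and $(g,2)$, and award that good to $g$; stop when no such pair exists. Afterwards each group $g$ holds $k_g \le k$ commonly-approved goods (so both partners already have $k_g$ approved goods), and every group with $k_g<k$ has no remaining good approved by both partners. Writing $k'=\sum_g k_g$, the residual task is to give each agent $k-k_g$ further approved goods, where now every relevant remaining good is approved by at most one agent per group.

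\textbf{Phase 2 via total unimodularity (the crux).} For the residual instance I would build the bipartite graph between the remaining goods $M'$ and the agents, with an edge whenever the agent approves the good, and seek a subset of edges in which each good has degree at most $1$ and each agent $(g,i)$ has degree at least $k-k_g$; assigning leftover goods arbitrarily then completes a \propone{} allocation. These are exactly the integer points of the polytope given by $\sum_\alpha x_{\alpha,(g,i)} \ge k-k_g$, $\sum_{(g,i)} x_{\alpha,(g,i)} \le 1$, and $0 \le x \le 1$. Its constraint matrix is the incidence matrix of a bipartite graph, hence totally unimodular, so every basic feasible solution is integral and \cref{prop:optbfs} reduces everything to showing the polytope is nonempty. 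This feasibility check is the main obstacle, and I would settle it with an explicit fractional point: set $x_{\alpha,(g,i)} = (k-k_g)/(kn-k')$ on every approving edge. I expect the verification to rest on two bounds. The good constraints hold because each relevant remaining good is approved by at most one agent per group (non-full groups by Phase 1's stopping rule; full groups contribute weight $0$ since $k-k_g=0$), so the weight incident to any good is at most $\sum_g (k-k_g)/(kn-k') = 1$. The agent constraints hold because $(g,i)$ still approves at least $a_{gi}-k' \ge kn+1-k'$ remaining goods, giving incident weight at least $(kn+1-k')(k-k_g)/(kn-k') \ge k-k_g$, while $x \le 1$ follows from $k'-k_g = \sum_{h \ne g} k_h \le k(n-1)$. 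Carrying out these inequalities cleanly\,---\,and in particular arguing that Phase 1's termination genuinely removes all within-group synergy among the goods that still matter\,---\,is the step I expect to require the most care.
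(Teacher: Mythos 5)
Your proposal is correct and follows essentially the same route as the paper's proof: the same two-phase algorithm (greedily assigning commonly-approved goods capped at $k$ per group, then a good-to-agent degree-constrained assignment), the same totally-unimodular polytope, and the same fractional witness $x_{\alpha,(g,i)} = (k-k_g)/(kn-k')$ with identical verification bounds. If anything, you are slightly more careful than the paper in two spots\,---\,spelling out why $k$ approved goods suffice for \propone{} (using $a_{gi} \le (k+1)n$), and noting that full groups may still share approved goods but contribute zero weight\,---\,both of which the paper leaves implicit.
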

\begin{lemma}\label{lem:prop1for3}
    When there are $n=3$ couples and their valuation functions are binary, a \propone{} allocation exists.
\end{lemma}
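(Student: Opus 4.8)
The plan is to reduce \propone{} to a combinatorial covering condition and then establish that condition by an argument in the spirit of \cref{lem:needs-k}. First I would translate \propone{} into a demand on the number of approved goods each agent receives. For binary valuations, agent $(g,i)$ approves $a_{gi} \coloneqq u_{gi}(M)$ goods and has proportional share $a_{gi}/3$. If the bundle $B_g$ of $(g,i)$'s group contains $r_{gi}$ approved goods, then either $r_{gi}=a_{gi}$ (so $(g,i)$ is already \prop{}) or an approved good lies outside $B_g$, in which case \propone{} is equivalent to $r_{gi}+1 \geq a_{gi}/3$. Since $r_{gi}$ is an integer, it therefore suffices to guarantee that every agent $(g,i)$ receives at least $t_{gi} \coloneqq \lceil a_{gi}/3\rceil - 1$ of its approved goods; any goods left over after meeting these demands can be handed out arbitrarily, since extra goods never hurt. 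Thus the problem becomes: partition $M$ into bundles $B_A, B_B, B_C$ so that $|B_g \cap \mathrm{App}(g,i)| \geq t_{gi}$ for both agents of each group, where $\mathrm{App}(g,i)$ denotes $(g,i)$'s approved set.

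Next I would attack this covering problem with the two-phase template of \cref{lem:needs-k}. The essential difficulty is that, within a group, a good approved by \emph{both} members simultaneously reduces both demands; this ``double counting'' is genuinely not a flow and is what blocks a direct integral LP (it already forces the use of shared goods, as when all six agents approve the same goods). To neutralize it, Phase~1 greedily assigns goods that are approved by both members of some group to that group, with an appropriate cap so no group is over-served, and stops once every remaining good is approved by at most one agent in each group. Writing $t'_{gi}$ for the residual demands, I would then, exactly as in \cref{lem:needs-k}, set up the residual polytope with a variable $x_{\alpha,(g,i)}$ for each remaining good $\alpha$ approved by $(g,i)$, good constraints $\sum_{(g,i)} x_{\alpha,(g,i)} \leq 1$, and agent constraints $\sum_{\alpha} x_{\alpha,(g,i)} \geq t'_{gi}$. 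Because every surviving good is approved by at most one agent per group, this is precisely the shape of \cref{allkpolytope}: its constraint matrix is a bipartite incidence matrix, hence totally unimodular, so the polytope is integral and any basic feasible solution (computed via \cref{prop:optbfs}) rounds to an allocation meeting all residual demands.

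The main obstacle is proving that the residual polytope is \emph{nonempty}. In \cref{lem:needs-k} the uniform demand $k$ allowed an explicit symmetric fractional point, but here the demands $t_{gi}$ differ across the six agents, and Phase~1 has already removed an uncontrolled number of goods from each group. One must therefore construct a feasible fractional assignment (equivalently, verify a defect version of Hall's condition for the surviving bipartite system) while tracking both members' differing requirements. I expect this to be the crux, and I expect it to be exactly where $n=3$ enters: with only three groups and six agents, the relevant inequalities, bounding for every subset of surviving agents the number of remaining goods they collectively approve against their total residual demand, reduce to a short case analysis whose bookkeeping stays finite only because there are three groups. A secondary subtlety worth checking is that Phase~1's greedy choices never destroy Phase~2 feasibility; I would handle this with the same counting bound as in \cref{lem:needs-k}, arguing that each surviving agent still approves enough remaining goods relative to its reduced demand $t'_{gi}$.
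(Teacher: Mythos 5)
Your reduction of \propone{} to the covering demands $t_{gi} = \lceil a_{gi}/3\rceil - 1$ is correct, and so is the observation that the residual LP has a totally unimodular (bipartite incidence) constraint matrix. But the proposal stops exactly where the real work begins: you never establish that the residual polytope is nonempty, and this is not a routine extension of \cref{lem:needs-k}. Worse, the Phase~1 rule you specify (greedily assign doubly-approved goods subject only to per-group caps) can genuinely destroy feasibility, so no after-the-fact Hall-type case analysis can rescue it. Concretely, take $m=9$ goods and groups $A,B,C$ with both agents of $B$ approving $\{1,\dots,7\}$, both agents of $C$ approving $\{1,\dots,7\}$ (so $t=2$ for these four agents), and $(A,1)$ approving $\{1,2,3,4\}$, $(A,2)$ approving $\{5,6,7,8\}$ (so $t_{A1}=t_{A2}=1$, and $A$ has no doubly-approved good). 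Phase~1 may legally give $B$ the goods $\{1,2\}$ and $C$ the goods $\{3,4\}$\,---\,caps respected, all of $B$'s and $C$'s demands met\,---\,after which agent $(A,1)$ approves no remaining good but still has demand $1$, so Phase~2 is infeasible. Yet a \propone{} allocation exists, e.g.\ $A\leftarrow\{1,5,8\}$, $B\leftarrow\{2,3,9\}$, $C\leftarrow\{4,6,7\}$. The counting bound of \cref{lem:needs-k} cannot repair this: that argument relies on the demands being uniform (every agent needs $k$), which is what makes the symmetric fractional point feasible regardless of which goods Phase~1 consumed. Here $(A,1)$ approves only $3t_{A1}+1=4$ goods, while the other groups may legitimately consume $4$ goods in Phase~1, wiping out that agent's entire slack. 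The missing idea is a mechanism that coordinates, globally, which goods are used to serve which group.

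The paper supplies exactly such a mechanism, by a different route. For each group $g$ it partitions $M$ into triples, filling them greedily in the order: goods liked by neither member, goods liked only by $(g,1)$, goods liked only by $(g,2)$, goods liked by both, with a short case analysis on the at most $8$ leftover goods (this is where $n=3$ and the couple structure enter). Then, as in \cref{lem:sameseg}, the bipartite graph between the $m$ goods and the $m$ triples of all three groups is $3$-regular, hence admits a perfect matching, which hands each group exactly one good from each of its own triples. An agent liking $\ell$ goods then receives a liked good from each of at least $\lceil \ell/3\rceil - 1$ triples that consist entirely of liked goods\,---\,precisely your demand $t_{gi}$. So your target inequality is the right one, but meeting it requires this kind of global partition-plus-matching structure (or an equally global argument); a local greedy phase followed by an LP, as proposed, does not suffice.
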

\begin{proof}
    For each group $g \in \mathcal{G}$, we partition the goods into sets of size $n = 3$ as follows. We first form as many sets of three as possible using goods that are valued at $0$ by both $(g, 1)$ and $(g, 2)$. 
    Once fewer than three such goods remain, we continue by forming sets from goods liked only by $(g, 1)$, then from those liked only by $(g, 2)$, and finally from goods liked by both agents. 
    Since at most two goods can remain after each of these four phases, the total number of remaining goods at the end is at most $8$, and each agent $(g,i)$ values at most $4$ of the remaining goods. Given that $m$ is divisible by $n = 3$, the number of remaining goods must be either $0$, $3$, or $6$.
    
    To handle the remaining goods, we distinguish three cases. 
    First, if neither agent values $4$ of the remaining goods, we arbitrarily partition them into sets of three. Second, if exactly one agent $(g,i)$ values $4$ of the goods, we construct one set of three from the goods they like and form another arbitrary set of three from the rest. Finally, if both agents value exactly $4$ goods, the structure must be such that there are two goods liked by both agents, two liked only by $(g, 1)$, and two liked only by $(g, 2)$. In this case, the remaining goods can be split into two sets of three such that each agent fully values one of the sets.
    
    We denote the resulting partition for group $g$ as $\mathcal{S}^g$. Observe that, for an agent $(g,i)$ who values $\ell$ goods, the constructed partition $\mathcal{S}^g$ ensures that in at least $\lceil \ell/3 \rceil - 1$ of the sets, the agent values every good in the set, therefore if $g$ receives exactly one good from each set in  $\mathcal{S}^g$, both of its agents will be \propone{}. To find such an allocation, we construct the same $n$-regular bipartite graph as \cref{lem:sameseg}, and we then find a perfect matching.
\end{proof}

\propsepcial*
\begin{proof}
    The statements are a direct corollary of \cref{lem:m<=2n}, \cref{lem:sameseg}, \cref{lem:needs-k}, and \cref{lem:prop1for3} respectively.
\end{proof}

\subsection*{Special Case for Existence of \efone{}} In the special case where the agents $(g, 1)$ across all groups $g$ have identical valuations, 
\textcite{allocator2023} show that \efone{} allocations do exists, using a variant of envy-cycle elimination due to \textcite{cardinality}. We noticed their algorithm also works for a more general case. A slightly modified version of their algorithm is presented in \cref{alg:doubly-ef1}, adapted to our notation. 
\begin{algorithm}[]
\caption{Finding an \efone{} Allocation Among Couples in a Special Case}
\label{alg:doubly-ef1}
\textbf{Input}: A group fair division instance consisting of $\mathcal{G}$, $M$, and a set of valuation functions $\{u_{gi}\}_{g\in \mathcal{G}, i\in \{1,2\}}$ such that the agents $(g, 1)$ across all group $g$ have identical segment partition $\mathcal{S}^{g1} = \mathcal{S}$ \\
\textbf{Output}:  An \efone{} allocation
\begin{algorithmic}[1]
\STATE Let $G = (V, E)$ be the envy-graph where each vertex represents an agent $(g, 2)$ and $E \leftarrow \emptyset$.
\STATE Initialize the allocation $B_g =\emptyset$ for every $g\in \mathcal{G}$.
\FOR{$S \in \mathcal{S}$}
    \STATE Apply the envy-cycle elimination algorithm to $G$ to obtain a directed acyclic graph.
    \STATE Let $\{i_1, \ldots, i_n\}$ be the second agents $(g, 2)$ in topological order of graph $G$, ensuring each agent does not envy those before them.
    \FOR{$j = 1, 2, \dots, n$}
        \STATE Allocate agent $i_j$, their favorite good $\alpha$ among the remaining goods in $S$.
        \STATE $S = S\setminus \{\alpha\}$
    \ENDFOR
    \STATE Update the envy-graph $G$.
\ENDFOR
\RETURN $\{B_g\}_{g\in \mathcal{G}}$
\end{algorithmic}
\end{algorithm}

The only difference with the original algorithm (\cite[Algorithm 1]{allocator2023}) is that instead of requiring identical valuations among agents $(g, 1)$, we only assume that they have a common segment partition, which is a slightly less restrictive assumption. The \efone{} guarantee for the first agents $(g,1)$ follows directly from \cref{lem:seg-ef1}, which generalizes \cref{lem:f1ef1}. For the second agents $(g, 2)$, \efone{} can be shown using a nearly standard envy-cycle elimination argument, as in \textcite{allocator2023}. 
\begin{lemma} \label{lem:seg-ef1}
    For an agent $(g, i)$, any allocation $\{B_{g'}\}_{{g'}\in \mathcal{G}}$ where each bundle $B_{g'}$ includes exactly one good from every segment $S^{gi}_j$ in the segment partition of $(g, i)$\,---\,$|B_{g'} \cap S^{gi}_j| = 1$ for all $g'\in \mathcal{G}$ and $j \in [m/n]$\,---\,is \ef{1} for the agent.
\end{lemma}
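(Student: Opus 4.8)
The plan is to mirror the proof of \cref{lem:f1ef1} exactly, replacing pairs (segments of size two) by segments of size $n$. Fix the agent $(g,i)$ and, without loss of generality, order the goods so that $u_{gi}(1) \geq u_{gi}(2) \geq \dots \geq u_{gi}(m)$; then the $j$-th segment is $S^{gi}_j = \{(j-1)n+1, \dots, jn\}$. The single structural fact I would extract is the \emph{between-segment monotonicity}: every good in segment $j$ is worth at least as much to $(g,i)$ as every good in segment $j+1$, i.e.\ $\min_{\alpha \in S^{gi}_j} u_{gi}(\alpha) \geq \max_{\alpha \in S^{gi}_{j+1}} u_{gi}(\alpha)$, which is immediate from the sorted order.

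Next I would set up the comparison between $(g,i)$'s own bundle $B_g$ and an arbitrary other bundle $B_{g'}$. By hypothesis each contains exactly one good per segment, so I write $B_g = \{a_1, \dots, a_{m/n}\}$ and $B_{g'} = \{b_1, \dots, b_{m/n}\}$ with $a_j, b_j \in S^{gi}_j$. Applying the monotonicity fact to $a_j \in S^{gi}_j$ and $b_{j+1}\in S^{gi}_{j+1}$ gives $u_{gi}(a_j) \geq u_{gi}(b_{j+1})$ for every $1 \leq j \leq m/n - 1$.

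I would then telescope: discarding the nonnegative last term $u_{gi}(a_{m/n})$ and summing the shifted inequalities yields $u_{gi}(B_g) = \sum_{j=1}^{m/n} u_{gi}(a_j) \geq \sum_{j=1}^{m/n-1} u_{gi}(a_j) \geq \sum_{j=1}^{m/n-1} u_{gi}(b_{j+1}) = u_{gi}(B_{g'}) - u_{gi}(b_1)$. Hence removing the single good $b_1 \in S^{gi}_1$ from $B_{g'}$ eliminates the agent's envy toward $g'$. Since $g'$ was arbitrary, the allocation is \ef{1} for $(g,i)$.

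There is no real obstacle beyond bookkeeping: the only point needing care is the index shift (comparing $a_j$ with $b_{j+1}$ rather than with $b_j$), which is exactly what makes the top-segment good $b_1$ the one discarded, together with the use of nonnegativity of utilities to drop $a_{m/n}$. As this is the verbatim generalization of the $n=2$ argument in \cref{lem:f1ef1}, I would present it succinctly around the telescoping inequality above rather than belaboring it.
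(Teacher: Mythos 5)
Your proposal is correct and follows essentially the same argument as the paper's proof: both compare the good from segment $j$ in the agent's own bundle against the good from segment $j{+}1$ in the other bundle, sum the resulting inequalities, and conclude that removing the other bundle's top-segment good eliminates envy. The only cosmetic difference is your explicit intermediate step of dropping the nonnegative term $u_{gi}(a_{m/n})$, which the paper folds into a single chained inequality.
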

\begin{proof}
    Fix an arbitrary group $g'$. Let us denote the good in the intersection of $B_g$ and $S^{gi}_j$ with $\alpha_j$, and the good in the intersection of $B_{g'}$ and $S^{gi}_j$ with $\alpha_j'$.
    As the $j$-th segment $S^{gi}_j$ is more valuable for $(g,i)$ than the $(j{+}1)$-th segment $S^{gi}_{j+1}$, we have $u_{gi}(\alpha_j) \geq u_{gi}(\alpha_{j+1}')$ for every $j\in [m/n-1]$ which implies $u_{gi}(B_g) = \sum_{j=1}^{m/n} u_{gi}(\alpha_j) \geq \sum_{j=1}^{m/n-1} u_{gi}(\alpha_{j+1}') =u_{gi}(B_{g'}\setminus \{\alpha_1'\})$. Hence, $(g, i)$ does not envy $g'$ by more than one good.
\end{proof}

\section{Proof of \cref{thm:hardness}: Impossibility of \propone{} (or \efone{}) Among Groups of Three}\label{app:hardness}
\hardness*
\begin{proof}
We first present the counter-example. Consider an instance with $n$ group and the set of goods $\{1, \dots, 2n - 1\}$, where for each agent $(g,i)$ we have
 {
                \footnotesize
                \begin{align*}
                    u_{gi}(\alpha) = 0 \iff \alpha \equiv i \pmod{3}
                \end{align*}
                }
                
Implying each agent disapproves either $\lfloor m/3 \rfloor$ or $\lceil m/3 \rceil$ of the goods, and likes at least $\frac{2m}{3} - 1=\frac{4n-5}{3}$ and at most $\frac{2m}{3} +1=\frac{4n + 1}{3} < 2n$ goods. For $n > 5$ we have $\frac{4n-5}{3} > n$, and for $n = 5$, as $m=9$ is divisible by $3$, each agent disapproves exactly $3$ goods, and likes $6 > 5$ goods. In any case, when $n \geq 5$, every agent has a proportional share of larger than $1$ and smaller than $2$. Hence, each agent must receive at least one good with value $1$ to be \propone{}.
As there are $2n-1$ goods and $n$ groups, one group receives only one good. By construction, this good has zero value for one of the agents in the group, implying a \propone{} allocation cannot be achieved.

To prove NP-completeness, we reduce the \emph{$3$-Dimensional Matching (3DM)} problem to the problem of deciding the existence of \propone{} among groups of three with binary valuations. 
3DM is a well-known NP-complete problem, and asks whether, given sets $X_1$, $X_2$, $X_3$ each of size $k$ and a set of triples $T \subseteq X_1 \times X_2 \times X_3$, there exists a matching of size $k$ in which each element of $X_1 \cup X_2 \cup X_3$ appears in exactly one triple. 
Given a 3DM instance $(X_1 \cup X_2 \cup X_3, T)$, we construct a corresponding group fair division instance as follows. The set of goods is $M = X_1 \cup X_2 \cup X_3 \cup Y$, where $|Y| = 3(k + |T| + 2)$. 
The set of groups is $\mathcal{G} = T \cup U$, where $|U| = 2k + |T| + 3$, and each group contains exactly three agents. 
Hence, we have $m=3(k + |T|+2) + 3k$ goods and $n=2(k+|T|+2)-1$ groups in total. 
For every triple $t = (x_1, x_2, x_3) \in T$ and $i\in [3]$, agent $(t, i)$ approves the good $x_i$ and all the goods in $Y$. 
For every $u \in U$ and $i\in [3]$, agent $(u, i)$ approves exactly two-thirds of the goods in $Y$, such that every good in $Y$ is approved by at most two agents from group $u$. To achieve this, we index the goods in $Y$ from $1$ to $3(k + |T| + 2)$ and let agent $(u, i)$ disapprove good $\alpha \in Y$ iff $i \equiv \alpha \pmod{3}$.

We now prove that the 3DM instance $(X_1 \cup X_2 \cup X_3, T)$ has a perfect matching iff the corresponding fair division instance admits a \propone{} allocation. 
Note that in the constructed fair division instance, each agent $(t, i)$ approves more than $n$ but fewer than $2n$ goods, $n < |Y| + 1 < 2n$, while each agent $(u, i)$ approves exactly $\frac{2}{3}|Y| = 2(k + |T| + 2)=n+1$ goods. 
Therefore, to satisfy \propone{}, each agent must receive at least one good they approve. 
Since no good is approved by all three agents in any group $u \in U$, at least two goods must be assigned to each such group.
As a result, $2|U| = 4k + 2|T| + 6$ of the $3(k + |T| + 2)$ goods in $Y$ must be allocated to groups in $U$, which leaves only $|Y| - 2|U| = |T| - k$ goods in $Y$, along with $k$ goods in each set $X_i$. 
With the remaining $|T| - k$ goods from $Y$, we can satisfy at most $|T| - k$ groups in $T$.
The remaining $k$ groups must be satisfied using the $3k$ goods in $X_1 \cup X_2 \cup X_3$, and as each agent $(t,i)$ only approves one good from $X_1 \cup X_2 \cup X_3$, group $t = (x_1, x_2, x_3)$ must receive $\{x_1, x_2, x_3\}$ to satisfy all three agents. 

Therefore, if a \propone{} allocations exists, the $3k$ goods in $X_1 \cup X_2 \cup X_3$, must be divided between $k$ groups of $T$ such that each group $t=(x_1, x_2, x_3)$ receives exactly $\{x_1, x_2, x_3\}$ implying a perfect matching exist in the 3DM instance $(X_1\cup X_2 \cup X_3, T)$.
Conversely, assume a perfect matching exists. For each of the $k$ matched triples $t = (x_1, x_2, x_3)$, assign the goods $\{x_1, x_2, x_3\}$ to group $t$ in the fair division instance. The remaining $|T| - k$ groups in $T$ are assigned the first $|T| - k$ goods in $Y$.
For each group in $U$, assign two of the remaining goods from $Y$ with consecutive indices $j$ and $j+1$. 
This way every agent in $u \in U$ approves at least one of the allocated goods, thus satisfying \propone{}.

We conclude by observing that the same reduction applies to deciding the existence of an \efone{} allocation. If a perfect matching exists, the \propone{} allocation constructed above is also \efone{}. Each agent $(u, i)$ receives one approved good, while any other group $u'\in U$ receives at most two, so $(u, i)$ does not envy $u'$ by more than one good. They also do not envy groups $t \in T$ since $(u,i)$ only approves goods in $Y$ and $t$ receives at most one good from $Y$. An agent $(t, i)$ receives exactly one good they like and envy a group in $U$ by one good, as $U$ receives two approved goods. And, they do not envy another group $t' \in T$, since $t'$ receives either one good of $X_i$ or one good of $Y$. Conversely, if no perfect matching exists, then as shown before, a \propone{} allocation is not possible, which implies an \efone{} allocation does not exist.
\end{proof}

\section{Details on Experiments}
\label{app:experiments}
We received the data through private communication with Spliddit. We used Spliddit data primarily because no publicly available datasets for fair division currently exist.
Our dataset includes all data available on the website as of June 13, 2025. In each instance, every agent's valuation function sums to 1000 across all goods. The only preprocessing performed was discarding the instances containing divisible goods and removal of goods that had zero value for all agents within an instance.
We ran the experiments using a Dell Optiplex with an Intel\textsuperscript{®} Core™ i7-1185G7 CPU, 16 GB of RAM, running Windows 10, version 22H2. Linear programs were optimized with Gurobi 11.0.3. The experimental code is available on GitHub\footnote{\url{https://github.com/HannaYzade/fair-division-among-couples-and-small-groups}}.

\cref{fig:app:plots}
shows the results of our experiment after restricting the number of goods and the groups.
\begin{figure}[H]
    \centering

    \begin{subfigure}[b]{0.47\textwidth}
        \includegraphics[width=\textwidth]{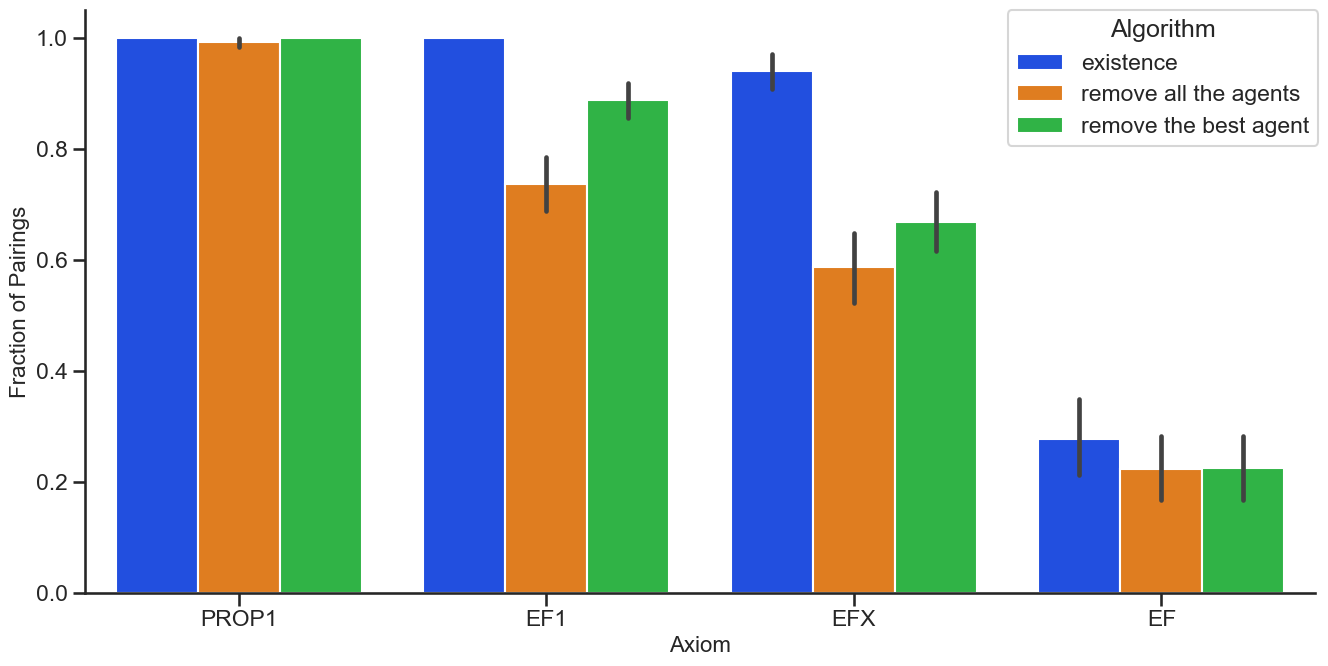}
        \caption{Only instances with $m \leq 5$ goods are considered, resulting in a total of 108 instances.}
        \label{fig:1a}
    \end{subfigure}
    \hfill
    \begin{subfigure}[b]{0.47\textwidth}
        \includegraphics[width=\textwidth]{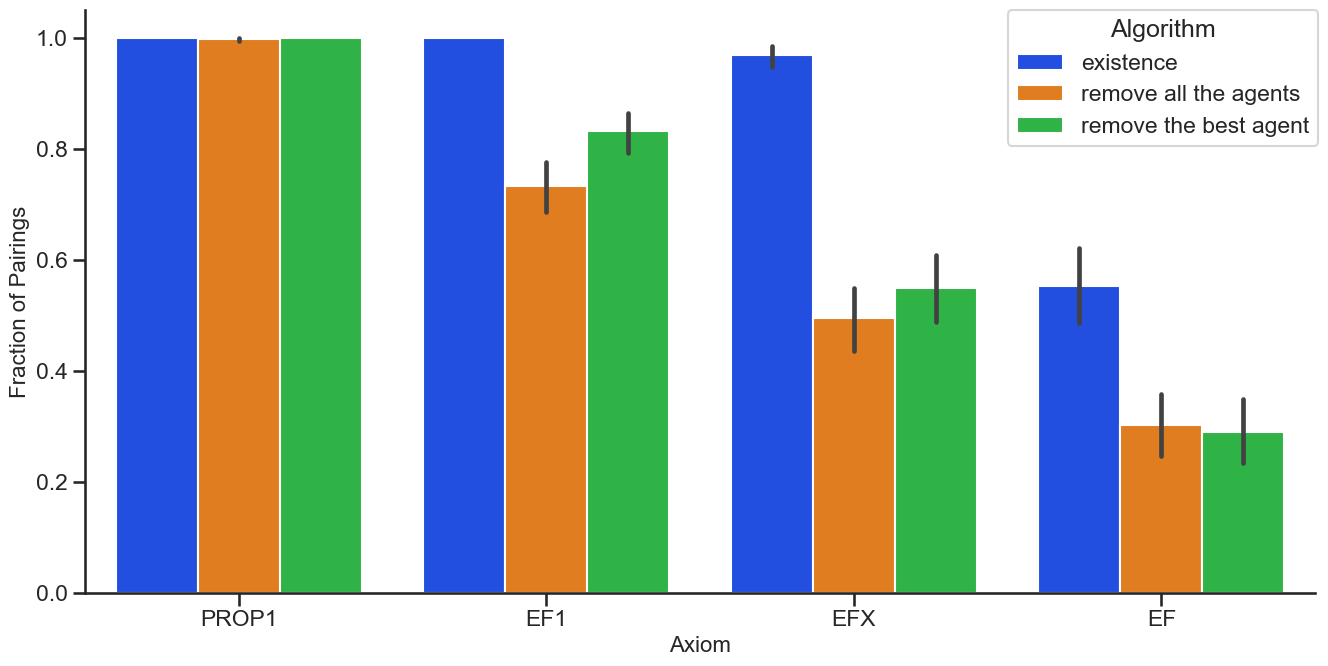}
        \caption{Only instances with $m \geq 6$ goods are considered, resulting in a total of 146 instances.}
        \label{fig:1b}
    \end{subfigure}

    \vspace{0.5cm}

    \begin{subfigure}[b]{0.47\textwidth}
        \includegraphics[width=\textwidth]{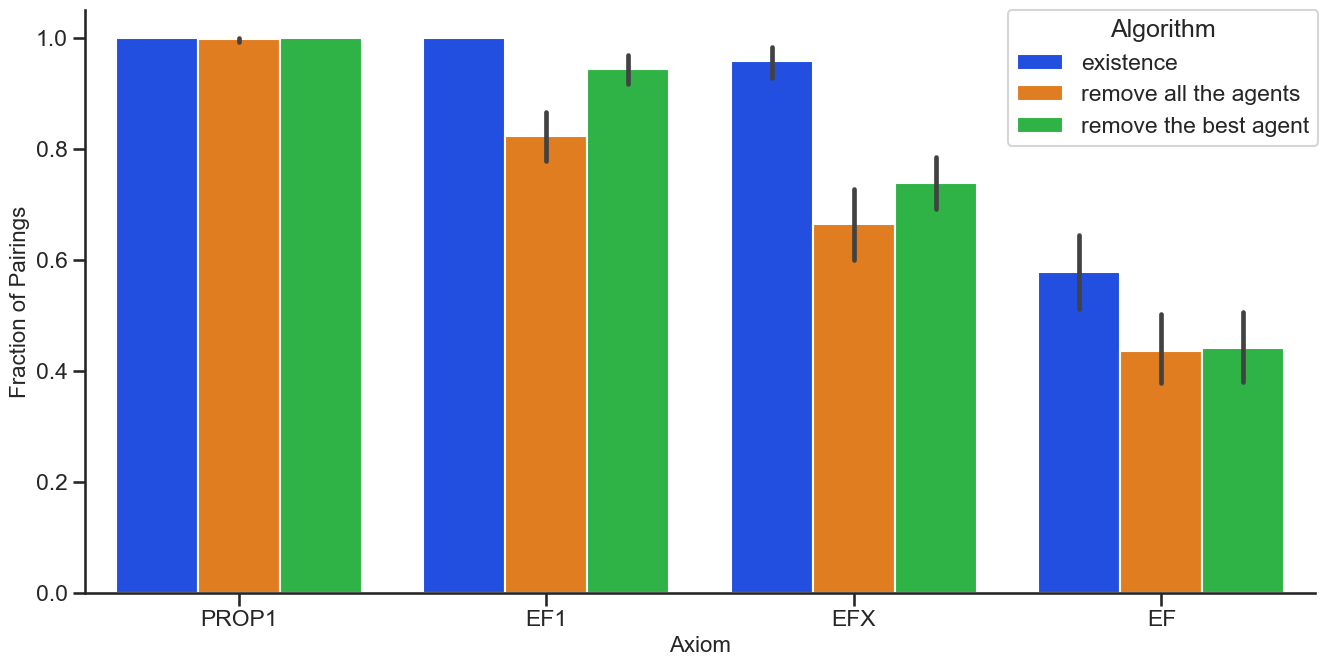}
        \caption{Only instances with $4$ agents are considered, resulting in a total of 120 instances.}
        \label{fig:1c}
    \end{subfigure}
    \hfill
    \begin{subfigure}[b]{0.47\textwidth}
        \includegraphics[width=\textwidth]{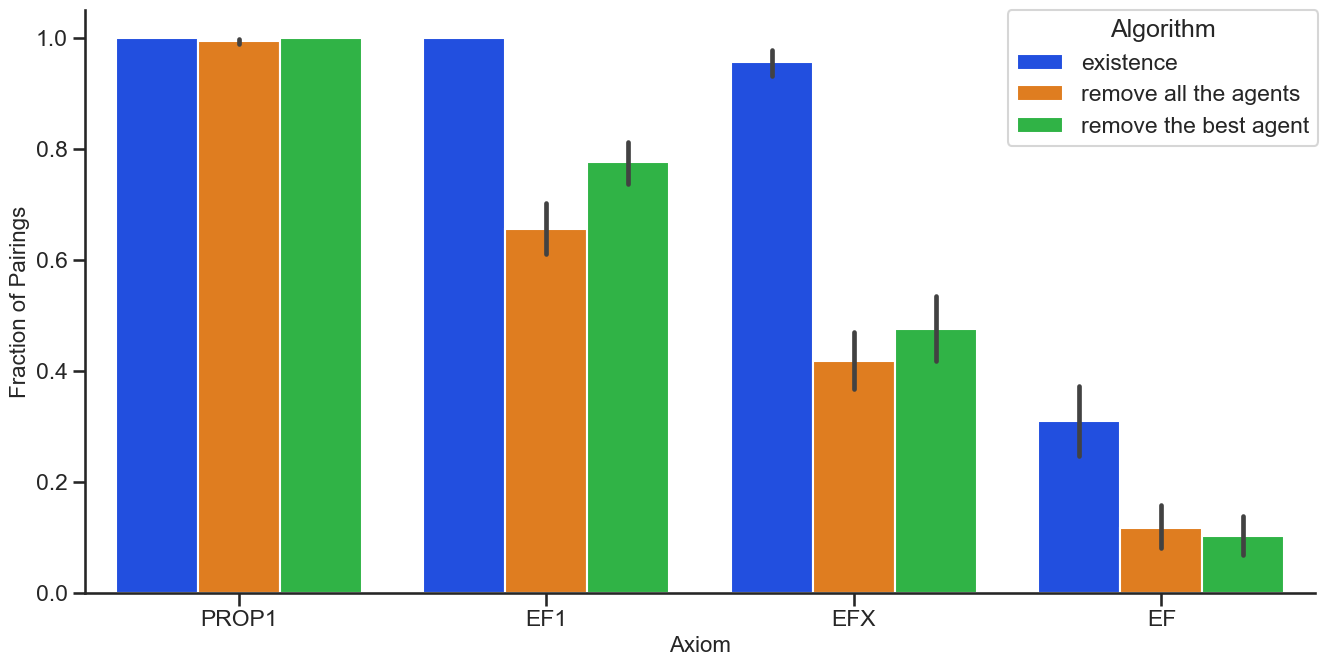}
        \caption{Only instances with $5$ or more agents are considered, resulting in a total of 134 instances.}
        \label{fig:1d}
    \end{subfigure}

    \caption{Fraction of pairings for which fair allocations exist or are found by one of two algorithms, averaged over all considered instances. Axioms imply axioms to their left. Error bars indicate 95\% confidence intervals (bootstrapping).}
    \label{fig:app:plots}
\end{figure}

\end{document}